\documentclass{article}
\usepackage{graphicx} 

\usepackage{amsmath}
\usepackage{amsfonts}
\usepackage{amssymb}
\usepackage{amsthm}
\usepackage{siunitx}
\newif\ifshowrevs      
\showrevsfalse          
\usepackage{xcolor}
\usepackage[normalem]{ulem} 

\usepackage[colorlinks=true,citecolor=teal]{hyperref}
\usepackage[utf8]{inputenc}
\usepackage[section]{placeins}
\usepackage{fullpage}
\usepackage{booktabs}
\usepackage{cleveref}
\usepackage{verbatim}
\usepackage{xcolor}
\usepackage{natbib}
\usepackage{accents}

\usepackage{bm}
\usepackage{bbm}
\usepackage{float}      
\usepackage{subcaption} 
\theoremstyle{remark} \newtheorem*{remark}{Remark}

\usepackage[section]{algorithm}
\usepackage{algpseudocode}

\newtheorem{theorem}{Theorem}[section]
\newtheorem{lemma}[theorem]{Lemma}

\newtheorem{proposition}[theorem]{Proposition}
\newtheorem{corollary}[theorem]{Corollary}

\newtheorem{definition}[theorem]{Definition}

\newtheorem{assumption}[theorem]{Assumption}

\Crefname{problem}{Problem}{Problems}

\title{SPX--VIX Risk Computations Via Perturbed Optimal Transport}

\usepackage{authblk}
\author[1]{Charlie Che$^\ast$\thanks{charlie.che@jpmchase.com}}
\author[2]{Hanxuan Lin$^\ast$\thanks{hanxuan.lin@jpmchase.com}}
\author[1]{Yudong Yang$^\ast$\thanks{yudong.yang@jpmchase.com}}
\author[1]{Guofan Hu$^\ast$\thanks{guofan.hu@jpmchase.com}}
\author[1]{Lei Fang$^\ast$\thanks{lei.x.fang@jpmchase.com}}
\affil[1]{Quantitative Trading \& Research, JPMorganChase, New York, NY 10017, USA}
\affil[2]{Quantitative Research China, JPMorganChase, Beijing, 100033, China}

\begin{document}
\date{}
\maketitle
\begin{abstract}
We propose a model‑independent framework for joint SPX–VIX derivatives risk generation using Perturbed Optimal Transport (POT). The calibrated Gibbs coupling induces an exponential family whose Fisher information governs the response to admissible market shocks. Exploiting this structure, we derive linear‑response formulas that compute sensitivities for VIX payoffs via a single Fisher‑based linear solve, replacing bump‑and‑recalibrate procedures. To capture VIX smile dynamics, we introduce a linearized Skew Stickiness Ratio (SSR) rule to propagate SPX-induced VIX future shifts to the full VIX smile, completing the VIX marginal component $h_V$ of the perturbation vector. SSR determines how the VIX marginal target changes in response to an SPX shock, with second‑order error control. We also identify a conditional coupling invariance that reduces the perturbed transport on $(S_1, V, S_2)$ to an exact two‑dimensional projection on $(S_1, V)$, preserving martingality and variance consistency while lowering computational cost. Numerically, both the Fisher‑based linear‑response and the dimension‑reduced method closely match full recalibration for VIX futures and VIX option cross‑Greeks, yet are orders of magnitude faster. Hedging backtests show reduced hedged P\&L variance versus a stochastic local‑volatility benchmark, especially in volatile regimes. These results establish POT, coupled with linear response and SSR-based VIX smile propagation, as a practical, efficient framework for SPX–VIX risk propagation and hedging.
\end{abstract}

\noindent\textbf{Keywords:}
SPX--VIX; perturbed optimal transport; entropic martingale optimal transport; Fisher information; linear response; dimensional reduction; Skew Stickiness Ratio (SSR); cross-Greeks; convex calibration; information geometry.

\newpage
\section*{Notation}
\vspace{-2pt}

\noindent
\begin{minipage}[t]{0.48\textwidth}
\footnotesize\renewcommand{\arraystretch}{1.13}
\begin{tabular}{@{}p{0.44\linewidth}p{0.52\linewidth}@{}}
\toprule
\textbf{Symbol} & \textbf{Description}\\
\midrule
\multicolumn{2}{@{}l}{\textit{State spaces and dimension}}\\[1pt]
$\mathcal{S}_1=\{s_1^1,\dots,s_1^{N_1}\}$ & SPX grid at $T_1$\\
$\mathcal{S}_2=\{s_2^1,\dots,s_2^{N_2}\}$ & SPX grid at $T_2$\\
$\mathcal{V}=\{v^1,\dots,v^{N_V}\}$ & VIX grid\\
$\mathcal{X}=\mathcal{S}_1\!\times\!\mathcal{V}\!\times\!\mathcal{S}_2$ & Full product state space\\
$\Delta(\mathcal{X})$ & Probability simplex on $\mathcal{X}$\\
$K=N_1\!+\!N_V\!+\!N_2\!+\!2N_1N_V$ & Total parameter/statistic dimension\\[3pt]
\multicolumn{2}{@{}l}{\textit{Time horizons}}\\[1pt]
$T_1$ & First expiry (SPX and VIX)\\
$T_2$ & Second SPX expiry\\
$\tau=T_2-T_1$ & Time interval ($\tau=30$ days)\\[3pt]
\multicolumn{2}{@{}l}{\textit{Probability measures and marginals}}\\[1pt]
$D(\mu\|\bar\mu)=\sum_x\mu\ln(\mu/\bar\mu)$ & KL divergence from $\bar\mu$ (minimized in MOT)\\
$\bar{\mu}(x)>0$ & Prior coupling on $\mathcal{X}$\\
$\mu_1,\mu_V,\mu_2$ & Target marginals for $S_1$, $V$, $S_2$\\
$\nu=(\mu_1,\mu_V,\mu_2)$ & Stacked marginal target vector\\
$\mu^\star$ & Calibrated MOT optimal coupling\\
$\mu^\varepsilon$ & Perturbed coupling (shock size $\varepsilon$)\\
$\gamma^\star:=\mu^\star_{1,V}$ & $(S_1,V)$-marginal of $\mu^\star$; denoted $\mu^\star_{1,V}$ in Secs.~2--3\\
$\gamma^\varepsilon$ & Perturbed reduced coupling on $\mathcal{S}_1\!\times\!\mathcal{V}$ (DR)\\
$\mu_2^{DR}=\sum_{s_1,v}\kappa^\star\gamma^\varepsilon$ & Endogenous $S_2$ marginal under DR\\
$\kappa^\star(s_2\mid s_1,v)$ & Conditional kernel; fixed under Assumption~\ref{ass:conditional_invariance}\\[3pt]
\multicolumn{2}{@{}l}{\textit{Constraint operators and admissible sets}}\\[1pt]
$\mathcal{A}_{\mathrm{marg}}$ & Marginal constraint operator\\
$\mathcal{A}_{\mathrm{fin}}$ & Financial constraint operator\\
$\mathcal{A}=(\mathcal{A}_{\mathrm{marg}};\mathcal{A}_{\mathrm{fin}})$ & Full stacked constraint operator\\
$\mathcal{P}$ & 3D MOT-admissible coupling set\\
$\mathcal{P}^\varepsilon$ & Perturbed 3D admissible set (LR)\\
$\mathcal{Q}^\varepsilon$ & Perturbed 2D admissible set (DR); see~\eqref{eq:Q_eps}\\
$b_k$ & Constraint RHS ($b_k=\nu_k$ or $0$)\\
$h=(h_1,h_V,h_2)$ & Marginal perturbation triple\\
$\dot{b}=(h_1,h_V,h_2,\mathbf{0})$ & Full perturbation vector ($h$ padded with financial zeros)\\
\bottomrule
\end{tabular}
\end{minipage}%
\hfill
\begin{minipage}[t]{0.48\textwidth}
\footnotesize\renewcommand{\arraystretch}{1.13}
\begin{tabular}{@{}p{0.44\linewidth}p{0.52\linewidth}@{}}
\toprule
\textbf{Symbol} & \textbf{Description}\\
\midrule
\multicolumn{2}{@{}l}{\textit{Dual parameters}}\\[1pt]
$\theta=(u_1,u_V,u_2,\{\Delta_M^{ij}\},\{\Delta_C^{ij}\})$ & Full gauge-fixed parameter vector\\
$u_1,\;u_V,\;u_2$ & Marginal dual potentials\\
$a\!=\!e^{u_1}$,\;$b\!=\!e^{u_V}$,\;$c\!=\!e^{u_2}$ & Sinkhorn scaling (Alg.~1; $b$ distinct from RHS $b_k$)\\
$\Delta_M(s_1,v)$ & Delta-hedge Lagrange multiplier\\
$\Delta_C(s_1,v)$ & Vega-hedge Lagrange multiplier\\
$\dot{\theta}^0=d\theta^\varepsilon/d\varepsilon|_{\varepsilon=0}$ & First-order variation of $\theta^\star$\\[3pt]
\multicolumn{2}{@{}l}{\textit{Sufficient statistics, log-partition, Fisher information}}\\[1pt]
$\{T_k\}_{k=1}^K$ & Sufficient statistics (see \eqref{eq:MOT_stats})\\
$T^{\mathrm{marg}},\;T^{\Delta_M},\;T^{\Delta_C}$ & Marginal, delta, vega stat blocks\\
$\Lambda(\theta)=\ln\!\sum_x\bar\mu\,e^{\langle\theta,T\rangle}$ & Log-partition function\\
$H=\nabla^2_\theta\Lambda(\theta^\star)\succeq 0$ & Fisher information matrix; $H\succ 0$ under Assumption~\ref{ass:non_deg}\\
$H^+$ & Moore--Penrose pseudoinverse; $H^+=H^{-1}$ under Assumption~\ref{ass:non_deg}\\
$H_{\mathrm{marg}},H_{MM},H_{CC},H_{MC}$ & Diagonal blocks of $H$ (see \eqref{eq:MOT_fisher_block})\\
$H_{\mathrm{marg},M},H_{\mathrm{marg},C}$ & Off-diagonal coupling blocks of $H$\\[3pt]
\multicolumn{2}{@{}l}{\textit{Payoffs, perturbations, and risk}}\\[1pt]
$L(x)=-\tfrac{2}{\tau}\ln x$ & Log-return (VIX--variance link)\\
$\sigma^2_{\mathrm{fwd}}=\mathbb{E}_{\mu^\star}[L(S_2/S_1)]$ & SPX-implied forward variance\\
$G:\mathcal{X}\to\mathbb{R}$ & Payoff function\\
$\Pi(\varepsilon)=\mathbb{E}_{\mu^\varepsilon}[G]$ & Model price under $\mu^\varepsilon$\\
$(g_G)_k=\mathrm{Cov}_{\mu^\star}(G,T_k)$ & Payoff--statistic covariance; $g_G\in\mathrm{range}(H)$\\
$\Psi=H^+g_G$ & Full influence vector\\
$\psi=(\psi_1,\psi_V,\psi_2)$ & Marginal influence functions\\[3pt]
\multicolumn{2}{@{}l}{\textit{SSR and VIX propagation}}\\[1pt]
$\delta F_V$ & First-order VIX future shift\\
$A_{\mathrm{SSR}}(v)=\partial^2_K\Gamma|_{K=v}$ & SSR shape vector; $h_V=A_\mathrm{SSR}\delta F_V$\\
$\Gamma(K)$ & SSR sensitivity kernel (see \eqref{eq:deltaCV})\\[3pt]
\multicolumn{2}{@{}l}{\textit{Skew Stickiness Ratio (SSR)}}\\[1pt]
$F,\;F_V$ & SPX and VIX forward levels\\
$\sigma(K,F)$,\;$\sigma_V(K,F_V)$ & SPX and VIX implied volatilities\\
$S_{\log}=F\,\partial_K\sigma|_{K=F}$ & SPX ATM log-moneyness skew\\
$S_{\log,V}=F_V\,\partial_K\sigma_V|_{K=F_V}$ & VIX ATM log-moneyness skew\\
$\beta$,\;$\beta_V$ & SPX and VIX Skew Stickiness Ratios\\
\bottomrule
\end{tabular}
\end{minipage}

\newpage
\section{Introduction}

Joint modeling of SPX and VIX derivatives has become a central problem in equity volatility markets. 
While SPX options encode the distribution of future equity prices, VIX options are derivative contracts written on VIX, the square root of forward variance. 
Consistency between these two markets is therefore essential for both pricing and risk management of the SPX/VIX derivative family.

Traditional approaches to the joint SPX–VIX modeling problem rely on parametric stochastic volatility models such as the Heston model, stochastic volatility with jumps, Bergomi model, or rough volatility frameworks \cite{heston1993, gatheral2006, bergomi2016, BayerFrizRoughVol}.
Although these models provide tractable simulation dynamics, they impose structural assumptions on volatility dynamics that are not directly implied by the observed option surfaces.
More fundamentally, no parametric continuous-time model can simultaneously recover both the SPX and VIX marginals; practitioners therefore typically decouple the two surfaces, which means that cross-sensitivities between the SPX volatility surface and VIX derivatives are not captured at all.

An alternative model-free approach was proposed by Guyon in \cite{guyon2020spxvix, Guyon_SSRN_3853237}, who formulated the joint SPX–VIX calibration problem as a martingale optimal transport (MOT) problem. 
In this framework the calibrated coupling between equity levels and forward variance is obtained by solving a discrete entropic optimal transport problem using Sinkhorn iterations \cite{cuturi2013, benamou2015}. 
The resulting Gibbs distribution exactly reproduces the observed SPX and VIX option prices while remaining free of parametric volatility assumptions.

While entropic martingale optimal transport provides an exact joint
calibration of SPX and VIX smiles, calibration alone does not yield a
practical risk framework. In existing implementations, sensitivities
are typically obtained by re-running the full calibration after each
market perturbation. This bump-and-recalibrate approach is both
computationally expensive and obscures the structural relationship
between market shocks and model-implied risk.

The central observation of this paper is that entropic martingale optimal transport naturally defines a statistical manifold whose local geometry determines how the calibrated coupling reacts to marginal perturbations.
Because the optimal coupling belongs to an exponential family, its response to marginal shocks can be characterized by the Fisher information matrix of the calibrated Gibbs distribution.
This perspective leads to a new framework, which we term \emph{Perturbed Optimal Transport (POT)}, for risk generation without recalibration. Within this POT framework, we propose two distinct yet complementary methodologies: one leveraging the local geometry of the calibrated coupling via a Linear Response (LR) system derived from the Fisher information matrix, and another utilizing Dimensional Reduction (DR) to efficiently re-solve a simpler transport problem under specific conditional invariance assumptions.

Beyond providing analytic risk formulas, the POT framework also enables incorporating empirical volatility dynamics into the optimal transport formulation. 
In particular, we use the Skew Stickiness Ratio (SSR) dynamics for the VIX volatility surface to determine the VIX marginal component of the perturbation vector: an SPX perturbation shifts the VIX future via the forward-variance identity, and the linearized SSR rule then propagates this shift to the full VIX smile, specifying how the VIX marginal target $\mu_V$ changes.
The constraint operator and dual structure are unchanged; only the right-hand side of the linear system, the perturbation vector $\dot b$, is updated.
This allows the transport framework to incorporate empirically observed volatility smile dynamics without introducing parametric stochastic volatility models.

Taken together, these results establish Perturbed Optimal Transport (POT) as a unified framework for joint calibration and risk propagation in SPX--VIX markets, offering efficient risk generation through both Linear Response and Dimensional Reduction techniques.
\subsection{Related Literature}

This work relates to three strands of literature.

\textbf{SPX–VIX joint modeling.}

Joint modeling of SPX and VIX derivatives has traditionally relied on stochastic volatility frameworks such as the Heston model and its extensions \cite{heston1993, gatheral2006, bergomi2016}. 
These models impose specific assumptions on volatility dynamics and require calibration of multiple parameters to match the observed option surfaces.

\textbf{Optimal transport in finance.}

Martingale optimal transport has emerged as a model-free approach to derivative pricing and calibration \cite{beiglbock2013, henrylabordere2017}. 
\cite{guyon2020spxvix,Guyon_SSRN_3853237} introduced an entropic optimal transport formulation for the joint calibration of SPX and VIX smiles, which can be solved efficiently using Sinkhorn iterations.

\textbf{Computational optimal transport and entropy regularization.}

Entropy-regularized transport problems have become widely used in machine learning and computational optimal transport due to their favorable numerical properties \cite{cuturi2013, benamou2015, peyre_cuturi_2019}. 
These formulations lead to Gibbs distributions whose structure enables efficient iterative algorithms.

Our contributions extend this literature by showing that entropic MOT calibration naturally induces a perturbation theory that can be used to generate risk sensitivities without recomputing the transport solution.

\subsection{Main Contributions}

The contributions of this paper are as follows.

\begin{enumerate}

\item \textbf{The Linear Response (LR) System For Perturbed Optimal Transport In SPX--VIX Markets.}

We develop a perturbation framework for discrete entropic optimal transport
under both marginal and financial constraints.
Using the implicit function theorem applied to the dual formulation,
we show that the calibrated Gibbs coupling depends smoothly on admissible
market perturbations.
This yields explicit linear-response formulas for sensitivities of arbitrary
payoffs, governed by the Fisher information matrix of the calibrated
exponential family.

\item \textbf{Linearized Skew Stickiness Ratio dynamics for VIX options.}

We introduce a linearization of Skew Stickiness Ratio (SSR) dynamics for
VIX implied volatility surfaces and use it to determine the VIX marginal
component $h_V$ of the perturbation vector: the SSR rule maps the
SPX-induced VIX future shift to the full VIX smile change, updating the
right-hand side $\dot b$ of the linear system while leaving the constraint
operator unchanged.
This formulation provides a model-independent mechanism for propagating
SPX perturbations to the VIX volatility smile while preserving convexity
and tractability of the entropic projection problem.
The SSR linearization is compatible with both the perturbation-based
linear-response risk engine and the dimension-reduced transport framework
developed later in the paper.

\item \textbf{Dimensional Reduction (DR) Within The Perturbed Optimal Transport Framework.}

We identify a conditional coupling invariance structure under which
the perturbed three-dimensional transport problem on
$(S_1,V,S_2)$ reduces to a two-dimensional entropic projection
on $(S_1,V)$.
Under this structure the conditional kernel of $S_2$ given $(S_1,V)$
remains fixed, so martingality and variance-consistency constraints
are automatically preserved.
This reduction dramatically lowers the computational complexity of
risk generation while maintaining financial consistency of the model.

\item \textbf{Empirical validation and hedging backtests.}

We compare the LR and DR risk sensitivities against full recalibration of the
SPX--VIX martingale optimal transport model across VIX futures and VIX option
cross-Greeks, finding close agreement at orders-of-magnitude lower computational
cost.
We further conduct hedging backtests on randomized VIX option portfolios;
the transport-based hedges consistently achieve lower hedged P\&L variance than
a stochastic local volatility (SLV) benchmark, particularly during volatile
market regimes.

\end{enumerate}

These contributions establish Perturbed Optimal Transport (POT) as a unified
framework for SPX--VIX joint calibration, risk generation, and hedging.
Throughout, the calibrated coupling is required to satisfy a martingale
condition $\mathbb{E}[S_2\mid S_1,V]=S_1$ and a variance-consistency condition
$\mathbb{E}[L(S_2/S_1)\mid S_1,V]=V^2$; these structural conditions are defined
formally in Remark~\ref{rem:fin_interpretation} (Section~\ref{sec:framework}).

\section{Mathematical Framework For Entropic Martingale Optimal Transport}
\label{sec:framework}

\subsection{Setup: State Space And Admissible Couplings}
\label{subsec:setup}

Let
\[
\mathcal{S}_1 = \{s_1^1,\dots,s_1^{N_1}\},\quad
\mathcal{S}_2 = \{s_2^1,\dots,s_2^{N_2}\},\quad
\mathcal{V} = \{v^1,\dots,v^{N_V}\}
\]
be finite state spaces for the SPX level at $T_1$, SPX level at $T_2$, and
VIX level.  Let $\mathcal{X}:=\mathcal{S}_1\times\mathcal{V}\times\mathcal{S}_2$,
and let $\bar\mu(x)>0$ be a strictly positive prior on $\mathcal{X}$.
The target marginals $\mu_1\in\Delta(\mathcal{S}_1)$,
$\mu_V\in\Delta(\mathcal{V})$, $\mu_2\in\Delta(\mathcal{S}_2)$
are extracted from observed SPX and VIX option prices.

In the SPX--VIX joint calibration problem, an admissible coupling must
satisfy two kinds of linear constraints:
\begin{itemize}
  \item \textbf{Marginal constraints} ($\mathcal{A}_{\mathrm{marg}}$):
        $\mu$ reproduces the three prescribed marginals.
  \item \textbf{Financial constraints} ($\mathcal{A}_{\mathrm{fin}}$):
        martingality of SPX and consistency between VIX and forward variance
        (Definition~\ref{def:constraint_operators}).
\end{itemize}
Both sets are linear in $\mu$, so the \emph{MOT-admissible set} is
\begin{equation}
\label{eq:MOT_admissible}
\mathcal{P}
=
\bigl\{\,\mu \in \Delta(\mathcal{X})
:
\mathcal{A}_{\mathrm{marg}}\mu = \nu,\;\;
\mathcal{A}_{\mathrm{fin}}\mu = 0
\bigr\},
\end{equation}
where $\nu = (\mu_1,\mu_V,\mu_2)$ stacks the marginal targets.

\begin{definition}[Constraint operators: marginalisation and financial consistency]
\label{def:constraint_operators}
The linear operator
$\mathcal{A}_{\mathrm{marg}}:\mathbb{R}^{\mathcal{X}}\to\mathbb{R}^{N_1+N_V+N_2}$
marginalizes $\mu$ onto each coordinate axis:
\begin{align*}
(\mathcal{A}_{\mathrm{marg}}\mu)_1(s_1^i)
  &= \sum_{v,\,s_2}\mu(s_1^i,v,s_2), \\
(\mathcal{A}_{\mathrm{marg}}\mu)_V(v^j)
  &= \sum_{s_1,\,s_2}\mu(s_1,v^j,s_2), \\
(\mathcal{A}_{\mathrm{marg}}\mu)_2(s_2^k)
  &= \sum_{s_1,\,v}\mu(s_1,v,s_2^k).
\end{align*}
The linear operator
$\mathcal{A}_{\mathrm{fin}}:\mathbb{R}^{\mathcal{X}}\to\mathbb{R}^{2N_1N_V}$
evaluates the conditional moment residuals at each node $(s_1^i,v^j)$:
\begin{align*}
(\mathcal{A}_{\mathrm{fin}}\mu)^M_{ij}
  &= \sum_{s_2}\mu(s_1^i,v^j,s_2)\,(s_2 - s_1^i),
  &&\text{\emph{(martingale block)}}\\
(\mathcal{A}_{\mathrm{fin}}\mu)^C_{ij}
  &= \sum_{s_2}\mu(s_1^i,v^j,s_2)\,\bigl(L(s_2/s_1^i)-(v^j)^2\bigr).
  &&\text{\emph{(variance-consistency block)}}
\end{align*}
Writing $\mathcal{A}:=(\mathcal{A}_{\mathrm{marg}};\mathcal{A}_{\mathrm{fin}})$
and $b:=(\nu;\,0)\in\mathbb{R}^{N_1+N_V+N_2+2N_1N_V}$,
the admissible set~\eqref{eq:MOT_admissible} reads
$\mathcal{P}=\{\mu\in\Delta(\mathcal{X}):\mathcal{A}\mu=b\}$.
\end{definition}

\begin{remark}[Financial interpretation of $\mathcal{A}_{\mathrm{fin}}\mu=0$]
\label{rem:fin_interpretation}
Setting $\mathcal{A}_{\mathrm{fin}}\mu=0$ imposes two families of conditions on the
calibrated coupling, one per conditioning node $(s_1^i,v^j)$:
\begin{enumerate}
\item \textbf{Martingale condition}
      ($(\mathcal{A}_{\mathrm{fin}}\mu)^M_{ij}=0$):
      \[
      \mathbb{E}_\mu[S_2 \mid S_1=s_1^i,\,V=v^j] \;=\; s_1^i,
      \]
      i.e.\ the SPX price is a risk-neutral martingale conditional on
      the SPX level and VIX index at $T_1$.
      We assume zero interest rates, dividends, and repos throughout,
      following \cite{guyon2020spxvix}; with nonzero rates the condition
      becomes $\mathbb{E}[S_2/F_2\mid S_1,V]=S_1/F_1$ where $F_i$ is the
      forward price at $T_i$, and the grid is interpreted as normalized
      by the respective forwards.

\item \textbf{Variance-consistency condition}
      ($(\mathcal{A}_{\mathrm{fin}}\mu)^C_{ij}=0$):
      \[
      \mathbb{E}_\mu\!\bigl[L(S_2/S_1)\mid S_1=s_1^i,\,V=v^j\bigr]
      \;=\; (v^j)^2,
      \]
      i.e.\ the conditional expected log-contract payoff equals the
      VIX squared at $T_1$.  This is the discrete form of Guyon's
      consistency condition \cite{guyon2020spxvix}: by definition the
      VIX is the square root of the forward-realized variance of the
      SPX, so this identity must hold pointwise on the $(S_1,V)$ grid.
\end{enumerate}
Both conditions are linear in $\mu$ with zero right-hand side, confirming
that the financial constraint targets in $b$ are zero.
\end{remark}

\subsection{Entropic MOT: Primal--Dual Problem and Gibbs Form}
\label{subsec:EOT}

We minimize relative entropy over the MOT-admissible set:
\begin{equation}
\label{eq:primal}
\mu^\star
=
\arg\inf_{\mu \in \mathcal{P}}
D(\mu \| \bar{\mu}),
\qquad
D(\mu\|\bar\mu) = \sum_{x\in\mathcal{X}}\mu(x)\ln\frac{\mu(x)}{\bar\mu(x)}.
\end{equation}

\paragraph{Sufficient statistics and log-partition function.}
The constraint operator $\mathcal{A}$ of Definition~\ref{def:constraint_operators}
and prior $\bar\mu$ determine the following sufficient statistics and log-partition
function, which parametrize the exponential family of Gibbs couplings:
\begin{equation}
\label{eq:MOT_stats}
\{T_k\}
  =\underbrace{\{\mathbf{1}_{s_1=s_1^i},\;\mathbf{1}_{v=v^j},\;
                 \mathbf{1}_{s_2=s_2^k}\}}_{\text{marginal indicators}\;T^{\mathrm{marg}}}
  \;\cup\;
  \underbrace{\bigl\{\mathbf{1}_{(s_1,v)=(s_1^i,v^j)}\,(s_2-s_1^i)
              \bigr\}_{i,j}}_{\text{delta payoffs}\;T^{\Delta_M}}
  \;\cup\;
  \underbrace{\bigl\{\mathbf{1}_{(s_1,v)=(s_1^i,v^j)}\,
              \bigl(L(s_2/s_1^i)-(v^j)^2\bigr)
              \bigr\}_{i,j}}_{\text{vega payoffs}\;T^{\Delta_C}},
\end{equation}
with full Lagrange multiplier vector
$\theta = \bigl(u_1,u_V,u_2,\{\Delta_M^{ij}\},\{\Delta_C^{ij}\}\bigr)\in\mathbb{R}^K$
and log-partition function
\begin{equation}
\label{eq:logpartition}
\Lambda(\theta)
= \ln\sum_{x\in\mathcal{X}}\bar\mu(x)\,e^{\langle\theta,\,T(x)\rangle}.
\end{equation}
The two objects $T$ and $\theta$ play distinct roles: $\{T_k\}$ are
functions on $\mathcal{X}$: the row functions of $\mathcal{A}$, hence the
generating set of its adjoint $\mathcal{A}^*$.
The vector $\theta\in\mathbb{R}^K$ is the Lagrange multiplier for the
constraint $\mathcal{A}\mu=b$; it lives in the codomain of $\mathcal{A}$
and is not a function on $\mathcal{X}$.
The Gibbs exponent $(\mathcal{A}^*\theta)(x)=\langle\theta,T(x)\rangle$
is the pairing of these two objects.

\begin{proposition}[Sufficient statistics as the adjoint of the constraint operator]
\label{prop:adjoint}
Equip $\mathbb{R}^{\mathcal{X}}$ with the inner product
$\langle f,g\rangle_{\mathcal{X}}:=\sum_{x\in\mathcal{X}}f(x)g(x)$
and $\mathbb{R}^{N_1+N_V+N_2+2N_1N_V}$ with the standard Euclidean inner product.
Let $\mathcal{A}$ be the operator of Definition~\ref{def:constraint_operators}
and $\{T_k\}$ the sufficient statistics of~\eqref{eq:MOT_stats}.
\begin{enumerate}
\item[(i)] $(\mathcal{A}\mu)_k = \langle T_k,\,\mu\rangle_{\mathcal{X}}$
           for every index $k$.
\item[(ii)] The adjoint $\mathcal{A}^*:\mathbb{R}^{N_1+N_V+N_2+2N_1N_V}\to\mathbb{R}^{\mathcal{X}}$
            satisfies
            \[
            (\mathcal{A}^*\theta)(x)
            = \langle\theta,\,T(x)\rangle
            = \sum_k \theta_k\,T_k(x).
            \]
\item[(iii)] The KKT stationarity condition for~\eqref{eq:primal} gives
             $\ln(\mu^\star/\bar\mu) = \mathcal{A}^*\theta^\star$
             (up to an additive constant absorbed into the marginal-potential
             gauge), recovering the Gibbs form of the calibrated coupling.
\end{enumerate}
\end{proposition}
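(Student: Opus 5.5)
Part (i) is a bookkeeping check, block by block against Definition~\ref{def:constraint_operators}.

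\textbf{Marginal rows.} Take the row indexed by $s_1^i$. Then $\langle \mathbf{1}_{s_1=s_1^i},\mu\rangle_{\mathcal{X}}=\sum_{v,s_2}\mu(s_1^i,v,s_2)$, which is exactly $(\mathcal{A}_{\mathrm{marg}}\mu)_1(s_1^i)$. The $V$ and $S_2$ rows work the same way.

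\textbf{Martingale rows.} Multiply the indicator $\mathbf{1}_{(s_1,v)=(s_1^i,v^j)}$ by $(s_2-s_1^i)$ and sum over $\mathcal{X}$. The indicator collapses the sum to $\sum_{s_2}\mu(s_1^i,v^j,s_2)(s_2-s_1^i)=(\mathcal{A}_{\mathrm{fin}}\mu)^M_{ij}$.

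\textbf{Variance-consistency rows.} These are identical in structure, with $(s_2-s_1^i)$ replaced by $L(s_2/s_1^i)-(v^j)^2$.

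I will fix one enumeration $k\leftrightarrow$ (block, node) that is shared by the rows of $\mathcal{A}$ and the components of $T$, so that (i) holds index by index.

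Part (ii) follows from (i) by linearity and the definition of the adjoint. For $\theta\in\mathbb{R}^K$ and $\mu\in\mathbb{R}^{\mathcal{X}}$,
\[
\langle\theta,\mathcal{A}\mu\rangle=\sum_k\theta_k\langle T_k,\mu\rangle_{\mathcal{X}}=\Bigl\langle \textstyle\sum_k\theta_kT_k,\mu\Bigr\rangle_{\mathcal{X}}.
\]
Since the adjoint is unique in finite dimensions, this gives $\mathcal{A}^*\theta=\sum_k\theta_kT_k$, and evaluating at $x$ gives $\langle\theta,T(x)\rangle$.

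Part (iii) is where some care is needed. Form the Lagrangian
\[
\mathcal{L}(\mu,\theta)=D(\mu\|\bar\mu)-\langle\theta,\mathcal{A}\mu-b\rangle .
\]
The simplex constraint needs no separate multiplier, because $\sum_x\mu(x)=1$ is already implied by any one marginal constraint.

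\textbf{Step 1: the minimizer is interior.} I first argue that $\mu^\star$ has full support, so that the gradient condition holds with equality and no inequality multipliers are needed. The entropy has infinite slope at $0$: $\partial_\mu[\mu\ln(\mu/\bar\mu)]=\ln(\mu/\bar\mu)+1\to-\infty$ as $\mu\to0$.
\begin{itemize}
\item This requires that $\mathcal{P}$ contain a strictly positive point, a Slater-type condition. I would state it as an assumption, implicit in the paper's calibration setting.
\item Given such a point $\mu_0$, suppose $\mu^\star$ vanishes somewhere. Then the directional derivative of $D$ along $\mu_0-\mu^\star$ at $\mu^\star$ is $-\infty$. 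This contradicts optimality, because $D$ is strictly convex and finite on $\mathcal{P}$.
\end{itemize}

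\textbf{Step 2: stationarity.} Since $D$ is convex and the constraints are affine, KKT holds with some multiplier vector $\theta^\star$:
\[
\ln\frac{\mu^\star(x)}{\bar\mu(x)}+1=(\mathcal{A}^*\theta^\star)(x).
\]

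\textbf{Step 3: absorb the constant.} The constant $-1$ equals $-\tfrac13\sum_k\bigl(\text{all marginal indicators}\bigr)(x)$, since $\sum_i\mathbf{1}_{s_1=s_1^i}\equiv1$ and likewise for the $V$ and $S_2$ blocks. It can therefore be absorbed by shifting, say, $u_1$ by $-1$ uniformly. This is the marginal-potential gauge freedom, and it yields $\ln(\mu^\star/\bar\mu)=\mathcal{A}^*\theta^\star$.

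\textbf{Step 4: Gibbs form.} Normalization then forces $\Lambda(\theta^\star)=0$, so $\mu^\star=\bar\mu\,e^{\langle\theta^\star,T\rangle-\Lambda(\theta^\star)}$, which is the Gibbs form.

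The main obstacle I expect is Step 1, the positivity of $\mu^\star$ and the existence of multipliers. It relies on $\mathcal{P}$ containing a strictly positive coupling. If the paper wants only a relative-interior version, I would instead restrict to the support of $\mu^\star$ and apply the same argument there.
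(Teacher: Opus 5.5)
Your proposal is correct and follows the paper's proof: (i) and (ii) are the same row-by-row check and adjoint identification, and (iii) is the same Lagrangian stationarity $\ln(\mu^\star/\bar\mu)+1=\mathcal{A}^*\theta^\star$ with the constant absorbed via $\sum_i\mathbf{1}_{s_1=s_1^i}\equiv 1$ (the paper also shifts $u_1$ by $-1$). Your Step~1 supplies the interiority and multiplier-existence justification that the paper's proof of (iii) takes for granted; the paper establishes it only afterwards, in the strict-positivity part of Theorem~\ref{thm:existence} under (A3), using the same $t\ln t$ argument (there, as in yours, the contradiction comes from the $-\infty$ one-sided derivative along a feasible segment, not from strict convexity).
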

\begin{proof}
\textit{(i)} By Definition~\ref{def:constraint_operators}, each entry of
$\mathcal{A}\mu$ is a sum over $x\in\mathcal{X}$ of $\mu(x)$ weighted by
exactly one element of $\{T_k\}$.  For instance, for the marginal index
$k=(1,i)$:
\[
(\mathcal{A}_{\mathrm{marg}}\mu)_1(s_1^i)
= \sum_{v,s_2}\mu(s_1^i,v,s_2)
= \sum_{x\in\mathcal{X}}\mathbf{1}_{s_1(x)=s_1^i}\,\mu(x)
= \langle T_{1,i},\,\mu\rangle_{\mathcal{X}}.
\]
The same holds for every other index by inspection, so
$(\mathcal{A}\mu)_k = \langle T_k,\mu\rangle_{\mathcal{X}}$ for all $k$.

\medskip\noindent
\textit{(ii)} By definition of the adjoint,
$\langle\theta,\mathcal{A}\mu\rangle = \langle\mathcal{A}^*\theta,\mu\rangle_{\mathcal{X}}$
for all $\mu$ and $\theta$.  Using (i):
\[
\langle\theta,\mathcal{A}\mu\rangle
= \sum_k \theta_k\,(\mathcal{A}\mu)_k
= \sum_k \theta_k\,\langle T_k,\mu\rangle_{\mathcal{X}}
= \Bigl\langle\sum_k \theta_k T_k,\;\mu\Bigr\rangle_{\mathcal{X}}.
\]
Since this holds for all $\mu$, we identify
$(\mathcal{A}^*\theta)(x) = \sum_k\theta_k T_k(x) = \langle\theta,T(x)\rangle$.

\medskip\noindent
\textit{(iii)} Stationarity of
$\mathcal{L}(\mu,\theta)=D(\mu\|\bar\mu)-\langle\theta,\mathcal{A}\mu-b\rangle$
at $\mu^\star$ gives, for each $x\in\mathcal{X}$:
\[
\frac{\partial\mathcal{L}}{\partial\mu(x)}\bigg|_{\mu^\star}
= \ln\frac{\mu^\star(x)}{\bar\mu(x)} + 1 - (\mathcal{A}^*\theta^\star)(x) = 0,
\]
hence $\ln(\mu^\star(x)/\bar\mu(x)) = \langle\theta^\star,T(x)\rangle - 1$.
The $+1$ arising from differentiating $D(\mu\|\bar\mu)$ produces the
constant $-1$ on the right-hand side above.
Since $\sum_k T_{1,k}(x)\equiv 1$ on $\mathcal{X}$ (exactly one
$S_1$-indicator fires at every $x$), the constant function $-1$ lies in
$\mathrm{range}(\mathcal{A}^*)$: it can be written as
$\langle c,T(x)\rangle$ by setting $c_{1,k}=-1$ for all $k$ and
$c_j=0$ otherwise.
It is therefore absorbed into the marginal-potential component of
$\theta^\star$, giving
$\mu^\star(x)\propto\bar\mu(x)\,e^{\langle\theta^\star,T(x)\rangle}$.
\end{proof}

\begin{theorem}[Existence, Uniqueness, and Gibbs Form]
\label{thm:existence}
Assume:
\begin{enumerate}
  \item[(A1)] $\bar\mu(x)>0$ for all $x\in\mathcal{X}$;
  \item[(A2)] $\mathcal{P}\neq\emptyset$;
  \item[(A3)] there exists $\mu_0\in\mathcal{P}$ with $\mu_0(x)>0$ for all
              $x\in\mathcal{X}$ \emph{(strictly positive feasible coupling)}.
\end{enumerate}
Then problem~\eqref{eq:primal} has a unique minimizer $\mu^\star$ with
$\mu^\star(x)>0$ for all $x\in\mathcal{X}$, and it
takes the \emph{extended Gibbs form}
\begin{equation}
\label{eq:MOT_gibbs}
\mu^\star(s_1,v,s_2)\;\propto\;\bar\mu(s_1,v,s_2)\,\exp\!\Bigl(
  \underbrace{u_1(s_1)+u_V(v)+u_2(s_2)}_{\text{marginal potentials}}
  +\underbrace{\Delta_M(s_1,v)\,(s_2-s_1)}_{\text{delta hedge}}
  +\underbrace{\Delta_C(s_1,v)\,\bigl(L(s_2/s_1)-v^2\bigr)}_{\text{vega hedge}}
\Bigr),
\end{equation}
where $(u_1,u_V,u_2)$ are dual potentials for the marginal constraints
and $\{\Delta_M(s_1,v),\,\Delta_C(s_1,v)\}$ are Lagrange multipliers for
the financial constraints.
\end{theorem}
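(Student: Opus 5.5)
The proof splits into three steps: existence and uniqueness by compactness and strict convexity, strict positivity by a directional-derivative argument using (A3), and the Gibbs form from KKT stationarity via Proposition~\ref{prop:adjoint}(iii).

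\emph{Existence and uniqueness.} The set $\mathcal{P}$ is the intersection of the simplex $\Delta(\mathcal{X})$ with the affine subspace $\{\mathcal{A}\mu=b\}$. It is therefore compact and convex, and it is nonempty by (A2). The objective $\mu\mapsto D(\mu\|\bar\mu)$, with the convention $0\ln 0=0$, is finite and continuous on $\Delta(\mathcal{X})$ because $\bar\mu>0$ by (A1). It is also strictly convex, since $t\mapsto t\ln(t/c)$ is strictly convex on $[0,\infty)$. A minimizer therefore exists by Weierstrass, and strict convexity on the convex set $\mathcal{P}$ makes it unique.

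\emph{Strict positivity.} This is the step I expect to be the main obstacle, because it is what licenses the smooth Lagrangian argument. I will use the standard directional-derivative argument with the strictly positive feasible point $\mu_0$ from (A3). Set $\mu_t=(1-t)\mu^\star+t\mu_0\in\mathcal{P}$ and let $Z=\{x:\mu^\star(x)=0\}$, assumed nonempty. The right derivative of $t\mapsto D(\mu_t\|\bar\mu)$ at $t=0$ equals
\[
\sum_{x\notin Z}(\mu_0-\mu^\star)(x)\bigl(\ln(\mu^\star/\bar\mu)+1\bigr)+\lim_{t\downarrow 0}\sum_{x\in Z}\mu_0(x)\bigl(\ln(t\mu_0(x)/\bar\mu(x))+1\bigr).
\]
The first sum is finite. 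The second term tends to $-\infty$, because $Z\neq\emptyset$ and $\mu_0>0$ on $Z$. Hence $D(\mu_t\|\bar\mu)<D(\mu^\star\|\bar\mu)$ for small $t>0$, contradicting optimality. So $Z=\emptyset$ and $\mu^\star>0$ everywhere.

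\emph{Gibbs form.} Since $\mu^\star$ lies in the relative interior of the simplex, the only active constraints are the linear equalities $\mathcal{A}\mu=b$. The constraint $\sum_x\mu(x)=1$ is already implied by the $S_1$-marginal rows, because $\sum_i T_{1,i}\equiv1$. The objective is differentiable at $\mu^\star$ and the constraints are affine, so the KKT conditions hold without any further qualification; concretely, the gradient $\nabla D(\mu^\star)$ must be orthogonal to $\ker\mathcal{A}$. Indeed, for every $\eta\in\ker\mathcal{A}$ the point $\mu^\star+t\eta$ is feasible for small $|t|$, so the first-order condition gives $\langle\ln(\mu^\star/\bar\mu)+1,\eta\rangle_{\mathcal{X}}=0$. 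Therefore
\[
\ln(\mu^\star/\bar\mu)+1\in(\ker\mathcal{A})^\perp=\mathrm{range}(\mathcal{A}^*).
\]
By Proposition~\ref{prop:adjoint}(ii)–(iii) this means $\ln(\mu^\star/\bar\mu)=\langle\theta^\star,T(x)\rangle-1$ for some $\theta^\star$. The constant is absorbed into $u_1$, as in the proof of Proposition~\ref{prop:adjoint}(iii). Expanding $\langle\theta^\star,T(x)\rangle$ blockwise using \eqref{eq:MOT_stats} gives the three terms of \eqref{eq:MOT_gibbs}:
\begin{itemize}
\item the marginal indicators give $u_1(s_1)+u_V(v)+u_2(s_2)$;
\item the delta block gives $\sum_{ij}\Delta_M^{ij}\mathbf 1_{(s_1,v)=(s_1^i,v^j)}(s_2-s_1^i)=\Delta_M(s_1,v)(s_2-s_1)$;
\item the vega block similarly gives $\Delta_C(s_1,v)(L(s_2/s_1)-v^2)$.
\end{itemize}

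Finally, the multipliers $\theta^\star$ need not be unique, because of gauge freedom: for example, a constant can be shifted between $u_1$ and $u_V$. This does not affect the statement, which asserts uniqueness only for $\mu^\star$.
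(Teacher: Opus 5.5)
Your proof is correct. The existence/uniqueness and strict-positivity steps match the paper's. The paper writes $D(\mu_t\|\bar\mu)-D(\mu^\star\|\bar\mu)=[F(t)-F(0)]+G(t)$ with $G(t)=ct\ln t+O(t)$, which is your divergent right derivative in integrated form.

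Where you genuinely differ is the Gibbs form.
\begin{itemize}
\item The paper goes through Lagrangian duality. It uses (A3) as Slater's condition and cites a finite-dimensional duality theorem for strong duality and dual attainment. It evaluates the inner infimum by the Donsker--Varadhan identity to get $g(\theta)=\langle\theta,b\rangle-\Lambda(\theta)$. It then argues that uniqueness of the inner minimizer forces $\mu^\star=\mu_{\theta^\star}$.
\item You instead use the already-proved positivity to perturb $\mu^\star$ along every $\eta\in\ker\mathcal{A}$. You correctly note that such $\eta$ automatically preserve total mass through the $S_1$ rows. You then conclude by the linear-algebra identity $(\ker\mathcal{A})^\perp=\mathrm{range}(\mathcal{A}^*)$.
\end{itemize}

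Your route is more elementary and self-contained. It needs no constraint qualification, minimax theorem or external citation. It also shows that positivity is exactly what makes the inequality constraints $\mu\ge 0$ inactive, so only the affine equalities matter. In the paper, the positivity step and the Slater step partly duplicate each other.

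What the paper's route buys is the explicit dual function and strong duality. Corollary~\ref{cor:dual_optimality} and the subsequent Fisher/linear-response machinery are built directly on these. You can recover them in one line from your conclusion. Normalizing, $\mu^\star=\bar\mu\,e^{\langle\theta^\star,T\rangle-\Lambda(\theta^\star)}$, which gives
\[
\nabla\Lambda(\theta^\star)=\mathbb{E}_{\mu^\star}[T]=b,
\qquad
D(\mu^\star\|\bar\mu)=\langle\theta^\star,b\rangle-\Lambda(\theta^\star)=g(\theta^\star).
\]
With weak duality, this shows that $\theta^\star$ maximizes the concave $g$ and that there is no duality gap.
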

\begin{proof}
We follow the argument of \citeauthor{guyon2020spxvix}~\cite{guyon2020spxvix},
Section~5, making each step explicit.

\paragraph{Existence and uniqueness.}
Since $\mathcal{X}$ is finite, $\Delta(\mathcal{X})$ is compact.
With the convention $0\ln 0:=0$, the map $t\mapsto t\ln t$ is continuous on
$[0,\infty)$, and $\bar\mu(x)>0$ for every $x$ ensures that
$\mu\mapsto-\sum_x\mu(x)\ln\bar\mu(x)$ is continuous on $\Delta(\mathcal{X})$.
Hence $D(\cdot\|\bar\mu)$ is continuous on $\Delta(\mathcal{X})$.
Strict convexity of $t\mapsto t\ln t$ on $[0,\infty)$ makes
$\mu\mapsto\sum_x\mu(x)\ln\mu(x)$ strictly convex; adding the linear term
$-\sum_x\mu(x)\ln\bar\mu(x)$ preserves strict convexity.

The feasible set $\mathcal{P}=\{\mu\in\Delta(\mathcal{X}):\mathcal{A}\mu=b\}$
is the intersection of the simplex with an affine subspace, hence closed, convex,
and compact; by hypothesis it is non-empty.
Weierstrass's theorem gives a minimizer $\mu^\star$.
If $\tilde\mu_1,\tilde\mu_2\in\mathcal{P}$ were both minimizers with $\tilde\mu_1\neq\tilde\mu_2$,
then $\tfrac{1}{2}(\tilde\mu_1+\tilde\mu_2)\in\mathcal{P}$ by convexity and
\[
D\!\left(\tfrac{1}{2}(\tilde\mu_1+\tilde\mu_2)\,\|\,\bar\mu\right)
< \tfrac{1}{2}D(\tilde\mu_1\|\bar\mu)+\tfrac{1}{2}D(\tilde\mu_2\|\bar\mu)
= \min_{\mathcal{P}} D(\cdot\|\bar\mu),
\]
by strict convexity, contradicting minimality. Hence $\mu^\star$ is unique.

\paragraph{Strict positivity of $\mu^\star$.}
We show $\mu^\star(x)>0$ for all $x$.
Suppose for contradiction that $S^c:=\{x\in\mathcal{X}:\mu^\star(x)=0\}$ is
nonempty.
By (A3) there exists $\mu_0\in\mathcal{P}$ with $\mu_0>0$ everywhere.
For $t\in(0,1]$, set $\mu_t:=(1-t)\mu^\star+t\mu_0\in\mathcal{P}$ (by
convexity of $\mathcal{P}$).
Decompose $D(\mu_t\|\bar\mu) - D(\mu^\star\|\bar\mu)$ into contributions
from $S:=\mathcal{X}\setminus S^c$ (where $\mu^\star>0$, so $F(t):=\sum_{x\in S}\mu_t(x)\ln(\mu_t(x)/\bar\mu(x))$ is $C^1$ near $t=0$) and from $S^c$:
\[
D(\mu_t\|\bar\mu) - D(\mu^\star\|\bar\mu)
\;=\; [F(t)-F(0)] + G(t),
\]
where $G(t):=\sum_{x\in S^c} t\mu_0(x)\ln(t\mu_0(x)/\bar\mu(x))$.
Expanding $G$ using $\bar\mu(x)>0$ on $S^c$,
\[
G(t) \;=\; ct\ln t + O(t), \qquad c:=\sum_{x\in S^c}\mu_0(x)>0.
\]
Since $F(t)-F(0)=O(t)$, we get $D(\mu_t\|\bar\mu)-D(\mu^\star\|\bar\mu)=ct\ln t+O(t)<0$
for all sufficiently small $t>0$, contradicting minimality of $\mu^\star$.
Hence $S^c=\emptyset$ and $\mu^\star>0$ everywhere.

\paragraph{Lagrangian and strong duality.}
Attach a multiplier $\theta$ to the equality constraint $\mathcal{A}\mu=b$ and form
\[
\mathcal{L}(\mu,\theta) \;=\; D(\mu\|\bar\mu) - \langle\theta,\mathcal{A}\mu-b\rangle.
\]
By Proposition~\ref{prop:adjoint}(i)--(ii),
$\langle\theta,\mathcal{A}\mu\rangle=\langle\mathcal{A}^*\theta,\mu\rangle
=\mathbb{E}_\mu[\langle\theta,T\rangle]$, so
\[
\mathcal{L}(\mu,\theta)
\;=\; D(\mu\|\bar\mu) - \mathbb{E}_\mu[\langle\theta,T\rangle] + \langle\theta,b\rangle.
\]
By (A3), $\mu_0\in\mathcal{P}$ satisfies $\mu_0>0$, so
$\mathcal{P}\cap\mathrm{ri}(\Delta(\mathcal{X}))\neq\emptyset$,
where $\mathrm{ri}(\Delta(\mathcal{X}))=\{\mu:\mu(x)>0\ \forall x\}$
denotes the relative interior of the simplex.
This is precisely Slater's condition for the equality-constrained program
with $\Delta(\mathcal{X})$ as the ambient domain: it guarantees strong
duality and that the dual supremum is attained at finite $\theta^\star$
(finite-dimensional convex duality; see \cite{rockafellar1970}, Thm.~28.2).
The minimax theorem (applied to $\mathcal{L}$, which is convex in $\mu$ and
linear in $\theta$) then yields no duality gap:
\begin{equation}
\label{eq:strong_duality}
\min_{\mu\in\mathcal{P}} D(\mu\|\bar\mu)
\;=\; \sup_{\theta}\; g(\theta),
\qquad
g(\theta) \;:=\; \inf_{\mu\in\Delta(\mathcal{X})}\mathcal{L}(\mu,\theta).
\end{equation}

\paragraph{Donsker--Varadhan evaluation of the inner infimum.}
Fix $\theta$ and set $X(x):=\langle\theta,T(x)\rangle$.
Define the tilted measure
\[
\mu_\theta(x)
\;:=\; \frac{\bar\mu(x)\,e^{X(x)}}{\mathbb{E}_{\bar\mu}[e^X]}
\;=\; \bar\mu(x)\,e^{X(x)-\Lambda(\theta)},
\]
using \eqref{eq:logpartition} in the second equality.
For any $\mu\in\Delta(\mathcal{X})$, the identity
$\ln(\mu(x)/\bar\mu(x)) = \ln(\mu(x)/\mu_\theta(x)) + X(x) - \Lambda(\theta)$
gives, after summing against $\mu$,
\[
D(\mu\|\bar\mu) - \mathbb{E}_\mu[X] \;=\; D(\mu\|\mu_\theta) - \Lambda(\theta).
\]
Since $D(\mu\|\mu_\theta)\geq 0$ with equality iff $\mu=\mu_\theta$, the
infimum over $\mu\in\Delta(\mathcal{X})$ equals $-\Lambda(\theta)$ and is
attained uniquely at $\mu_\theta$.  Substituting into \eqref{eq:strong_duality},
\[
g(\theta) \;=\; \langle\theta,b\rangle - \Lambda(\theta),
\]
which is concave in $\theta$ (as $\Lambda$ is convex, being a log-sum-exp).

\paragraph{Gibbs form.}
Since the problem is finite-dimensional, the primal optimum $\mu^\star$ is
attained (proved above) and strong duality holds~\eqref{eq:strong_duality}.
By the KKT theorem for convex programs with linear equality constraints,
there exists a multiplier $\theta^\star\in\mathbb{R}^K$ satisfying the
stationarity condition $\nabla_\mu\mathcal{L}(\mu^\star,\theta^\star)=0$;
this $\theta^\star$ attains the dual supremum.
At any such $\theta^\star$, the inner infimum defining $g(\theta^\star)$ is
uniquely attained at $\mu_{\theta^\star}$ (as shown above); strong duality
then forces $\mu^\star=\mu_{\theta^\star}$, giving
\[
\mu^\star(x) \;\propto\; \bar\mu(x)\,\exp\!\bigl(\langle\theta^\star,T(x)\rangle\bigr).
\]
Expanding $\langle\theta^\star,T(x)\rangle$ block by block via \eqref{eq:MOT_stats}
and Definition~\ref{def:constraint_operators}, with $x=(s_1,v,s_2)$:
\[
\langle\theta^\star,T(x)\rangle
\;=\; u_1(s_1)+u_V(v)+u_2(s_2)
     +\Delta_M(s_1,v)(s_2-s_1)
     +\Delta_C(s_1,v)\bigl(L(s_2/s_1)-v^2\bigr),
\]
where $u_1,u_V,u_2$ are the Lagrange multipliers for the marginal
constraints $\mathcal{A}_{\mathrm{marg}}\mu=\nu$, and $\Delta_M,\Delta_C$
are the Lagrange multipliers for the financial constraints
$\mathcal{A}_{\mathrm{fin}}\mu=0$.
This is exactly \eqref{eq:MOT_gibbs}.
\end{proof}

\paragraph{Financial interpretation of the dual multipliers.}
\begin{itemize}
  \item $\Delta_M(s_1,v)$ is the \emph{delta hedge multiplier}: it prices the
    conditional forward payoff $s_2-s_1$ at each node $(s_1,v)$, enforcing
    $\mathbb{E}[S_2\mid S_1,V]=S_1$.
  \item $\Delta_C(s_1,v)$ is the \emph{vega hedge multiplier}: it prices the
    conditional log-contract payoff $L(s_2/s_1)-v^2$ at each node $(s_1,v)$,
    enforcing $\mathbb{E}[L(S_2/S_1)\mid S_1,V]=V^2$.
\end{itemize}
Their values are determined by the Newton/LM inner loop of Algorithm~1.

\begin{corollary}[Dual Objective and Optimality Conditions]
\label{cor:dual_optimality}
The optimal value of \eqref{eq:primal} equals
\[
\sup_\theta
\Bigl\{
\langle\nu,\,\theta_{\mathrm{marg}}\rangle
-
\Lambda(\theta)
\Bigr\},
\]
where $\theta_{\mathrm{marg}} = (u_1,u_V,u_2)$ and the financial components
$(\Delta_M,\Delta_C)$ enter only through $\Lambda(\theta)$.
The supremum is attained at $\theta^\star$ whose first-order conditions are
\begin{equation}
\label{eq:FOC_all}
\frac{\partial\Lambda}{\partial\theta_k}(\theta^\star)
= b_k
\quad\forall\,k,
\end{equation}
where $b_k = \nu_k$ for marginal components and $b_k = 0$ for financial
components.  Equation~\eqref{eq:FOC_all} is precisely $\mathbb{E}_{\mu^\star}[T_k] = b_k$,
i.e., the calibrated coupling matches all constraints.
\end{corollary}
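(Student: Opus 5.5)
The plan is to read the corollary directly off the strong-duality computation in the proof of Theorem~\ref{thm:existence}. That proof established
\[
\min_{\mu\in\mathcal{P}} D(\mu\|\bar\mu)=\sup_\theta g(\theta),
\qquad
g(\theta)=\langle\theta,b\rangle-\Lambda(\theta).
\]
The first step is to expand $\langle\theta,b\rangle$ block by block using $b=(\nu;0)$ from Definition~\ref{def:constraint_operators}. The financial block of $b$ is zero, so $\langle\theta,b\rangle=\langle\nu,\theta_{\mathrm{marg}}\rangle$. The multipliers $(\Delta_M,\Delta_C)$ then appear only inside $\Lambda$, which yields the displayed dual objective.

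The second step is attainment. The proof of Theorem~\ref{thm:existence} shows, via (A3), Slater's condition and the KKT theorem, that a finite $\theta^\star$ attains the supremum. Because $g$ is concave and differentiable on all of $\mathbb{R}^K$ ($\Lambda$ is a finite log-sum-exp), the unconstrained maximum at $\theta^\star$ satisfies $\nabla g(\theta^\star)=0$. This is exactly $\partial_k\Lambda(\theta^\star)=b_k$, i.e.\ \eqref{eq:FOC_all}.

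The third step identifies the gradient of $\Lambda$ with expectations. Differentiating \eqref{eq:logpartition} gives
\[
\partial_k\Lambda(\theta)=\frac{\sum_x\bar\mu(x)T_k(x)e^{\langle\theta,T(x)\rangle}}{\sum_x\bar\mu(x)e^{\langle\theta,T(x)\rangle}}=\mathbb{E}_{\mu_\theta}[T_k].
\]
Theorem~\ref{thm:existence} gives $\mu^\star=\mu_{\theta^\star}$, so the first-order condition becomes $\mathbb{E}_{\mu^\star}[T_k]=b_k$. Proposition~\ref{prop:adjoint}(i) rewrites this as $(\mathcal{A}\mu^\star)_k=b_k$, which says the calibrated coupling matches every marginal and financial constraint.

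The only delicate point is the normalization gauge. The Lagrangian carries an extra constant $-1$ (Proposition~\ref{prop:adjoint}(iii)), and the family is invariant under shifting constants between $u_1$, $u_V$ and $u_2$, so $\theta^\star$ is unique only modulo this gauge. This does not affect the corollary. The objective satisfies $g(\theta+c)=g(\theta)$ along the gauge directions: shifting $u_1$ by $+t$ and $u_V$ by $-t$ leaves both $\langle\theta,T\rangle$ and $\langle\nu,\theta_{\mathrm{marg}}\rangle$ unchanged, since each marginal sums to one. Hence the value and the first-order conditions hold for every maximizer, and the argument requires no gauge fixing.
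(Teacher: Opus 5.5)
Your proposal is correct and follows the paper's proof essentially step for step. You split $\langle\theta,b\rangle$ using $b=(\nu;0)$, take attainment from Theorem~\ref{thm:existence} and use differentiability of $g$ to get $\nabla g(\theta^\star)=0$, then identify $\partial_k\Lambda=\mathbb{E}_{\mu_\theta}[T_k]$ and invoke $\mu_{\theta^\star}=\mu^\star$. Your closing gauge remark is not in the paper's proof but is accurate; in fact $g$ is invariant even under shifting a single marginal potential by a constant, since $\Lambda$ and $\langle\nu,\theta_{\mathrm{marg}}\rangle$ then move by the same amount.
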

\begin{proof}
We derive the dual formula, establish the first-order conditions at the
optimum, and interpret them block by block following
\citeauthor{guyon2020spxvix}~\cite{guyon2020spxvix}, eq.~(5.8).

\paragraph{Dual objective.}
From the proof of Theorem~\ref{thm:existence}, strong duality holds and
the dual function is
\[
g(\theta)
\;=\;\inf_{\mu\in\Delta(\mathcal{X})}\mathcal{L}(\mu,\theta)
\;=\;\langle\theta,b\rangle - \Lambda(\theta).
\]
Since $b=(\nu;\,0)$, the inner product splits as
$\langle\theta,b\rangle
=\langle\theta_{\mathrm{marg}},\nu\rangle
+\langle\theta_{\mathrm{fin}},0\rangle
=\langle\nu,\theta_{\mathrm{marg}}\rangle$,
so the dual problem is
\begin{equation}
\label{eq:dual_explicit}
D(\mu^\star\|\bar\mu)
\;=\;\sup_\theta\,\bigl\{\langle\nu,\theta_{\mathrm{marg}}\rangle - \Lambda(\theta)\bigr\},
\end{equation}
with $g$ concave (as $\Lambda$ is convex).

\paragraph{First-order conditions.}
Since the dual supremum is attained at $\theta^\star$
(Theorem~\ref{thm:existence}) and $g$ is differentiable,
$\nabla_\theta g(\theta^\star)=0$, i.e.\
\begin{equation}
\label{eq:FOC_derive}
b_k \;=\; \frac{\partial\Lambda}{\partial\theta_k}(\theta^\star)
\qquad\forall\,k.
\end{equation}

\paragraph{Mean-parameterization identity.}
Differentiating $\Lambda(\theta)=\ln\sum_x\bar\mu(x)e^{\langle\theta,T(x)\rangle}$
with respect to $\theta_k$ gives
\[
\frac{\partial\Lambda}{\partial\theta_k}(\theta)
=\frac{\sum_x T_k(x)\,\bar\mu(x)\,e^{\langle\theta,T(x)\rangle}}
      {\sum_x \bar\mu(x)\,e^{\langle\theta,T(x)\rangle}}
=\mathbb{E}_{\mu_\theta}[T_k],
\]
where $\mu_\theta(x)\propto\bar\mu(x)e^{\langle\theta,T(x)\rangle}$.
At the dual optimum $\mu_{\theta^\star}=\mu^\star$
(Theorem~\ref{thm:existence}), so \eqref{eq:FOC_derive} becomes
\[
\mathbb{E}_{\mu^\star}[T_k] \;=\; b_k \qquad\forall\,k,
\]
which is exactly~\eqref{eq:FOC_all}.

\paragraph{Block-wise interpretation.}
Reading \eqref{eq:FOC_all} block by block via \eqref{eq:MOT_stats}:
\begin{itemize}
\item \emph{Marginal block} ($k=(1,i),(V,j),(2,k')$):
      $\mathbb{E}_{\mu^\star}[\mathbf{1}_{s_1=s_1^i}]=\mu_1(s_1^i)$,\;
      $\mathbb{E}_{\mu^\star}[\mathbf{1}_{v=v^j}]=\mu_V(v^j)$,\;
      $\mathbb{E}_{\mu^\star}[\mathbf{1}_{s_2=s_2^{k'}}]=\mu_2(s_2^{k'})$:
      the calibrated coupling reproduces the three prescribed marginals.
\item \emph{Delta block} ($k=(M,i,j)$):
      $\mathbb{E}_{\mu^\star}\!\bigl[\mathbf{1}_{(s_1,v)=(s_1^i,v^j)}(s_2-s_1^i)\bigr]=0$:
      the martingale condition holds at every node $(s_1^i,v^j)$.
\item \emph{Vega block} ($k=(C,i,j)$):
      $\mathbb{E}_{\mu^\star}\!\bigl[\mathbf{1}_{(s_1,v)=(s_1^i,v^j)}
      (L(s_2/s_1^i)-(v^j)^2)\bigr]=0$:
      the variance-consistency condition holds at every node $(s_1^i,v^j)$.
\end{itemize}
Together these confirm that $\mu^\star\in\mathcal{P}$, so the calibrated
coupling matches all constraints simultaneously.
\end{proof}

\subsection{Gauge Fixing And Numerical Stability}

The marginal potentials $(u_1,u_V,u_2)$ are not uniquely determined.
Adding constants $(c_1,c_V,c_2)$ with $c_1+c_V+c_2=0$ leaves the
unnormalized Gibbs weights $\bar\mu(x)e^{\langle\theta,T(x)\rangle}$
unchanged; equivalently, since the normalization factor $Z(\theta)$ scales
by $e^{c_1+c_V+c_2}$, the \emph{normalized} coupling~\eqref{eq:MOT_gibbs}
is invariant under \emph{any} constant shift $(c_1,c_V,c_2)$.
The two-dimensional gauge freedom
is confined to the marginal block; the financial multipliers $\Delta_M(s_1,v)$
and $\Delta_C(s_1,v)$ are \emph{uniquely determined} by the financial constraints
at each node $(s_1^i,v^j)$ and carry no gauge ambiguity.

To fix the gauge in the marginal block we impose
\[
\sum_{s_1}\mu_1(s_1)\,u_1(s_1)=0,
\qquad
\sum_{v}\mu_V(v)\,u_V(v)=0.
\]

Under this fixing, the full gauge-fixed parameter vector is
\[
\theta\;\in\;\mathbb{R}^{(N_1+N_V+N_2-2)\,+\,2N_1N_V},
\]
where $N_1+N_V+N_2-2$ degrees of freedom come from the gauge-fixed
marginal potentials and $2N_1N_V$ from the financial multipliers
$\{\Delta_M^{ij},\Delta_C^{ij}\}$.

\subsection{Fisher Information: Hessian of the Log-Partition and Covariance of Dual Basis Functions}
\label{subsec:fisher}

We now prove the central identity that underlies the perturbation theory,
and state the non-degeneracy condition under which $H$ is strictly positive
definite.

\begin{assumption}[Conditional non-degeneracy]\label{ass:non_deg}
For every node $(s_1^i,v^j)\in\mathcal{S}_1\times\mathcal{V}$, the
conditional distribution $\mu^\star_{ij}(\cdot):=\mu^\star(\cdot\mid
s_1^i,v^j)$ has support of cardinality $\geq 3$, and the three functions
$\{1,\;s_2-s_1^i,\;L(s_2/s_1^i)-(v^j)^2\}$ are linearly independent on
$\mathrm{supp}(\mu^\star_{ij})$.
\end{assumption}

\begin{remark}
Assumption~\ref{ass:non_deg} is a condition on the grid, not on the
coupling: it holds whenever $|\mathcal{S}_2|\geq 3$ and no three
points in $\mathcal{S}_2$ satisfy an exact affine relation involving the
delta and vega payoff kernels.
Since $\bar\mu>0$ forces $\mu^\star>0$ on $\mathcal{X}$, the full support
condition $|\mathrm{supp}(\mu^\star_{ij})|=|\mathcal{S}_2|$ is automatic;
the affine-independence part is a mild regularity condition satisfied by
any generic grid.
In practice (and in the numerical experiments of Section~\ref{sec:experiments}),
Assumption~\ref{ass:non_deg} holds, and $H$ is strictly positive definite.
When it fails at isolated nodes, $H$ is positive semi-definite and the
Moore--Penrose pseudoinverse $H^+$ replaces $H^{-1}$ throughout;
all downstream results remain valid (see Lemmas~\ref{lem:range_gG}
and~\ref{lem:range_db} below).
\end{remark}

\paragraph{Log-partition function and invertibility of $H$.}
On the gauge-fixed parameter space, $\Lambda(\theta)$ is $C^\infty$.
Its Hessian $H = \nabla^2_\theta\Lambda(\theta^\star)$ coincides with
the Fisher information matrix of the calibrated exponential family.
$H$ is positive semi-definite in general; under Assumption~\ref{ass:non_deg}
above, $H$ is strictly positive definite and the linear response system
has a unique solution.
The Moore--Penrose pseudoinverse $H^+$ is used throughout;
under Assumption~\ref{ass:non_deg}, $H^+=H^{-1}$.

\begin{proposition}[Fisher Information = Hessian of $\Lambda$ = Covariance of $T$]
\label{prop:fisher_hessian_cov}
Let $\theta$ be the gauge-fixed MOT parameter vector, $\{T_k\}$ the
sufficient statistics~\eqref{eq:MOT_stats}, and
$\mu_\theta(x)\propto\bar\mu(x)\exp\langle\theta,T(x)\rangle$.  Then:
\begin{enumerate}
  \item[(i)] \emph{(Mean parameterization)}
    $\displaystyle\frac{\partial\Lambda}{\partial\theta_k}
       = \mathbb{E}_{\mu_\theta}[T_k].$
  \item[(ii)] \emph{(Fisher information as Hessian and covariance)}
    $\displaystyle
       H_{kl} := \frac{\partial^2\Lambda}{\partial\theta_k\,\partial\theta_l}
       = \mathrm{Cov}_{\mu_\theta}(T_k,\,T_l).$
\end{enumerate}
In particular $H=\nabla^2_\theta\Lambda\succeq 0$.
Under Assumption~\ref{ass:non_deg}, $H\succ 0$ on the gauge-fixed
subspace.
\end{proposition}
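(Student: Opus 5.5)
Parts (i) and (ii) follow by direct differentiation of the log-partition function~\eqref{eq:logpartition}, and positive semi-definiteness follows from reading $H$ as a covariance matrix. The genuinely new content is strict definiteness under Assumption~\ref{ass:non_deg}, which I expect to be the main obstacle. Throughout I write $Z(\theta)=\sum_x\bar\mu(x)e^{\langle\theta,T(x)\rangle}$, so that $\Lambda=\ln Z$. Since $\mathcal{X}$ is finite, $Z$ is a finite sum of exponentials of linear functions of $\theta$; hence it is $C^\infty$ and strictly positive, and differentiation under the sum is trivially justified.

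For (i), I will compute $\partial_k\Lambda=Z^{-1}\sum_x T_k(x)\bar\mu(x)e^{\langle\theta,T(x)\rangle}=\mathbb{E}_{\mu_\theta}[T_k]$, exactly as in the proof of Corollary~\ref{cor:dual_optimality}. For (ii), I will differentiate once more with the quotient rule:
\[
\partial_l\partial_k\Lambda=\frac{\sum_x T_kT_l\,\bar\mu e^{\langle\theta,T\rangle}}{Z}-\frac{\bigl(\sum_x T_k\bar\mu e^{\langle\theta,T\rangle}\bigr)\bigl(\sum_x T_l\bar\mu e^{\langle\theta,T\rangle}\bigr)}{Z^2}
=\mathbb{E}_{\mu_\theta}[T_kT_l]-\mathbb{E}_{\mu_\theta}[T_k]\,\mathbb{E}_{\mu_\theta}[T_l].
\]
Then, for any $c\in\mathbb{R}^K$, I get $c^\top Hc=\mathrm{Var}_{\mu_\theta}(\langle c,T\rangle)\ge 0$, which gives $H\succeq 0$. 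Restricting $\theta$ to the gauge-fixed coordinates only selects a principal submatrix, so these identities are unaffected.

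For definiteness, I take $c$ in the gauge-fixed tangent space with $c^\top Hc=0$. Since $\mu^\star>0$ everywhere by Theorem~\ref{thm:existence}, the function $f(x):=\langle c,T(x)\rangle$ must be constant on $\mathcal{X}$. Writing $c=(a,b,d,m,n)$ blockwise, this says that at each node $(s_1^i,v^j)$,
\[
d(s_2)+m_{ij}(s_2-s_1^i)+n_{ij}\bigl(L(s_2/s_1^i)-(v^j)^2\bigr)=\mathrm{const}_{ij}\qquad\forall s_2.
\]
I will compare two nodes $(i,j)$ and $(i',j')$ and subtract to eliminate $d$. The key observation is that $L(s_2/s_1)=-\tfrac{2}{\tau}(\ln s_2-\ln s_1)$, so $L(s_2/s_1^i)$ and $L(s_2/s_1^{i'})$ differ by a constant. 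Hence the difference of the two identities reads
\[
(m_{ij}-m_{i'j'})\,s_2-\tfrac{2}{\tau}(n_{ij}-n_{i'j'})\ln s_2=\mathrm{const}.
\]
The linear-independence part of Assumption~\ref{ass:non_deg}, which amounts to independence of $\{1,s_2,\ln s_2\}$ on $\mathcal{S}_2$, then forces $m_{ij}\equiv m$ and $n_{ij}\equiv n$ to be constant across all nodes. Finally, the marginal-block normalizations must be used to kill $a,b,d$ and these two constants.

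Here is where I expect the argument to break as literally stated, and this is the step I would scrutinize first. The pair $(m,n)$ is not forced to vanish. Take $m_{ij}\equiv m$, $n_{ij}\equiv n$, and set
\[
d(s_2)=-m\,s_2+\tfrac{2n}{\tau}\ln s_2,\qquad a(s_1)=m\,s_1-\tfrac{2n}{\tau}\ln s_1,\qquad b(v)=n\,v^2,
\]
after centering $a$ and $b$ against $\mu_1$ and $\mu_V$. With these choices $f\equiv 0$, so $c$ lies in the kernel of $H$. The two normalizations $\sum\mu_1u_1=0$ and $\sum\mu_Vu_V=0$ remove only constant shifts, not these directions. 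This reflects the fact that a globally constant martingale or log-contract multiplier can be absorbed into the marginal potentials, since $\mathbb{E}[S_2-S_1]$ and $\mathbb{E}[L-V^2]$ are already determined by the marginals.

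My plan is therefore to prove $H\succ 0$ on the complement of this explicit two-dimensional kernel, which is where the proof actually lives. I would do this by strengthening the gauge with two additional linear normalizations, for instance $\sum_{ij}\mu^\star_{1,V}(s_1^i,v^j)\,\Delta_M^{ij}=0$ and the analogous condition on $\Delta_C$. Under these extra conditions, $m=n=0$ follows at once. Then $a(s_1)+b(v)+d(s_2)$ being constant forces each of $a$, $b$, $d$ to be constant, and the centering conditions make them zero. Otherwise, if the gauge is left as stated, the result holds with $H^{-1}$ replaced by $H^+$; the earlier remark already allows for this.
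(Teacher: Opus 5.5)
Your derivation of (i), (ii) and $H\succeq 0$ is the same as the paper's. On strict definiteness you are right, and the paper's proof fails at exactly the point you flag. It asserts that fiberwise linear independence of $\{1,\,s_2-s_1^i,\,L(s_2/s_1^i)-(v^j)^2\}$, together with the gauge, forces $c=0$. But $s_2-s_1$ and $L(s_2/s_1)-v^2$ split additively into functions of $s_1$, $v$ and $s_2$ alone. So a node-independent pair $(m,n)$ of financial multipliers can be absorbed into $(u_1,u_V,u_2)$. Your vector is therefore a genuine nonzero element of $\ker H$ satisfying both of the paper's normalizations. It also refutes the paper's remark that $\Delta_M,\Delta_C$ carry no gauge ambiguity.

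The gap is in your repair: you missed a third kernel direction.
\begin{itemize}
\item The constant shift $c=(0,0,\mathbf{1},0,0)$ of $u_2$ gives $\langle c,T(x)\rangle\equiv 1$, because exactly one $S_2$-indicator fires at each $x$. It satisfies $\sum\mu_1u_1=\sum\mu_Vu_V=0$, since the paper never normalizes $u_2$.
\item So on the paper's gauge-fixed subspace $\ker H$ is three-dimensional, not two. In all of $\mathbb{R}^K$, under Assumption~\ref{ass:non_deg}, it is five-dimensional: the three constant marginal shifts plus your $(m,n)$ family.
\item Your last step therefore breaks. Your two extra conditions give $m=n=0$, and then $a+b+d\equiv\mathrm{const}$ makes $a,b,d$ constant. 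The centering kills $a$ and $b$, but $d\equiv\delta$ survives.
\item The fix is a third marginal normalization, e.g.\ $\sum_{s_2}\mu_2(s_2)u_2(s_2)=0$. With these five linear conditions in total, your argument does give $H\succ 0$.
\end{itemize}

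Your fallback to $H^+$ is also not automatic. Take $s$ to be the identity function on the grid. Pairing your $(m,n)$ kernel vectors with $\dot b=(h_1,h_V,h_2,\mathbf{0})$ gives
\[
m\bigl(\langle s,h_1\rangle-\langle s,h_2\rangle\bigr)+n\bigl(\langle L,h_1\rangle-\langle L,h_2\rangle+\langle v^2,h_V\rangle\bigr).
\]
Hence $\dot b\in\mathrm{range}(H)$ additionally requires both brackets to vanish. Mass preservation alone does not give this, contrary to Lemma~\ref{lem:range_db}.
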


\begin{proof}
Write the partition function and log-partition function explicitly:
\[
Z(\theta)
\;=\;\sum_{x\in\mathcal{X}}\bar\mu(x)\,e^{\langle\theta,\,T(x)\rangle},
\qquad
\Lambda(\theta) = \ln Z(\theta).
\]
For $x=(s_1,v,s_2)$, expanding $\langle\theta,T(x)\rangle$ block by block
via~\eqref{eq:MOT_stats}:
\begin{equation}
\label{eq:inner_prod_expand}
\langle\theta,T(x)\rangle
\;=\;
u_1(s_1)+u_V(v)+u_2(s_2)
\;+\;\Delta_M(s_1,v)\,(s_2-s_1)
\;+\;\Delta_C(s_1,v)\,\bigl(L(s_2/s_1)-v^2\bigr),
\end{equation}
so $Z(\theta)=\sum_{s_1,v,s_2}\bar\mu(s_1,v,s_2)\,\exp\eqref{eq:inner_prod_expand}$
and the Gibbs measure is
$\mu_\theta(x)=\bar\mu(x)\,e^{\langle\theta,T(x)\rangle}/Z(\theta)$.

\paragraph{Part (i): gradient of $\Lambda$.}
Differentiating $\Lambda=\ln Z$ with respect to $\theta_k$ by the chain rule:
\[
\frac{\partial\Lambda}{\partial\theta_k}(\theta)
\;=\;\frac{1}{Z(\theta)}\,\frac{\partial Z(\theta)}{\partial\theta_k}.
\]
Since $T_k(x)$ multiplies $e^{\langle\theta,T(x)\rangle}$ when differentiating
with respect to $\theta_k$,
\[
\frac{\partial Z(\theta)}{\partial\theta_k}
\;=\;\sum_{x\in\mathcal{X}}\bar\mu(x)\,T_k(x)\,e^{\langle\theta,T(x)\rangle}.
\]
Dividing through by $Z(\theta)$ recognizes each term as $\mu_\theta(x)$:
\[
\frac{\partial\Lambda}{\partial\theta_k}(\theta)
\;=\;\sum_{x\in\mathcal{X}}T_k(x)\,
      \underbrace{\frac{\bar\mu(x)\,e^{\langle\theta,T(x)\rangle}}{Z(\theta)}}_{\mu_\theta(x)}
\;=\;\mathbb{E}_{\mu_\theta}[T_k].
\]

\paragraph{Part (ii): Hessian of $\Lambda$.}
Differentiate $\partial\Lambda/\partial\theta_k = \mathbb{E}_{\mu_\theta}[T_k]
= \sum_x T_k(x)\mu_\theta(x)$ with respect to $\theta_l$:
\[
\frac{\partial^2\Lambda}{\partial\theta_k\,\partial\theta_l}
\;=\;\sum_{x\in\mathcal{X}} T_k(x)\,\frac{\partial\mu_\theta(x)}{\partial\theta_l}.
\]
From $\mu_\theta(x)=\bar\mu(x)e^{\langle\theta,T(x)\rangle}/Z(\theta)$,
\[
\frac{\partial\mu_\theta(x)}{\partial\theta_l}
\;=\;\mu_\theta(x)\!\left(T_l(x)-\frac{\partial\Lambda}{\partial\theta_l}\right)
\;=\;\mu_\theta(x)\bigl(T_l(x)-\mathbb{E}_{\mu_\theta}[T_l]\bigr).
\]
Substituting and splitting the sum:
\begin{align*}
\frac{\partial^2\Lambda}{\partial\theta_k\,\partial\theta_l}
&\;=\;\sum_x T_k(x)\,\mu_\theta(x)\bigl(T_l(x)-\mathbb{E}_{\mu_\theta}[T_l]\bigr)\\
&\;=\;\sum_x T_k(x)\,T_l(x)\,\mu_\theta(x)
     \;-\;\mathbb{E}_{\mu_\theta}[T_l]\sum_x T_k(x)\,\mu_\theta(x)\\
&\;=\;\mathbb{E}_{\mu_\theta}[T_kT_l]
     \;-\;\mathbb{E}_{\mu_\theta}[T_k]\,\mathbb{E}_{\mu_\theta}[T_l]
\;=\;\mathrm{Cov}_{\mu_\theta}(T_k,T_l).
\end{align*}

\paragraph{Positive semi-definiteness.}
For any $c\in\mathbb{R}^K$,
\[
c^\top H c
\;=\;\mathrm{Var}_{\mu_\theta}\!\bigl(\langle c,T\rangle\bigr)\;\geq\;0,
\]
so $H\succeq 0$.  Equality holds iff $\sum_k c_k T_k(x)$ is constant on
$\mathcal{X}$, i.e.\ $c\in\ker(H)$.  Under Assumption~\ref{ass:non_deg}
the three functions $\{1,s_2-s_1^i,L(s_2/s_1^i)-(v^j)^2\}$ are linearly
independent on each fiber, so no non-trivial combination in $\mathbb{R}^K$
is constant; together with gauge fixing, this forces $c=0$, giving
$H\succ 0$.
\end{proof}

\begin{lemma}[Payoff covariances lie in $\mathrm{range}(H)$]
\label{lem:range_gG}
For any payoff $G:\mathcal{X}\to\mathbb{R}$, the covariance vector
$g_G := \mathrm{Cov}_{\mu^\star}(G,T)\in\mathbb{R}^K$ satisfies
$g_G\in\mathrm{range}(H)$.
\end{lemma}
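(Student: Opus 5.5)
Since $H$ is symmetric, $\mathrm{range}(H)=\ker(H)^\perp$. It therefore suffices to show that $g_G$ is orthogonal to every vector in the kernel of $H$, i.e.\ that $\langle c,g_G\rangle=0$ for every $c\in\ker(H)$. The argument is purely covariance-theoretic and uses only Proposition~\ref{prop:fisher_hessian_cov}(ii) together with the positivity $\mu^\star>0$ from Theorem~\ref{thm:existence}.

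First I would characterise $\ker(H)$. By Proposition~\ref{prop:fisher_hessian_cov}(ii), $H=\mathrm{Cov}_{\mu^\star}(T,T)$, so for any $c\in\mathbb{R}^K$,
\[
c^\top H c=\mathrm{Var}_{\mu^\star}\bigl(\langle c,T\rangle\bigr).
\]
Because $H\succeq 0$, we have $c\in\ker(H)$ if and only if $c^\top Hc=0$. This holds if and only if $\langle c,T(x)\rangle$ is $\mu^\star$-almost surely constant. Since $\mu^\star(x)>0$ for every $x\in\mathcal{X}$ (Theorem~\ref{thm:existence}), "almost surely constant" means constant at every point of $\mathcal{X}$.

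Next I would compute the pairing with $g_G$. By bilinearity of covariance,
\[
\langle c,g_G\rangle=\sum_k c_k\,\mathrm{Cov}_{\mu^\star}(G,T_k)=\mathrm{Cov}_{\mu^\star}\bigl(G,\langle c,T\rangle\bigr).
\]
For $c\in\ker(H)$ the second argument is a constant function, so this covariance vanishes. Hence $g_G\perp\ker(H)$, and therefore $g_G\in\ker(H)^\perp=\mathrm{range}(H)$. One could equally avoid the pointwise-constancy step and use Cauchy--Schwarz directly:
\[
|\mathrm{Cov}(G,\langle c,T\rangle)|\le \mathrm{Var}(G)^{1/2}\,\bigl(c^\top Hc\bigr)^{1/2}=0.
\]

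The only delicate point is the gauge bookkeeping. If $H$ is taken on the gauge-fixed subspace, then $\ker(H)$ and $\mathrm{range}(H)$ must be read within that same coordinate system, and $g_G$ must be formed from the same gauge-fixed statistics. I would state explicitly that the argument is carried out in whichever coordinates define $H$. The argument uses only symmetry and positive semidefiniteness of a covariance matrix, so it is unaffected by this choice. Under Assumption~\ref{ass:non_deg} the result is trivial, since $H\succ 0$ gives $\mathrm{range}(H)=\mathbb{R}^K$. The content of the lemma lies in the degenerate case, and the argument above covers it.
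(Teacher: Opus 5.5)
Your argument is correct and is essentially the paper's proof. Both identify $\ker(H)$ with the directions $c$ for which $\langle c,T\rangle$ is constant on $\mathcal{X}$ (using $\mu^\star>0$), then use bilinearity to get $c^\top g_G=\mathrm{Cov}_{\mu^\star}(G,\text{const})=0$, hence $g_G\perp\ker(H)=\mathrm{range}(H)^\perp$. Your Cauchy--Schwarz variant neatly avoids the pointwise-constancy step. Your coordinate caveat is apt: in ungauged coordinates $H$ always annihilates block-constant directions (e.g.\ $c=\mathbf{1}$ on the $S_1$-indicator block and $0$ elsewhere), so the ``trivial'' nondegenerate case applies only after gauge fixing.
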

\begin{proof}
If $c\in\ker(H)$ then $\sum_k c_k T_k\equiv\gamma$ on $\mathcal{X}$,
so $c^\top g_G = \mathrm{Cov}_{\mu^\star}(G,\gamma)=0$.
Hence $g_G\perp\ker(H) = \mathrm{range}(H)^\perp$.
\end{proof}

\begin{lemma}[Admissible perturbations lie in $\mathrm{range}(H)$]
\label{lem:range_db}
Let $\dot b=(h_1,h_V,h_2,0,\ldots,0)\in\mathbb{R}^K$ be a marginal
perturbation satisfying the mass-conservation conditions
$\sum_{s_1}h_1=\sum_v h_V=\sum_{s_2}h_2=0$.
Then $\dot b\in\mathrm{range}(H)$, so $H\dot\theta^0=\dot b$ is solvable.
\end{lemma}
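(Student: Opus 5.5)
Since $H$ is symmetric, $\mathrm{range}(H)=\ker(H)^\perp$, so it suffices to show that $c^\top\dot b=0$ for every $c\in\ker(H)$. This mirrors the proof of Lemma~\ref{lem:range_gG}, except that the orthogonality now has to come from mass conservation rather than from a covariance identity.

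\textbf{Step 1: characterize the kernel.} By Proposition~\ref{prop:fisher_hessian_cov}, $c\in\ker(H)$ iff $\mathrm{Var}_{\mu^\star}(\langle c,T\rangle)=0$. Since $\mu^\star>0$ on all of $\mathcal{X}$ (Theorem~\ref{thm:existence}), this means $\langle c,T(x)\rangle\equiv\gamma$ is constant on $\mathcal{X}$. Writing $c=(c_1,c_V,c_2,c_M,c_C)$ and using \eqref{eq:inner_prod_expand}, this reads
\[
c_1(s_1)+c_V(v)+c_2(s_2)+c_M(s_1,v)(s_2-s_1)+c_C(s_1,v)\bigl(L(s_2/s_1)-v^2\bigr)=\gamma
\quad\forall (s_1,v,s_2).
\]

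\textbf{Step 2: show the kernel vectors have constant marginal blocks.} Fix a node $(s_1^i,v^j)$. As a function of $s_2$, the identity says $c_2(s_2)$ lies in $\mathrm{span}\{1,\,s_2-s_1^i,\,L(s_2/s_1^i)-(v^j)^2\}$. Equivalently, $c_2(s_2)=\alpha_{ij}+\beta_{ij}s_2+\delta_{ij}\ln s_2$ on $\mathcal{S}_2$, and under Assumption~\ref{ass:non_deg} the coefficients $\alpha_{ij},\beta_{ij},\delta_{ij}$ are unique. These coefficients must be independent of $(i,j)$, because $c_2$ does not depend on the node. Uniqueness then forces $c_M(s_1,v)=\beta$ and $c_C(s_1,v)=-\tfrac{\tau}{2}\delta$ for all nodes. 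Substituting back gives $c_1(s_1)+c_V(v)+\beta(\cdot)+\delta(\cdot)=\text{const}$, and separating variables should show $c_1$ and $c_V$ are constants up to these terms.

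This is the main obstacle. A priori, a nonzero $\beta$ or $\delta$ could let $c_2$ be a genuinely non-constant function (affine in $s_2$ and $\ln s_2$), compensated by $c_1(s_1)=\gamma'-\beta(-s_1)\ldots$. Two ways to handle it:
\begin{itemize}
\item \emph{Rule these directions out.} I would first try to show that such directions are excluded by the non-degeneracy the paper asserts when deriving $H\succ0$.
\item \emph{Show they are harmless.} Failing that, I would show these directions still pair to zero with $\dot b$. Concretely, for a kernel vector of this form,
\[
\langle c,\dot b\rangle=\sum_{s_1}c_1h_1+\sum_v c_Vh_V+\sum_{s_2}c_2h_2 .
\]
This vanishes only if $h_1,h_2$ annihilate $s_1,\ln s_1,s_2,\ln s_2$. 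Those are the first-moment and log-moment conditions that admissible shocks preserve (the perturbed target marginals must remain consistent with the martingale and variance constraints; otherwise $\mathcal{P}^\varepsilon=\emptyset$).
\end{itemize}
So I would either invoke Assumption~\ref{ass:non_deg} to make $\ker(H)$ trivial on the gauge-fixed space (whence the claim is immediate), or interpret admissibility to include these moment conditions.

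\textbf{Step 3: conclude.} In the main case, kernel vectors have constant marginal blocks $c_1\equiv a_1$, $c_V\equiv a_V$, $c_2\equiv a_2$ and zero financial blocks; this is exactly the gauge direction. Then
\[
c^\top\dot b=a_1\sum h_1+a_V\sum h_V+a_2\sum h_2=0
\]
by mass conservation, and the financial components of $\dot b$ vanish. Hence $\dot b\perp\ker(H)$, so $\dot b\in\mathrm{range}(H)$ and $\dot\theta^0=H^+\dot b$ solves $H\dot\theta^0=\dot b$.
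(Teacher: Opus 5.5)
The worry you raise in Step~2 is well founded, and in fact it settles the matter: the statement as written is false, so neither of your routes can prove it in this form. Carry your separation of variables to the end. Assumption~\ref{ass:non_deg} does force the financial blocks of $c$ to be node-independent constants $m,k$, and then $\ker H$ consists exactly of the vectors
\[
c=\bigl(m\,s_1+k\,L(s_1)+a_1,\;\;k\,v^2+a_V,\;\;-m\,s_2-k\,L(s_2)+a_2,\;\;m\mathbf{1},\;\;k\mathbf{1}\bigr),\qquad m,k,a_1,a_V,a_2\in\mathbb{R}.
\]
Using $L(s_2/s_1)=L(s_2)-L(s_1)$, one checks directly that $\langle c,T(x)\rangle\equiv a_1+a_V+a_2$. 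The $m$- and $k$-directions are not gauge directions. They exist because $\sum_{i,j}T^{\Delta_M}_{ij}=s_2-s_1$ and $\sum_{i,j}T^{\Delta_C}_{ij}=L(s_2)-L(s_1)-v^2$ are already combinations of marginal indicators. Note also that $c_V$ is not constant in the $k$-direction. Pairing with $\dot b$ gives
\[
c^\top\dot b=m\bigl(\langle s,h_1\rangle-\langle s,h_2\rangle\bigr)+k\bigl(\langle L,h_1\rangle+\langle v^2,h_V\rangle-\langle L,h_2\rangle\bigr)+a_1{\textstyle\sum}h_1+a_V{\textstyle\sum}h_V+a_2{\textstyle\sum}h_2 .
\]
So taking $h_V=h_2=0$ and any mass-preserving $h_1$ with $\langle s,h_1\rangle\neq0$ gives $\dot b\notin\mathrm{range}(H)$. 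Consequently:
\begin{enumerate}
\item Route (1) cannot work. Assumption~\ref{ass:non_deg} does not remove these directions, and the claim $H\succ0$ in Proposition~\ref{prop:fisher_hessian_cov} that it would rest on is itself wrong.
\item Your Step~3 ``main case'' is exactly the paper's one-line proof, which asserts the constant-block form of $\ker H$ ``from the gauge structure''. It only handles the $a$-directions.
\end{enumerate}

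Route (2) is the right repair, but it changes the hypotheses, and the conditions you list are not the right ones. Asking $h_1,h_2$ to annihilate $s$ and $\ln s$ separately is much stronger than needed. It would exclude both of the paper's bumps: both move the log-moments, and the spot bump also moves both means. It also ignores $h_V$. Under Assumption~\ref{ass:non_deg}, the necessary and sufficient conditions beyond mass conservation are
\[
\langle s,h_1\rangle=\langle s,h_2\rangle,\qquad \langle v^2,h_V\rangle=\langle L,h_2\rangle-\langle L,h_1\rangle .
\]
The second condition is precisely what the SSR Step~3 construction \eqref{eq:deltaFV} enforces. The first holds for the spot bump (both means shift by $\delta$) and for the vol bump (both means stay at the forward). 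So the paper's concrete perturbations are admissible, even though the lemma as stated is not true.

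Your feasibility remark also yields a clean proof that needs no description of $\ker H$. Suppose $\mathcal{P}^\varepsilon\neq\emptyset$ for some $\varepsilon\neq0$ and $c\in\ker H$. Then $\mathcal{A}^*c\equiv\kappa$ is constant, so for any $\mu\in\mathcal{P}^\varepsilon$ one has $c^\top b^\varepsilon=\langle\mathcal{A}^*c,\mu\rangle_{\mathcal{X}}=\kappa=c^\top b^0$, whence $c^\top\dot b=0$. The same argument shows that the paper's Lemma~\ref{lem:surjectivity} is also incorrect, since it would make every mass-preserving shock feasible.
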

\begin{proof}
Any $c\in\ker(H)$ has its marginal block of the form
$({\gamma_1}\mathbf{1},{\gamma_V}\mathbf{1},{\gamma_2}\mathbf{1})$ with
$\gamma_1+\gamma_V+\gamma_2=0$ (from the gauge structure).
Then $c^\top\dot b = \gamma_1\sum h_1+\gamma_V\sum h_V+\gamma_2\sum h_2=0$
by mass conservation, so $\dot b\perp\ker(H)$.
\end{proof}

\begin{remark}
This identity encodes three equivalent viewpoints on $H$:
(a)~\emph{curvature of the dual} $H$ measures how sharply the dual objective
    curves at the optimum;
(b)~\emph{spread of the dual basis functions} $H_{kl}$ is the covariance of
    the sufficient statistics $T_k$ and $T_l$ under the calibrated Gibbs measure;
(c)~\emph{Riemannian metric}-$H$ is the Fisher--Rao metric on the statistical
    manifold of calibrated couplings.
The linear response formula $\Pi'(0)=g^\top H^+h$
(Theorem~\ref{thm:risk_representation}) is the statement that marginal
perturbations $h$ project onto the tangent space of this manifold via $H^+$.
\end{remark}

\paragraph{Block structure of the MOT Fisher matrix.}
Partitioning $H$ according to $\theta=(u,\Delta_M,\Delta_C)$ gives
\begin{equation}
\label{eq:MOT_fisher_block}
H =
\begin{pmatrix}
  H_{\mathrm{marg}} & H_{\mathrm{marg},M} & H_{\mathrm{marg},C} \\[4pt]
  H_{M,\mathrm{marg}} & H_{MM} & H_{MC} \\[4pt]
  H_{C,\mathrm{marg}} & H_{CM} & H_{CC}
\end{pmatrix},
\end{equation}
where $\mu^\star_{ij}$ denotes the conditional distribution of $s_2$ given
$(s_1,v)=(s_1^i,v^j)$, and the financial diagonal blocks are block-diagonal:
\begin{align*}
  (H_{MM})_{ij,kl}
    &= \mathbf{1}_{(i,j)=(k,l)}\;\mathrm{Var}_{\mu^\star_{ij}}(s_2-s_1^i),\\
  (H_{CC})_{ij,kl}
    &= \mathbf{1}_{(i,j)=(k,l)}\;
       \mathrm{Var}_{\mu^\star_{ij}}\!\bigl(L(s_2/s_1^i)-(v^j)^2\bigr),\\
  (H_{MC})_{ij,kl}
    &= \mathbf{1}_{(i,j)=(k,l)}\;
       \mathrm{Cov}_{\mu^\star_{ij}}\!\bigl(s_2-s_1^i,\;
       L(s_2/s_1^i)-(v^j)^2\bigr).
\end{align*}
The block-diagonal structure of $H_{MM}$ and $H_{CC}$ reflects the
node-local support of the financial payoffs; only the off-diagonal blocks
$H_{\mathrm{marg},M}$ and $H_{\mathrm{marg},C}$ couple the marginal and
financial directions.

\section{Perturbation Theory: General Marginal Shocks}
\label{sec:perturbation}

\subsection{Admissible Perturbations}

Let $\mu^\star$ be the unique MOT optimizer of Section~\ref{sec:framework}.
Risk generation requires understanding how $\mu^\star$ responds to changes in
market data.  In the SPX--VIX setting, market perturbations affect the
\emph{marginal} constraints (changing option prices shifts $\mu_1,\mu_V,\mu_2$)
while the \emph{financial} constraints (martingality and variance consistency)
are structural and remain fixed.

A \emph{directional marginal perturbation} is a triple
\[
h := (h_1,h_V,h_2),\qquad
h_1:\mathcal{S}_1\to\mathbb{R},\; h_V:\mathcal{V}\to\mathbb{R},\; h_2:\mathcal{S}_2\to\mathbb{R},
\]
satisfying the mass-preserving constraints
\[
\textstyle\sum_{s_1} h_1(s_1)=0,\quad
\sum_{v} h_V(v)=0,\quad
\sum_{s_2} h_2(s_2)=0.
\]
For $\varepsilon$ sufficiently small, the perturbed marginals
$\mu_i^\varepsilon = \mu_i + \varepsilon h_i$ remain in the interior of the simplex.
In the full parameter space $\theta = (u_1,u_V,u_2,\{\Delta_M^{ij}\},\{\Delta_C^{ij}\})$
this corresponds to the perturbation vector
\begin{equation}
\label{eq:perturbation_vector}
\dot b = \bigl(h_1,\,h_V,\,h_2,\;\underbrace{0,\ldots,0}_{2N_1N_V}\bigr),
\end{equation}
i.e., the financial constraints remain unperturbed.

Here $\dot b_{\mathrm{marg}}:=(h_1,h_V,h_2)$ denotes the marginal block of the perturbation vector~\eqref{eq:perturbation_vector}.
The perturbed MOT-admissible set is
\begin{equation}
\label{eq:primal_eps}
\mathcal{P}^\varepsilon
= \bigl\{\mu\in\Delta(\mathcal{X}):\mathcal{A}_{\mathrm{marg}}\mu=\nu+\varepsilon\dot b_{\mathrm{marg}},\;
  \mathcal{A}_{\mathrm{fin}}\mu=0\bigr\},
\end{equation}
and we denote the unique optimizer $\mu^\varepsilon=\arg\inf_{\mu\in\mathcal{P}^\varepsilon}D(\mu\|\bar\mu)$.

Define the \emph{tangent space of financially-consistent mass-preserving variations}
\begin{equation}
\label{eq:tangent_V}
V \;:=\; \ker(\mathcal{A}_{\mathrm{fin}}) \cap \ker(\mathbf{1}^\top)
  \;=\; \bigl\{v\in\mathbb{R}^{\mathcal{X}} :\mathcal{A}_{\mathrm{fin}}v=0,\;
        \textstyle\sum_x v(x)=0\bigr\},
\end{equation}
and write $W_0:=\{(a_1,a_V,a_2)\in\mathbb{R}^{N_1+N_V+N_2}:\sum a_1=\sum a_V=\sum a_2=0\}$
for the subspace of mass-preserving marginal targets.

\begin{lemma}[Surjectivity of the reduced constraint map]
\label{lem:surjectivity}
Under Assumption~\ref{ass:non_deg}, the map
$\mathcal{A}_{\mathrm{marg}}|_V : V \to W_0$ is surjective.
\end{lemma}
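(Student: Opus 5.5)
The plan is to build a preimage in $V$ for an arbitrary target $(a_1,a_V,a_2)\in W_0$ as a sum of two pieces: a \emph{first-stage} variation that realizes $a_1$ and $a_V$, and a \emph{fiber} variation that corrects the $S_2$ marginal. I expect the second piece to be the obstacle, for a reason I will isolate at the end.

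\emph{First stage.} Let $\kappa^\star(s_2\mid s_1,v)=\mu^\star(s_1,v,s_2)/\gamma^\star(s_1,v)$. This is well defined because $\mu^\star>0$ by Theorem~\ref{thm:existence}. For any signed $\delta\gamma$ on $\mathcal{S}_1\times\mathcal{V}$, set $w_1(s_1,v,s_2):=\delta\gamma(s_1,v)\,\kappa^\star(s_2\mid s_1,v)$. Then
\[
(\mathcal{A}_{\mathrm{fin}}w_1)^M_{ij}=\delta\gamma(s_1^i,v^j)\cdot 0 .
\]
The same holds for the $C$ block, since $\mu^\star$ satisfies the financial constraints node by node (Corollary~\ref{cor:dual_optimality}). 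So $w_1\in\ker\mathcal{A}_{\mathrm{fin}}$.

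Choose $\delta\gamma:=a_1\otimes\mu_V+\mu_1\otimes a_V$. This has zero total mass and marginals exactly $a_1$ and $a_V$, so $w_1\in V$ with $(\mathcal{A}_{\mathrm{marg}}w_1)_{1}=a_1$ and $(\mathcal{A}_{\mathrm{marg}}w_1)_{V}=a_V$. Its $S_2$ marginal is $r:=\sum_{s_1,v}\delta\gamma\,\kappa^\star$. It then remains to produce $w_2\in V$ with zero $S_1$ and $V$ marginals and $S_2$ marginal $a_2-r$.

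\emph{Fiber stage.} Take $w_2(s_1^i,v^j,s_2)=\omega_{ij}(s_2)$, with each $\omega_{ij}$ orthogonal to $E_i:=\mathrm{span}\{1,\,s_2-s_1^i,\,L(s_2/s_1^i)-(v^j)^2\}$. Such $w_2$ lies in $V$ and leaves the $S_1$ and $V$ marginals untouched. Its $S_2$ marginal ranges over $\sum_{ij}E_i^{\perp}=\bigl(\bigcap_i E_i\bigr)^{\perp}$. Assumption~\ref{ass:non_deg} guarantees each $E_i$ is three-dimensional, so each $E_i^\perp$ has the expected codimension. The required step is therefore to show that $a_2-r$ annihilates $\bigcap_i E_i$.

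\emph{Main obstacle.} Since $L(s_2/s_1^i)=-\tfrac{2}{\tau}(\ln s_2-\ln s_1^i)$, every $E_i$ equals $\mathrm{span}\{1,s_2,\ln s_2\}$. The condition $\langle a_2-r,1\rangle=0$ holds automatically. The $s_2$-moment, however, gives $\langle r,s_2\rangle=\sum_{s_1}a_1(s_1)s_1$, by the martingale property of $\kappa^\star$. Hence the step succeeds only if
\[
\textstyle\sum_{s_2}s_2\,a_2(s_2)=\sum_{s_1}s_1\,a_1(s_1)
\]
together with the analogous $\ln s_2$ identity, which involves $a_V$ through $\sum_v v^2 a_V(v)$. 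I would first try to absorb these two moments using the remaining freedom in $\delta\gamma$ (any $\delta\gamma$ with the prescribed marginals is admissible, not only the product choice). But the $s_2$-moment of $r$ depends only on $a_1$, so that freedom cannot absorb it.

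My expectation is therefore that the argument goes through verbatim on $W_0$ intersected with these two moment identities. To reach the statement exactly as worded, one must either verify that the paper's perturbations $\dot b$ always satisfy them, or read $W_0$ as incorporating them. This point is where the proof must be checked against the intended definition.
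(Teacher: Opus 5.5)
Your diagnosis is correct, and it is stronger than you state. The two moment identities are necessary for every element of the range, so the lemma as worded is false and no argument can establish it. Take any $w\in V$ with $\mathcal{A}_{\mathrm{marg}}w=(a_1,a_V,a_2)$. Summing the martingale rows of $\mathcal{A}_{\mathrm{fin}}w=0$ over all nodes gives $\sum_x w(x)(s_2-s_1)=0$. Summing the variance rows and using $L(s_2/s_1)=L(s_2)-L(s_1)$ gives $\sum_x w(x)\bigl(L(s_2)-L(s_1)-v^2\bigr)=0$. That is,
\[
\textstyle\sum_{s_2}s_2\,a_2(s_2)=\sum_{s_1}s_1\,a_1(s_1),\qquad
\langle L,a_2\rangle_{\mathcal{S}_2}-\langle L,a_1\rangle_{\mathcal{S}_1}=\textstyle\sum_v v^2\,a_V(v).
\]
Since $s_2$ is non-constant on $\mathcal{S}_2$, the first functional does not vanish on $W_0$, so $\mathrm{range}(\mathcal{A}_{\mathrm{marg}}|_V)\subsetneq W_0$. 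Your two-stage construction proves the reverse inclusion: on the subspace cut out by these identities, $a_2-r$ annihilates the common span $\mathrm{span}\{1,s_2,\ln s_2\}$ of all the $E_i$. Together these identify the range exactly as that codimension-two subspace of $W_0$. State this as a conclusion, not an expectation.

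The paper's proof reaches the opposite conclusion through an invalid step. In \eqref{eq:surj_annihilator} with $(i,j)$ fixed, it infers $\alpha_{ij}=\beta_{ij}=0$ from linear independence of $\{1,s_2-s_1^i,L(s_2/s_1^i)-(v^j)^2\}$. Independence only makes $\alpha_{ij},\beta_{ij}$ unique given $y_2$; it does not force them to vanish. Because $L(s_2/s_1^i)=L(s_2)-L(s_1^i)$, uniqueness forces $\alpha_{ij}\equiv\alpha$ and $\beta_{ij}\equiv\beta$ across nodes. But $y_2\in\mathrm{span}\{1,s_2,L(s_2)\}$ remains allowed, yielding the annihilators $(-s_1,0,s_2)$ and $(-L(s_1),-v^2,L(s_2))$, modulo block constants. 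These are exactly your two identities. The same directions, paired with $\Delta_M\equiv\mathrm{const}$ or $\Delta_C\equiv\mathrm{const}$ and shifted to meet the gauge, give nonzero $c$ with $\langle c,T\rangle\equiv 0$. So the claim $H\succ 0$ in Proposition~\ref{prop:fisher_hessian_cov} and the mass-conservation argument of Lemma~\ref{lem:range_db} share the same defect.

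One correction to your final sentence. Verifying the identities for the paper's $\dot b$ cannot recover the lemma as worded. What it rescues is Theorem~\ref{thm:wellposed} and the downstream linear-response results, which only need $\dot b_{\mathrm{marg}}$ to lie in the range. That verification does go through, up to discretization:
\begin{itemize}
\item a spot bump shifts both SPX means by the same $\delta$;
\item a parallel vol bump leaves both means at the forward;
\item Step~3 of Section~\ref{subsec:ssr_compatibility} builds $h_V$ precisely so that $\langle v^2,h_V\rangle=\langle L,h_2\rangle-\langle L,h_1\rangle$.
\end{itemize}
The correct lemma asserts surjectivity onto $W_0$ intersected with the two moment conditions, which is what your argument proves.
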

\begin{proof}
By the finite-dimensional rank theorem it suffices to show the transpose is
injective modulo the natural annihilator.  Suppose
$(y_1,y_V,y_2)\in W_0^\perp$ satisfies
$(\mathcal{A}_{\mathrm{marg}}|_V)^\top(y_1,y_V,y_2) = 0$, i.e.\
$\mathcal{A}_{\mathrm{marg}}^\top(y_1,y_V,y_2)\in V^\perp =
\mathrm{range}(\mathcal{A}_{\mathrm{fin}}^\top)+\mathbb{R}\mathbf{1}$.
Expanding, there exist $\alpha_{ij},\beta_{ij}\in\mathbb{R}$ and $c\in\mathbb{R}$
such that for all $(i,j,k)$,
\begin{equation}
\label{eq:surj_annihilator}
y_1(s_1^i)+y_V(v^j)+y_2(s_2^k)
\;=\; c + \alpha_{ij}(s_2^k-s_1^i)
     + \beta_{ij}\bigl(L(s_2^k/s_1^i)-(v^j)^2\bigr).
\end{equation}
Fix $(i,j)$ and vary $k$.  The left-hand side varies only through $y_2(s_2^k)$,
so the right-hand side is $\alpha_{ij}s_2^k + \beta_{ij}L(s_2^k/s_1^i)+\mathrm{const}$
as a function of $s_2^k$.
By Assumption~\ref{ass:non_deg} the functions $\{1,\,s_2-s_1^i,\,L(s_2/s_1^i)-(v^j)^2\}$
are linearly independent on $\mathcal{S}_2$, so we conclude $\alpha_{ij}=\beta_{ij}=0$
and $y_2$ is constant.
Substituting back into~\eqref{eq:surj_annihilator} gives $y_1(s_1^i)+y_V(v^j)=\mathrm{const}$
for all $(i,j)$, which forces $y_1$ and $y_V$ to be constant as well.
The block-sum conditions defining $W_0^\perp$ then force each constant to be zero,
so $(y_1,y_V,y_2)=0$.
\end{proof}

\begin{theorem}[Well-posedness under marginal perturbations]
\label{thm:wellposed}
Suppose $\mu^\star>0$ and Assumption~\ref{ass:non_deg} holds.
Let $(h_1,h_V,h_2)$ be any mass-preserving marginal perturbation~\eqref{eq:perturbation_vector}.
Then there exists $\varepsilon_0>0$ such that for all $|\varepsilon|<\varepsilon_0$
the perturbed feasible set $\mathcal{P}^\varepsilon$~\eqref{eq:primal_eps} is non-empty
and the perturbed problem has a unique minimizer $\mu^\varepsilon$.
\end{theorem}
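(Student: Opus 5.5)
The plan is to construct, for small $|\varepsilon|$, an explicit strictly positive element of $\mathcal{P}^\varepsilon$. Once that is done, Theorem~\ref{thm:existence} applied to the perturbed constraint data delivers the unique minimizer. The construction is a linear correction of $\mu^\star$ that lies in the tangent space $V$ of~\eqref{eq:tangent_V}; surjectivity (Lemma~\ref{lem:surjectivity}) supplies the correction.

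\textbf{Step 1 (feasible direction).} The perturbation $\dot b_{\mathrm{marg}}=(h_1,h_V,h_2)$ is mass-preserving, so it belongs to $W_0$. By Lemma~\ref{lem:surjectivity}, which uses Assumption~\ref{ass:non_deg}, there is $w\in V$ with $\mathcal{A}_{\mathrm{marg}}w=\dot b_{\mathrm{marg}}$. This $w$ also satisfies $\mathcal{A}_{\mathrm{fin}}w=0$ and $\sum_x w(x)=0$. Concretely, one can take the minimum-norm preimage of $\dot b_{\mathrm{marg}}$ under the restricted map.

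\textbf{Step 2 (perturbed feasible point).} Set $\mu_\varepsilon:=\mu^\star+\varepsilon w$. By linearity,
\[
\mathcal{A}_{\mathrm{marg}}\mu_\varepsilon=\nu+\varepsilon\dot b_{\mathrm{marg}},\qquad
\mathcal{A}_{\mathrm{fin}}\mu_\varepsilon=0,\qquad
\textstyle\sum_x\mu_\varepsilon(x)=1 .
\]
Since $\mu^\star>0$ on the finite set $\mathcal{X}$, let $m:=\min_x\mu^\star(x)>0$ and put $\varepsilon_0:=m/(\|w\|_\infty+1)$. For $|\varepsilon|<\varepsilon_0$ we have $\mu_\varepsilon(x)>0$ for every $x$. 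Hence $\mu_\varepsilon\in\Delta(\mathcal{X})$ and $\mu_\varepsilon\in\mathcal{P}^\varepsilon$, so (A2) and (A3) hold for the perturbed problem. This also shows that the perturbed marginals $\nu+\varepsilon\dot b_{\mathrm{marg}}$ are automatically genuine probability vectors.

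\textbf{Step 3 (existence and uniqueness).} Assumption (A1) concerns only the prior, which is unchanged. The perturbed problem therefore has the form~\eqref{eq:primal} with right-hand side $b^\varepsilon=(\nu+\varepsilon\dot b_{\mathrm{marg}};0)$. The proof of Theorem~\ref{thm:existence} never used the specific value of $b$, only compactness, convexity, (A1)--(A3), and strict convexity of $D(\cdot\|\bar\mu)$. It therefore yields a unique, strictly positive minimizer $\mu^\varepsilon$ of Gibbs form.

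\textbf{Main obstacle.} The only nontrivial point is Step 1: the linear-algebraic fact that every mass-preserving marginal shift can be realized by a variation that leaves the nodewise martingale and variance-consistency residuals at zero. This is exactly Lemma~\ref{lem:surjectivity}, so once that lemma is invoked, the rest is a routine openness argument exploiting $\mu^\star>0$.
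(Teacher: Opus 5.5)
Your argument follows the paper's own proof step for step: lift $\dot b_{\mathrm{marg}}$ into $V$ via Lemma~\ref{lem:surjectivity}, move along $\mu^\star+\varepsilon w$, use $\mu^\star>0$, then apply compactness and strict convexity. Steps~2--3 are fine. The problem is Step~1, which you rightly flag as the only nontrivial point. The claim that every mass-preserving shift lifts to $V$ is false, and so is Lemma~\ref{lem:surjectivity}, on which the paper's proof also rests.

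Summing the nodewise residuals over all $(i,j)$ shows that every $w\in V$ satisfies, with $a=\mathcal{A}_{\mathrm{marg}}w$,
\[
\ell_M(a):=\langle s_2,a_2\rangle-\langle s_1,a_1\rangle=0,
\qquad
\ell_C(a):=\langle L,a_2\rangle-\langle L,a_1\rangle-\langle v^2,a_V\rangle=0,
\]
where $L(s)=-\tfrac{2}{\tau}\ln s$, so that $L(s_2/s_1)=L(s_2)-L(s_1)$. Hence the image of $\mathcal{A}_{\mathrm{marg}}|_V$ lies in a codimension-two subspace of $W_0$.

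Here is a concrete counterexample. Take $h_1=0$, $h_V=0$, $h_2=\mathbf{1}_{\{s_2^k\}}-\mathbf{1}_{\{s_2^{k'}\}}$ with $k\neq k'$. Every $\mu$ with the perturbed marginals has
\[
\sum_{i,j}(\mathcal{A}_{\mathrm{fin}}\mu)^M_{ij}=\sum_{s_2}s_2\,\mu_2^\varepsilon(s_2)-\sum_{s_1}s_1\,\mu_1(s_1)=\varepsilon\,(s_2^k-s_2^{k'})\neq 0,
\]
because the unperturbed means agree ($\mu^\star\in\mathcal{P}$). So $\mathcal{P}^\varepsilon=\emptyset$ for every $\varepsilon\neq 0$, and the statement fails as written.

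The faulty inference in the lemma's proof is ``linear independence $\Rightarrow\alpha_{ij}=\beta_{ij}=0$''. Independence of $\{1,s_2,L(s_2)\}$ on $\mathcal{S}_2$ only makes $\alpha_{ij}\equiv\alpha$ and $\beta_{ij}\equiv\beta$ node-independent, not zero. Indeed $y=\bigl(-\alpha s_1-\beta L(s_1),\,-\beta v^2,\,\alpha s_2+\beta L(s_2)\bigr)$ solves \eqref{eq:surj_annihilator} for every $(\alpha,\beta)$, and these two directions are exactly $\ell_M$ and $\ell_C$.

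The repair is to add the compatibility conditions $\ell_M(h)=\ell_C(h)=0$ to the hypotheses. They are also necessary, because $\ell_M$ and $\ell_C$ vanish on the marginals of every element of $\mathcal{P}$ and of $\mathcal{P}^\varepsilon$. Redoing the annihilator computation correctly under Assumption~\ref{ass:non_deg} gives
\[
\mathcal{A}_{\mathrm{marg}}(V)=W_0\cap\ker\ell_M\cap\ker\ell_C .
\]
So the lift $w$ exists for such $h$, and your Steps~2--3 then go through verbatim.

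This restriction costs nothing in the paper's applications. The spot and vol bumps move the $T_1$ and $T_2$ SPX means together. Step~3 of Section~\ref{subsec:ssr_compatibility} chooses $\delta F_V$ precisely so that $\langle v^2,h_V\rangle=\langle L,h_2\rangle-\langle L,h_1\rangle$.

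The same two functionals also correspond to null directions of $H$ that gauge fixing does not remove. For example, $u_1=-s_1$, $u_2=s_2$, $\Delta_M\equiv-1$ gives $\langle\theta,T\rangle\equiv 0$. So Lemma~\ref{lem:range_db}, and the claim $H\succ 0$ in Proposition~\ref{prop:fisher_hessian_cov}, need the same correction.
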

\begin{proof}
Since $\dot b_{\mathrm{marg}}=(h_1,h_V,h_2)\in W_0$ (by mass preservation), Lemma~\ref{lem:surjectivity}
yields $\dot\mu\in V$ with $\mathcal{A}_{\mathrm{marg}}\dot\mu=\dot b_{\mathrm{marg}}$.
Set $\mu(\varepsilon):=\mu^\star+\varepsilon\dot\mu$.
By construction, $\mathcal{A}_{\mathrm{marg}}\mu(\varepsilon)=\nu+\varepsilon\dot b_{\mathrm{marg}}$,
$\mathcal{A}_{\mathrm{fin}}\mu(\varepsilon)=0$ (since $\dot\mu\in\ker\mathcal{A}_{\mathrm{fin}}$),
and $\mathbf{1}^\top\mu(\varepsilon)=1$.
Since $\mu^\star>0$, there exists $\varepsilon_0>0$ such that $\mu(\varepsilon)>0$
for all $|\varepsilon|<\varepsilon_0$, giving $\mu(\varepsilon)\in\mathcal{P}^\varepsilon$.
Uniqueness of the minimizer follows from strict convexity of $D(\cdot\|\bar\mu)$
on the convex set $\mathcal{P}^\varepsilon$.
\end{proof}

\subsection{Perturbed Dual Variables And MOT Gibbs Form}

By Corollary~\ref{cor:dual_optimality}, the optimizer $\mu^\varepsilon$ takes the full
MOT Gibbs form.  There exist perturbed dual parameters
$\theta^\varepsilon = \bigl(u_1^\varepsilon,u_V^\varepsilon,u_2^\varepsilon,
\{\Delta_M^\varepsilon(s_1,v)\},\{\Delta_C^\varepsilon(s_1,v)\}\bigr)$
such that
\begin{equation}
\label{eq:mu_eps_scaling}
\mu^\varepsilon(s_1,v,s_2)
\;\propto\;
\bar\mu(s_1,v,s_2)\,
\exp\!\Bigl(
  u_1^\varepsilon(s_1)+u_V^\varepsilon(v)+u_2^\varepsilon(s_2)
  +\Delta_M^\varepsilon(s_1,v)(s_2-s_1)
  +\Delta_C^\varepsilon(s_1,v)\bigl(L(s_2/s_1)-v^2\bigr)
\Bigr).
\end{equation}
At $\varepsilon=0$ this reduces to the base coupling $\mu^\star$ of
\eqref{eq:MOT_gibbs}.  It is convenient to collect the dual parameters as
\[
\theta^\varepsilon = \theta^\star + \varepsilon\,\dot\theta^0 + o(\varepsilon),
\]
where $\theta^\star = (u^\star_1,u^\star_V,u^\star_2,\{\Delta_M^{\star,ij}\},\{\Delta_C^{\star,ij}\})$ are the base calibrated dual parameters.
Working in \emph{scaling variables}
$a^\varepsilon(s_1)=e^{u_1^\varepsilon(s_1)}$,
$b^\varepsilon(v)=e^{u_V^\varepsilon(v)}$,
$c^\varepsilon(s_2)=e^{u_2^\varepsilon(s_2)}$
the Gibbs form becomes
\begin{equation}
\label{eq:scaling_form}
\mu^\varepsilon(s_1,v,s_2)\propto
\bar\mu(s_1,v,s_2)\,a^\varepsilon(s_1)b^\varepsilon(v)c^\varepsilon(s_2)\,
e^{\Delta_M^\varepsilon(s_1,v)(s_2-s_1)+\Delta_C^\varepsilon(s_1,v)(L(s_2/s_1)-v^2)}.
\end{equation}

\paragraph{Gauge fixing.}
Only the marginal potentials $(u_1^\varepsilon,u_V^\varepsilon,u_2^\varepsilon)$
carry gauge freedom; the financial multipliers are uniquely determined.  We fix
\begin{equation}
\label{eq:gauge}
\sum_{s_1}\mu_1(s_1)\,u_1^\varepsilon(s_1)=0,\qquad
\sum_{v}\mu_V(v)\,u_V^\varepsilon(v)=0,
\end{equation}
pinning down the full parameter vector $\theta^\varepsilon$ uniquely.

\subsection{Differentiability Of The Entropic MOT Projection}

We prove that the full MOT optimizer $(\mu^\varepsilon,\theta^\varepsilon)$ depends
smoothly on $\varepsilon$ and derive the first-order linear system.

The log-partition function $\Lambda(\theta)$ of~\eqref{eq:logpartition} is
$C^\infty$ and strictly convex on the gauge-fixed parameter space (by
Proposition~\ref{prop:fisher_hessian_cov}).  The full dual objective is
\[
\mathcal{D}(\theta;\varepsilon)
:= \langle\nu^\varepsilon,\,\theta_{\mathrm{marg}}\rangle - \Lambda(\theta),
\]
where $\nu^\varepsilon = (\mu_1^\varepsilon,\mu_V^\varepsilon,\mu_2^\varepsilon)$ and
the financial components $(\Delta_M,\Delta_C)$ enter only through $\Lambda(\theta)$.

\begin{lemma}[Structural constant-rank of $H$]
\label{lem:const_rank}
Define the null subspace
$N:=\{v\in\mathbb{R}^K : \langle v,T(x)\rangle=\mathrm{const}\ \bar\mu\text{-a.e.}\}$.
Then $\ker H(\theta)=N$ for every $\theta\in\mathbb{R}^K$, so
$\mathrm{rank}\,H(\theta)$ is constant.
\end{lemma}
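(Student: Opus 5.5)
The argument rests on the covariance identity of Proposition~\ref{prop:fisher_hessian_cov}(ii), which holds at every $\theta$, not only at $\theta^\star$. Fix $\theta\in\mathbb{R}^K$ and $c\in\mathbb{R}^K$. That identity gives
\[
c^\top H(\theta)\,c=\mathrm{Var}_{\mu_\theta}\bigl(\langle c,T\rangle\bigr).
\]
Since $H(\theta)\succeq 0$, the standard fact for positive semidefinite matrices gives $c\in\ker H(\theta)$ if and only if $c^\top H(\theta)c=0$. One direction is trivial. For the other, write $H=R^\top R$; then $\|Rc\|^2=0$ forces $Rc=0$, hence $Hc=0$. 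So $\ker H(\theta)$ is exactly the set of $c$ for which $\langle c,T\rangle$ has zero variance under $\mu_\theta$.

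The second step characterises zero variance. A function has zero variance under $\mu_\theta$ exactly when it is constant on $\mathrm{supp}(\mu_\theta)$. From the Gibbs form, $\mu_\theta(x)=\bar\mu(x)e^{\langle\theta,T(x)\rangle-\Lambda(\theta)}$, and the exponential factor is strictly positive and finite. Hence $\mathrm{supp}(\mu_\theta)=\mathrm{supp}(\bar\mu)$ for every $\theta$. Therefore $c\in\ker H(\theta)$ if and only if $\langle c,T(x)\rangle$ is constant $\bar\mu$-a.e., which is the definition of $N$. Under (A1), $\bar\mu>0$, so this simply means constant on all of $\mathcal{X}$.

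Since $N$ is defined without reference to $\theta$, we get $\ker H(\theta)=N$ for all $\theta$. Rank–nullity then gives $\mathrm{rank}\,H(\theta)=K-\dim N$, which is constant.

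The main obstacle is the support-invariance step. Everything else is a consequence of the covariance identity, and it is precisely this step that prevents the kernel from jumping as $\theta$ varies: the Gibbs tilt is mutually absolutely continuous with the prior, so the null-variance condition never changes. I would state this explicitly, then note as a remark that $N$ consists of the gauge directions plus, if Assumption~\ref{ass:non_deg} fails, any fibrewise affine dependencies. This links the lemma to Lemmas~\ref{lem:range_gG} and~\ref{lem:range_db}, so that the pseudoinverse $H^+$ varies smoothly in $\theta$ for the subsequent implicit-function argument.
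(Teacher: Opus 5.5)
Your proof is correct and is essentially the paper's argument: $H(\theta)=\mathrm{Cov}_{\mu_\theta}(T)$, and the Gibbs tilt $\mu_\theta$ is equivalent to $\bar\mu$, so the zero-variance directions, hence $\ker H(\theta)$, do not depend on $\theta$. You also make explicit the positive-semidefinite step $c^\top H c=0\Rightarrow Hc=0$, which the paper leaves implicit. One caution about your proposed closing remark, which the lemma does not need: even when Assumption~\ref{ass:non_deg} holds, $N$ is strictly larger than the constant shifts of $(u_1,u_V,u_2)$. Taking $\Delta_M\equiv\alpha$, $\Delta_C\equiv\beta$, $u_1(s_1)=\alpha s_1+\beta L(s_1)$, $u_V(v)=\beta v^2$ and $u_2(s_2)=-\alpha s_2-\beta L(s_2)$ gives $\langle c,T\rangle\equiv 0$, because $L(s_2/s_1)=L(s_2)-L(s_1)$.
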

\begin{proof}
$H(\theta)=\mathrm{Cov}_{Q_\theta}(T)$ with $Q_\theta\propto\bar\mu\,e^{\langle\theta,T\rangle}$.
Since $\bar\mu>0$, $Q_\theta\sim\bar\mu$ for every $\theta$.
Hence $\mathrm{Var}_{Q_\theta}(\langle v,T\rangle)=0$ iff $\langle v,T(x)\rangle$
is constant $Q_\theta$-a.s., iff it is constant $\bar\mu$-a.s., i.e.\ $v\in N$.
\end{proof}

\begin{theorem}[Differentiability of MOT optimal parameters and coupling]
\label{thm:differentiability}
Fix the gauge~\eqref{eq:gauge} and assume $\mu^\star>0$.
Then there exists $\varepsilon_0>0$ such that on $(-\varepsilon_0,\varepsilon_0)$:
\begin{enumerate}
\item $\varepsilon\mapsto\theta^\varepsilon$ is $C^\infty$,
\item $\varepsilon\mapsto\mu^\varepsilon$ is $C^\infty$ entrywise,
\item the first-order variation $\dot\theta^0 := d\theta^\varepsilon/d\varepsilon|_{\varepsilon=0}$
      is the unique element of $N^\perp=\mathrm{range}(H)$ solving
      \begin{equation}
      \label{eq:linear_system_full}
      H\,\dot\theta^0 = \dot b,
      \end{equation}
      where $H=\nabla^2_\theta\Lambda(\theta^\star)$ is the full MOT Fisher matrix
      \eqref{eq:MOT_fisher_block},
      $\dot b = (h_1,h_V,h_2,0_{N_1N_V},0_{N_1N_V})$
      is the perturbation vector~\eqref{eq:perturbation_vector},
      and the minimum-norm solution is $\dot\theta^0=H^+\dot b$.
\end{enumerate}
\end{theorem}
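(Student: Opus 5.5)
The plan is to apply the implicit function theorem to the first-order conditions of the dual problem, after restricting to a subspace on which the Hessian is invertible. By Corollary~\ref{cor:dual_optimality}, applied to the perturbed problem (admissible by Theorem~\ref{thm:wellposed}), the optimal $\theta^\varepsilon$ satisfies
\[
F(\theta,\varepsilon):=\nabla_\theta\Lambda(\theta)-b-\varepsilon\dot b=0,
\]
where $b=(\nu;0)$. The map $F$ is $C^\infty$ jointly in $(\theta,\varepsilon)$ because $\Lambda$ is a finite log-sum-exp, and it is affine in $\varepsilon$. Its $\theta$-Jacobian is $H(\theta)$, which may be singular, so the implicit function theorem cannot be applied on all of $\mathbb{R}^K$.

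\textbf{Reducing to an invertible system.} By Lemma~\ref{lem:const_rank}, $\ker H(\theta)=N$ for every $\theta$. For $c\in N$ the function $\langle c,T\rangle$ is constant, so $\Lambda(\theta+c)=\Lambda(\theta)+\mathrm{const}$ and $\nabla\Lambda(\theta+c)=\nabla\Lambda(\theta)$. Hence $\nabla\Lambda$ factors through the quotient by $N$, and it suffices to restrict $\theta$ to the affine slice $\theta^\star+N^\perp$. Equivalently, one can use the gauge-fixed slice~\eqref{eq:gauge}, which is transversal to $N$ and related to $\theta^\star+N^\perp$ by a smooth reparametrization.

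On this slice, define $\tilde F(\eta,\varepsilon):=P_{N^\perp}F(\theta^\star+\eta,\varepsilon)$ for $\eta\in N^\perp$. Two facts make $\tilde F$ equivalent to $F$:
\begin{itemize}
\item $\nabla\Lambda(\theta)-b\in\mathrm{range}(H)=N^\perp$ for every $\theta$. The proof is the argument of Lemma~\ref{lem:range_gG}: for $c\in N$, $\langle c,\mathbb{E}_{\mu_\theta}T\rangle=\gamma=\langle c,\mathbb{E}_{\mu^\star}T\rangle=\langle c,b\rangle$.
\item $\dot b\in N^\perp$ by Lemma~\ref{lem:range_db}.
\end{itemize}
So $F$ already takes values in $N^\perp$, and $\tilde F=0$ if and only if $F=0$. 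The Jacobian of $\tilde F$ in $\eta$ at $(0,0)$ is $H|_{N^\perp}:N^\perp\to N^\perp$. This map is symmetric with kernel $N$, so it is invertible on $N^\perp$.

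\textbf{Applying the implicit function theorem.} The implicit function theorem gives $\varepsilon_0>0$ and a unique $C^\infty$ branch $\eta(\varepsilon)$ with $\tilde F(\eta(\varepsilon),\varepsilon)=0$. Set $\theta^\varepsilon=\theta^\star+\eta(\varepsilon)$. Then $\mu_{\theta^\varepsilon}$ is a Gibbs measure satisfying $\mathcal{A}\mu=b+\varepsilon\dot b$, so by the KKT sufficiency and uniqueness part of Theorem~\ref{thm:existence} it equals $\mu^\varepsilon$. The entrywise smoothness of $\mu^\varepsilon$ follows because $\mu_\theta(x)=\bar\mu(x)e^{\langle\theta,T(x)\rangle-\Lambda(\theta)}$ is smooth in $\theta$.

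\textbf{Deriving the linear system.} Differentiating $F(\theta^\varepsilon,\varepsilon)=0$ at $\varepsilon=0$ gives $H\dot\theta^0=\dot b$, with $\dot\theta^0\in N^\perp$. Uniqueness within $N^\perp$ follows from injectivity of $H$ on $N^\perp$. Since the minimum-norm solution of a consistent symmetric system is the one orthogonal to the kernel, $\dot\theta^0=H^+\dot b$.

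\textbf{Main obstacle.} The delicate step is reconciling the gauge~\eqref{eq:gauge} with the slice $N^\perp$. The statement asserts $\dot\theta^0\in N^\perp$, but the weighted gauge uses $\mu_1$ and $\mu_V$, and it is not orthogonal to $N$ in the Euclidean sense. I would handle this by proving smoothness on the $N^\perp$ slice, as above. I would then note that the gauge-fixed parameter differs from it only by a smooth $N$-valued shift, which affects neither $\mu^\varepsilon$ nor $H\dot\theta$. I would therefore interpret item~3 as referring to the $N^\perp$ representative, and remark that under Assumption~\ref{ass:non_deg} with gauge fixing, $N$ is trivial and the distinction disappears.
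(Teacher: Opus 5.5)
Your argument follows the same plan as the paper's proof. You apply the implicit function theorem to $\nabla_\theta\Lambda(\theta)=b+\varepsilon\dot b$ on $N^\perp$, take the $\theta$-independent kernel from Lemma~\ref{lem:const_rank} and $\dot b\in N^\perp$ from Lemma~\ref{lem:range_db}, and differentiate at $\varepsilon=0$. Your extra checks are genuine improvements: that $\nabla\Lambda(\theta)-b\in N^\perp$ for all $\theta$, that the IFT branch coincides with $\mu^\varepsilon$ by Gibbs/KKT sufficiency, and the warning about \eqref{eq:gauge}.

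The gap, which you inherit from the paper, is the step $\dot b\in N^\perp$ for \emph{every} mass-preserving $h$. Besides the three block-constant directions, $N$ contains
\[
c_M=\bigl(s_1,\;0,\;-s_2,\;\mathbf{1},\;0\bigr),\qquad
c_C=\bigl(L(s_1),\;v^2,\;-L(s_2),\;0,\;\mathbf{1}\bigr),
\]
with blocks ordered $(u_1,u_V,u_2,\Delta_M,\Delta_C)$. Summing the delta statistics over all nodes gives $s_2-s_1$, and summing the vega statistics gives $L(s_2)-L(s_1)-v^2$. Both are combinations of marginal indicators, so $\langle c_M,T\rangle\equiv\langle c_C,T\rangle\equiv 0$. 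Hence $\dot b\perp N$ additionally requires
\[
\langle s_1,h_1\rangle=\langle s_2,h_2\rangle,
\qquad
\langle L,h_2\rangle-\langle L,h_1\rangle=\langle v^2,h_V\rangle ,
\]
and mass preservation does not imply these. They are necessary, not technical. Every $\mu\in\mathcal{P}^\varepsilon$ satisfies $\mathbb{E}_\mu S_1=\mathbb{E}_\mu S_2$ and $\mathbb{E}_\mu[L(S_2/S_1)]=\mathbb{E}_\mu V^2$, so if either identity fails then $\mathcal{P}^\varepsilon=\emptyset$ for all $\varepsilon\neq0$. Lemma~\ref{lem:surjectivity} and Theorem~\ref{thm:wellposed} fail for the same reason, so citing them does not help. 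In your construction this is exactly where $\tilde F=0\Leftrightarrow F=0$ breaks. $\tilde F$ still has a smooth zero branch, but $P_NF(\theta,\varepsilon)=-\varepsilon P_N\dot b\neq0$ for every $\theta$ and every $\varepsilon\neq0$.

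The repair is to add the two identities as hypotheses. The second is exactly what Step~3 of the SSR construction enforces. The first holds for the spot and vol bumps of Section~\ref{sec:financial_perturbations}, which move both SPX means by the same amount. With these hypotheses your proof goes through as written. Moreover, the IFT branch itself supplies a strictly positive feasible coupling, so you can drop the appeal to Theorem~\ref{thm:wellposed}.

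Your closing remark should also go. Under Assumption~\ref{ass:non_deg}, $N$ is exactly the five-dimensional span of the three block constants, $c_M$ and $c_C$. Since \eqref{eq:gauge} imposes only two conditions, the gauge slice meets $N$ in a three-dimensional subspace and is not a complement of $N$; the ``equivalently'' in your reduction step is false for the same reason. $N$ is never trivial, so your choice to work with the $N^\perp$ representative and with $H^+$ is necessary rather than cosmetic.
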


\begin{proof}
The full KKT optimality conditions for the perturbed dual read
\begin{equation}
\label{eq:FOC_dual}
\nabla_\theta\Lambda(\theta^\varepsilon) = b^\varepsilon,
\end{equation}
where $b^\varepsilon = (\nu^\varepsilon, 0_{N_1N_V}, 0_{N_1N_V})$ stacks the
perturbed marginal targets and the (unchanged) financial targets.

By Lemma~\ref{lem:const_rank}, $\ker H(\theta)=N$ is independent of $\theta$,
so $\Lambda$ is invariant under $\theta\mapsto\theta+v$ for any $v\in N$ and
descends to a $C^\infty$ strictly convex function $\bar\Lambda$ on $N^\perp\cong\mathbb{R}^K/N$.
The compression of $H$ to $N^\perp$ is invertible by
Proposition~\ref{prop:fisher_hessian_cov}.
By Lemma~\ref{lem:range_db}, $b^\varepsilon - b^0\in N^\perp$ for small $|\varepsilon|$,
so $\nabla\bar\Lambda([\theta^\varepsilon])=[b^\varepsilon]$ is a well-posed equation on $N^\perp$.
The classical implicit function theorem applied to this equation on $N^\perp$
yields a unique $C^\infty$ curve $\varepsilon\mapsto\theta^\varepsilon\in N^\perp$.
Under Assumption~\ref{ass:non_deg}, $N=\{0\}$ and this reduces to the
standard IFT statement.

Differentiating~\eqref{eq:FOC_dual} at $\varepsilon=0$ gives
$H\,\dot\theta^0 = \dot b$,
where $\dot b = db^\varepsilon/d\varepsilon|_{\varepsilon=0}$ is the
perturbation vector~\eqref{eq:perturbation_vector}.
The minimum-norm solution $\dot\theta^0 = H^+\dot b$ gives the first-order change
in \emph{all} dual parameters: marginal potentials \emph{and} financial
multipliers simultaneously adjust to maintain the constraints.
Differentiability of $\mu^\varepsilon$ follows from the smooth Gibbs map~\eqref{eq:mu_eps_scaling}.
\end{proof}

\begin{remark}
The financial multiplier components of $\dot\theta^0$, namely
$\dot\Delta_M^0$ and $\dot\Delta_C^0$, are determined by the off-diagonal
blocks $H_{\mathrm{marg},M}$ and $H_{\mathrm{marg},C}$ of the Fisher matrix.
They quantify how the delta and vega hedges must re-adjust to preserve
martingality and variance consistency as the SPX/VIX marginals change.
\end{remark}

\begin{remark}[Lift-independence of the risk formula]
\label{rem:lift_independence}
The curve $\theta^\varepsilon\in N^\perp$ constructed above is a canonical
representative: any other solution of $\nabla_\theta\Lambda(\theta^\varepsilon)=b^\varepsilon$
differs by an element of $N$.
Since $g_G=\mathrm{Cov}_{\mu^\star}(G,T)\in\mathrm{range}(H)=N^\perp$ by
Lemma~\ref{lem:range_gG}, the inner product $g_G^\top\theta^\varepsilon$ is
independent of the choice of representative and the risk formula
$\Pi'(0)=g_G^\top H^+\dot b$ is unambiguous.
\end{remark}

\subsection{Risk Representation: Gâteaux Derivative Of Expectations}
\label{subsec:risk_rep}

Let $G:\mathcal{X}\to\mathbb{R}$ be any payoff (bounded is automatic since $\mathcal{X}$ finite).
Define the model price under calibration $\mu^\varepsilon$ by
\[
\Pi(\varepsilon) := \mathbb{E}_{\mu^\varepsilon}[G] = \sum_{x\in\mathcal{X}} G(x)\mu^\varepsilon(x).
\]

\begin{theorem}[General risk representation]
\label{thm:risk_representation}
Let $G:\mathcal{X}\to\mathbb{R}$ be any payoff and let
$\Pi(\varepsilon)=\mathbb{E}_{\mu^{\varepsilon}}[G]$.
Under the assumptions of Theorem~\ref{thm:differentiability},
$\varepsilon\mapsto\Pi(\varepsilon)$ is $C^1$ and
\begin{equation}
\label{eq:risk_representation}
\Pi'(0)
=\langle h_{1},\psi_{1}\rangle
+\langle h_{V},\psi_{V}\rangle
+\langle h_{2},\psi_{2}\rangle,
\end{equation}
where $\psi=(\psi_{1},\psi_{V},\psi_{2})$ are the \emph{marginal influence
functions} of $G$.  The full influence vector
\begin{equation}
\label{eq:influence_full}
\Psi := H^+\, g_{G}
\end{equation}
has components indexed by all sufficient statistics~\eqref{eq:MOT_stats},
where $H=\nabla^2_\theta\Lambda(\theta^\star)$ is the full MOT Fisher
matrix~\eqref{eq:MOT_fisher_block} and
\[
(g_{G})_{k} = \operatorname{Cov}_{\mu^{\star}}(G,\,T_{k})
\]
runs over $\{T_k\}$ = marginal indicators $\cup$ delta payoffs $\cup$ vega payoffs.
The marginal influence functions $\psi$ are the components of $\Psi$
corresponding to the marginal sufficient statistics.
\end{theorem}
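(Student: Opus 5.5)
The plan is to differentiate the Gibbs expectation along the smooth curve $\theta^\varepsilon$ from Theorem~\ref{thm:differentiability} and then use the linear system~\eqref{eq:linear_system_full} to move $H^+$ from the perturbation onto the payoff side.

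\emph{Regularity.} By Theorem~\ref{thm:differentiability}, $\varepsilon\mapsto\mu^\varepsilon$ is $C^\infty$ entrywise on $(-\varepsilon_0,\varepsilon_0)$. Since $\mathcal{X}$ is finite, $\Pi(\varepsilon)=\sum_x G(x)\mu^\varepsilon(x)$ is a finite linear combination of these entries. It is therefore $C^\infty$, in particular $C^1$.

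\emph{Derivative of the Gibbs map.} Write $\mu^\varepsilon=\mu_{\theta^\varepsilon}$ with $\mu_\theta(x)=\bar\mu(x)e^{\langle\theta,T(x)\rangle-\Lambda(\theta)}$. Differentiate by the chain rule, using the identity already computed in the proof of Proposition~\ref{prop:fisher_hessian_cov}:
\[
\partial_{\theta_l}\mu_\theta(x)=\mu_\theta(x)\bigl(T_l(x)-\mathbb{E}_{\mu_\theta}[T_l]\bigr).
\]
At $\varepsilon=0$ this gives
\[
\Pi'(0)=\sum_l\dot\theta^0_l\sum_x G(x)\mu^\star(x)\bigl(T_l(x)-\mathbb{E}_{\mu^\star}T_l\bigr)=\langle g_G,\dot\theta^0\rangle ,
\]
where $(g_G)_l=\mathrm{Cov}_{\mu^\star}(G,T_l)$.

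\emph{Transfer to the perturbation side.} Substitute $\dot\theta^0=H^+\dot b$ from~\eqref{eq:linear_system_full} to get $\Pi'(0)=g_G^\top H^+\dot b$. Since $H$ is symmetric, $H^+$ is symmetric, so this equals $\langle H^+g_G,\dot b\rangle=\langle\Psi,\dot b\rangle$.

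The main obstacle is the degenerate case, where $\dot\theta^0$ is determined only modulo $N=\ker H$. There I would argue as follows.
\begin{itemize}
\item Any solution of $H\dot\theta=\dot b$ differs from $H^+\dot b$ by an element of $N$.
\item By Lemma~\ref{lem:range_gG}, $g_G\perp N$, so $\langle g_G,\dot\theta\rangle$ does not depend on which solution is chosen.
\item Lemma~\ref{lem:range_db} guarantees $\dot b\in\mathrm{range}(H)$, so the system is consistent and $HH^+\dot b=\dot b$. This is exactly Remark~\ref{rem:lift_independence}.
\end{itemize}
Under Assumption~\ref{ass:non_deg} this issue disappears, since $H^+=H^{-1}$.

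\emph{Block decomposition.} By~\eqref{eq:perturbation_vector}, $\dot b=(h_1,h_V,h_2,0,\dots,0)$. Its financial components vanish, so the components of $\Psi$ indexed by the delta and vega statistics drop out of the pairing. Splitting the remaining marginal block according to the $S_1$, $V$ and $S_2$ indicator statistics of~\eqref{eq:MOT_stats} gives
\[
\Pi'(0)=\langle h_1,\psi_1\rangle+\langle h_V,\psi_V\rangle+\langle h_2,\psi_2\rangle,
\]
which is~\eqref{eq:risk_representation}.

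Gauge fixing causes no trouble here. The gauge directions lie in $N$, or are removed by the gauge conditions, and $h$ is mass-preserving. So adding constants $\gamma_i\mathbf{1}$ to the $\psi_i$ leaves the pairing unchanged, which means the marginal influence functions are well defined up to exactly the ambiguity the statement allows.
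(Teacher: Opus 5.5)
Your proposal is correct and follows essentially the same route as the paper's proof. You differentiate the Gibbs map along $\theta^\varepsilon$ to get $\Pi'(0)=\langle g_G,\dot\theta^0\rangle$, substitute $\dot\theta^0=H^+\dot b$, move the symmetric $H^+$ onto $g_G$, and read off the marginal block of $\dot b$. Your explicit handling of $\ker H$ via Lemmas~\ref{lem:range_gG} and~\ref{lem:range_db} is what makes this rigorous, and it is needed in every case. The aside that $H^+=H^{-1}$ under Assumption~\ref{ass:non_deg} is not literally true: each block of marginal indicators sums to $1$ on $\mathcal{X}$, so, for example, $(0,0,\mathbf{1}_{N_2},0,\dots,0)$ lies in $\ker H$ even after the gauge conditions~\eqref{eq:gauge}.
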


\begin{proof}
Since $\mu^{\varepsilon}$ is $C^{1}$ in $\varepsilon$ by Theorem~\ref{thm:differentiability} and
$\mathcal{X}$ is finite, the price
\[
\Pi(\varepsilon)=\sum_{x\in\mathcal{X}} G(x)\,\mu^{\varepsilon}(x)
\]
is $C^{1}$ and
\[
\Pi'(0)=\sum_{x} G(x)\,\dot{\mu}^{0}(x),
\]
where $\dot{\mu}^{0}$ denotes $\frac{d}{d\varepsilon}\mu^{\varepsilon}\big|_{\varepsilon=0}$.

From the full MOT Gibbs representation \eqref{eq:mu_eps_scaling},
\[
\mu^{\varepsilon}(x)
= \exp\!\bigl(\log\bar{\mu}(x)
  + \langle\theta^\varepsilon,T(x)\rangle
  - \Lambda(\theta^{\varepsilon})\bigr),
\]
the standard exponential-family differentiation gives
\[
\frac{\dot{\mu}^{0}(x)}{\mu^{\star}(x)}
= \langle\dot\theta^0,T(x)\rangle
 - \mathbb{E}_{\mu^{\star}}[\langle\dot\theta^0,T\rangle],
\]
where $T(x)$ is the full sufficient statistic vector~\eqref{eq:MOT_stats}
and $\dot\theta^0 = d\theta^\varepsilon/d\varepsilon|_{\varepsilon=0}$.
Substituting into $\Pi'(0)=\sum_x G(x)\dot\mu^0(x)$ gives
\begin{equation}
\label{eq:pi_prime_g}
\Pi'(0)
= \langle\dot\theta^0,\,\operatorname{Cov}_{\mu^{\star}}(G,T)\rangle
= \langle\dot\theta^0,\,g_G\rangle,
\end{equation}
where $(g_G)_k = \operatorname{Cov}_{\mu^\star}(G,T_k)$ over all sufficient statistics.

From Theorem~\ref{thm:differentiability}, $\dot\theta^0$ solves
$H\,\dot\theta^0=\dot b$ with $\dot b$ as in~\eqref{eq:perturbation_vector},
so $\dot\theta^0=H^+\dot b$.  Substituting into \eqref{eq:pi_prime_g},
\[
\Pi'(0)
= \langle\dot\theta^0,\,g_G\rangle
= \langle H^+\dot b,\,g_G\rangle
= \langle\dot b,\,H^+g_G\rangle
= \langle\dot b,\,\Psi\rangle,
\]
where $\Psi=H^+g_G$.  Since $\dot b$ has non-zero entries only in the
marginal block, $\Pi'(0)=\langle h_1,\psi_1\rangle+\langle h_V,\psi_V\rangle
+\langle h_2,\psi_2\rangle$ with $\psi$ the marginal components of $\Psi$.
\[
\Pi'(0)
= \langle h_{1},\psi_{1}\rangle
 + \langle h_{V},\psi_{V}\rangle
 + \langle h_{2},\psi_{2}\rangle.
\]
This completes the proof.
\end{proof}

\begin{remark}[Practitioner interpretation]
Equation \eqref{eq:risk_representation} states: to first order, the price sensitivity of any payoff $G$ under marginal shocks
is obtained by pairing the marginal shock directions with a set of influence functions computed from the calibrated Gibbs coupling.
This converts ``bump-and-revalue'' into a mathematically controlled linear response.
\end{remark}
\subsection{Second-Order Error Bound}
\label{subsec:second_order}

The linear response formula derived in Theorem~\ref{thm:risk_representation}
is a first-order approximation. The following corollary quantifies its
accuracy, establishing that the approximation error is $O(\varepsilon^2)$
in the perturbation size.

\begin{corollary}[Second-order error bound]
\label{cor:second_order}
Under the assumptions of Theorem~\ref{thm:differentiability}, let
$G:\mathcal{X}\to\mathbb{R}$ be any payoff.
Then there exists $C<\infty$ such that for all sufficiently small
$|\varepsilon|$,
\begin{equation}
\label{eq:second_order_bound}
\bigl|\Pi(\varepsilon)-\Pi(0)-\varepsilon\langle h,\psi\rangle\bigr|
\;\le\; C\varepsilon^2\,\|h\|^2.
\end{equation}
\end{corollary}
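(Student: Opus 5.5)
The plan is a Taylor expansion with integral remainder for $\Pi(\varepsilon)$, combined with homogeneity in $h$ to obtain the $\|h\|^2$ factor. By Theorem~\ref{thm:differentiability}, $\varepsilon\mapsto\theta^\varepsilon$ and hence $\varepsilon\mapsto\mu^\varepsilon$ are $C^\infty$ on $(-\varepsilon_0,\varepsilon_0)$. Consequently $\Pi(\varepsilon)=\sum_x G(x)\mu^\varepsilon(x)$ is $C^2$. By Theorem~\ref{thm:risk_representation}, $\Pi'(0)=\langle h,\psi\rangle$. Taylor's theorem then gives
\[
\Pi(\varepsilon)-\Pi(0)-\varepsilon\langle h,\psi\rangle=\int_0^\varepsilon(\varepsilon-t)\,\Pi''(t)\,dt,
\]
so the left side of \eqref{eq:second_order_bound} is at most $\tfrac12\varepsilon^2\sup_{|t|\le|\varepsilon|}|\Pi''(t)|$. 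The rest of the argument bounds $\Pi''$ uniformly by a constant times $\|h\|^2$.

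To get the quadratic dependence on $h$, I would avoid treating $h$ as fixed. Instead I regard the perturbed target as $b+\eta$ with $\eta\in W_0\times\{0\}$ small, and set up the map $\eta\mapsto\theta(\eta)\in N^\perp$ solving $\nabla\Lambda(\theta)=b+\eta$. The same implicit function argument as in Theorem~\ref{thm:differentiability} applies: $\nabla\bar\Lambda$ has invertible derivative $H|_{N^\perp}$, and Lemma~\ref{lem:range_db} keeps $\eta$ in range. This makes the map $C^\infty$ on a ball $\|\eta\|\le r$. Then $F(\eta):=\mathbb{E}_{\mu_{\theta(\eta)}}[G]$ is $C^\infty$ there, $\Pi(\varepsilon)=F(\varepsilon h)$, and $\Pi''(t)=D^2F(th)[h,h]$. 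Setting $C:=\tfrac12\sup_{\|\eta\|\le r}\|D^2F(\eta)\|$, which is finite after shrinking to a closed ball, yields \eqref{eq:second_order_bound} whenever $|\varepsilon|\|h\|\le r$. This makes explicit what ``sufficiently small'' means, and $C$ is independent of the direction $h$.

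Next I would write $D^2F$ explicitly, to show that $C$ is controlled by quantities available in the paper. Differentiating $D\theta=H(\theta)^+$ once more gives $D^2\theta[\eta,\eta]=-H^+\,\nabla^3\Lambda[D\theta\,\eta,D\theta\,\eta]$. Here $\nabla^3\Lambda$ is the third cumulant tensor of $T$ under $\mu_\theta$. The price derivatives are covariances and third joint cumulants of $(G,T)$. Since $\mathcal{X}$ is finite, all these cumulants are bounded by powers of $\max|G|$ and $\max\|T\|$. This gives a bound of the form $C\lesssim \|G\|_\infty\|T\|_\infty^3\sup\|H(\theta)^+\|^2$ over the ball.

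The main obstacle is the uniform control of $H(\theta)^+$ along the perturbed path. I would handle it with Lemma~\ref{lem:const_rank}: the kernel of $H(\theta)$ equals $N$ for all $\theta$, so the smallest nonzero eigenvalue of $H(\theta)$ restricted to $N^\perp$ is a continuous positive function. It is therefore bounded below on the compact image of the closed ball. This keeps the pseudoinverse continuous and bounded, and justifies taking the supremum defining $C$.
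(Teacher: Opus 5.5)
Your proof is correct. It shares the paper's skeleton: a second-order Taylor expansion of $\Pi$, with $\Pi''$ controlled through $\dot\theta=H^+\dot b$ and $\ddot\theta=-H^+\nabla^3\Lambda[\dot\theta,\dot\theta]$. Where it differs is the one point where the corollary has real content, namely that $C$ does not depend on $h$.

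The paper works along the single curve $\varepsilon\mapsto\theta^\varepsilon$ attached to a fixed $h$. It asserts that $\|H^{+,\xi}\|$ and $\|\nabla^3\Lambda(\theta^\xi)\|$ are bounded on $|\xi|\le\varepsilon_0$ by an $h$-independent constant. But $\varepsilon_0$ and the path $\theta^\xi$ both come from an implicit-function argument for that particular $h$. The compact set on which these norms are bounded therefore depends on $h$, and the paper's argument does not really justify the $h$-independence of $C$.

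You instead apply the implicit function theorem once to the whole target vector, obtaining a smooth map $\eta\mapsto\theta(\eta)$ on a fixed ball. Then $\Pi(\varepsilon)=F(\varepsilon h)$ and $\Pi''(t)=D^2F(th)[h,h]$. This buys several things:
\begin{itemize}
\item the factor $\|h\|^2$ and the $h$-independence of $C=\tfrac12\sup_{\|\eta\|\le r}\|D^2F(\eta)\|$ become automatic;
\item you obtain an explicit smallness region $|\varepsilon|\,\|h\|\le r$;
\item your appeal to Lemma~\ref{lem:const_rank} supplies the continuity of the pseudoinverse that the paper uses tacitly.
\end{itemize}
The paper's version is more concrete about where the quadratic dependence comes from, but your explicit expression for $D^2F$ recovers that as well.

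One caveat comes from the paper rather than from you: Lemma~\ref{lem:range_db} is false as stated, so the ball cannot be taken in all of $W_0\times\{0\}$. Since $\sum_{i,j}T^{\Delta_M}_{ij}(x)=s_2-s_1$ and $\sum_{i,j}T^{\Delta_C}_{ij}(x)=L(s_2)-L(s_1)-v^2$, the null space $N$ contains vectors that are not pure gauge directions (other entries zero in each case):
\begin{itemize}
\item all $\Delta_M^{ij}=1$, with $u_1(s_1)=s_1$ and $u_2(s_2)=-s_2$;
\item all $\Delta_C^{ij}=1$, with $u_1(s_1)=L(s_1)$, $u_V(v)=v^2$ and $u_2(s_2)=-L(s_2)$.
\end{itemize}
Their gauge-fixed versions, obtained by adding constant blocks, pair with $\dot b$ identically because $\dot b$ is mass-preserving.

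Pairing these vectors with $\dot b$ shows that $\dot b\perp N$ additionally requires $\langle s_1,h_1\rangle=\langle s_2,h_2\rangle$ and $\langle v^2,h_V\rangle=\langle L,h_2\rangle-\langle L,h_1\rangle$. If either condition fails, summing the rows of $\mathcal{A}_{\mathrm{fin}}\mu=0$ shows that $\mathcal{P}^\varepsilon=\emptyset$ for $\varepsilon\neq0$. This also means $N\neq\{0\}$ even under the paper's non-degeneracy assumption. Your general-$N$ treatment is therefore genuinely needed, not just extra generality.

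The repair is to run your argument on a ball in $E:=(W_0\times\{0\})\cap N^\perp$. The map $\theta\mapsto\nabla\Lambda(\theta)-b$ sends $N^\perp$ into $N^\perp$ with invertible derivative, so nothing else changes. Every $h$ for which $\Pi(\varepsilon)$ is defined lies in $E$. The second condition is exactly the variance anchoring that fixes $\delta F_V$ in Section~\ref{subsec:ssr_compatibility}.

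A minor slip: the $D^2\theta$ contribution to $D^2F$ carries three factors of $H^+$ and four of $T$. Your explicit bound should therefore read $\|G\|_\infty\bigl(\|T\|_\infty^2\|H^+\|^2+\|T\|_\infty^4\|H^+\|^3\bigr)$. This is harmless, since $C$ is defined as a supremum.
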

\begin{proof}
By Theorem~\ref{thm:differentiability}, $\varepsilon\mapsto\theta^\varepsilon$
is $C^1$; since $\Lambda$ is $C^\infty$, the implicit function theorem gives
$\varepsilon\mapsto\theta^\varepsilon$ is in fact $C^\infty$, and hence
$\varepsilon\mapsto\mu^\varepsilon$ is $C^\infty$ entry-wise.
Therefore $\Pi(\varepsilon)=\sum_x G(x)\mu^\varepsilon(x)$ is $C^2$ in
$\varepsilon$, and Taylor's theorem gives
\[
\Pi(\varepsilon) = \Pi(0) + \varepsilon\Pi'(0) + \tfrac{1}{2}\varepsilon^2\Pi''(\xi)
\]
for some $\xi$ between $0$ and $\varepsilon$, with $\Pi'(0)=\langle h,\psi\rangle$
by Theorem~\ref{thm:risk_representation}.
It remains to show $|\Pi''(\xi)|\le C\|h\|^2$ for a constant $C$ independent of $h$.
From the exponential family differentiation of $\mu^\varepsilon$, the second
derivative $\ddot\mu^\xi$ consists of terms that are either quadratic in
$\dot\theta^\xi$ or linear in $\ddot\theta^\xi$.
By Theorem~\ref{thm:differentiability}, $\dot\theta^\xi = H^{+,\xi}\dot b$
with $\|\dot b\|=\|(h_1,h_V,h_2,\mathbf{0})\|=\|h\|$, so
$\|\dot\theta^\xi\|\le\|H^{+,\xi}\|\|h\|$.
The second-order term satisfies
$\ddot\theta^\xi = -(H^{\xi})^+\nabla^3\Lambda(\theta^\xi)[\dot\theta^\xi,\dot\theta^\xi]$,
which scales as $\|\dot\theta^\xi\|^2$.
Hence every term in $\ddot\mu^\xi$ is $O(\|h\|^2)$.
Since $\mathcal{X}$ is finite and $\mu^\xi>0$ on the compact interval
$|\xi|\le\varepsilon_0$, the operator norms $\|H^{+,\xi}\|$,
$\|\nabla^3\Lambda(\theta^\xi)\|$ are bounded uniformly in $\xi$,
giving $|\Pi''(\xi)|=|\sum_x G(x)\ddot\mu^\xi(x)|\le C\|h\|^2$
for a constant $C$ depending on $\mu^\star$, $\bar\mu$, $G$ but not on $h$.
Substituting into the Taylor remainder completes the proof.
\end{proof}

\begin{remark}
The bound~\eqref{eq:second_order_bound} confirms that for the small
marginal perturbations arising from realistic market bumps, the linear
response formula $\Pi(0)+\varepsilon\langle h,\psi\rangle$ is highly
accurate: the error is quadratic in both the perturbation size $\varepsilon$
and the shock magnitude $\|h\|$.
\end{remark}

\subsection{Stability Bounds}

\begin{proposition}[Lipschitz stability of potentials and couplings]
\label{prop:lipschitz}
Under the assumptions of Theorem~\ref{thm:differentiability},
there exists $C>0$ such that for all sufficiently small $\varepsilon$:
\[
\|u^\varepsilon-u^0\| \le C|\varepsilon|\|(h_1,h_V,h_2)\|,
\qquad
\|\mu^\varepsilon-\mu^\star\|_1 \le C|\varepsilon|\|(h_1,h_V,h_2)\|.
\]
\end{proposition}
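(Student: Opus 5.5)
The plan is to derive both bounds from the smoothness of $\varepsilon\mapsto\theta^\varepsilon$ in Theorem~\ref{thm:differentiability}, via the mean value theorem along the curve, with a uniform bound on the pseudoinverse Fisher matrix. Fix $\varepsilon_1<\varepsilon_0$. By Theorem~\ref{thm:differentiability}, $\theta^\varepsilon$ is $C^\infty$ on $[-\varepsilon_1,\varepsilon_1]$ and satisfies $\nabla_\theta\Lambda(\theta^\varepsilon)=b^\varepsilon$ with $b^\varepsilon=b+\varepsilon\dot b$. Differentiating this identity at an arbitrary $\xi$ (not only at $0$) gives $H(\theta^\xi)\,\dot\theta^\xi=\dot b$. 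Since $\theta^\xi\in N^\perp$ and $\ker H(\theta^\xi)=N$ for every $\xi$ (Lemma~\ref{lem:const_rank}), this yields $\dot\theta^\xi=H(\theta^\xi)^+\dot b$. Hence $\|\dot\theta^\xi\|\le \|H(\theta^\xi)^+\|\,\|h\|$, because $\|\dot b\|=\|(h_1,h_V,h_2)\|$.

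The key step is a uniform bound $\sup_{|\xi|\le\varepsilon_1}\|H(\theta^\xi)^+\|\le C_1$. First, $\xi\mapsto\theta^\xi$ is continuous on a compact interval, so its image is compact. Next, $H(\theta)=\mathrm{Cov}_{Q_\theta}(T)$ is continuous in $\theta$. By Lemma~\ref{lem:const_rank} its rank is constant, so the smallest nonzero eigenvalue, namely the smallest eigenvalue of the compression to $N^\perp$, is a continuous, strictly positive function of $\theta$. Under Assumption~\ref{ass:non_deg}, $N=\{0\}$ and this is just $\lambda_{\min}(H)$. A continuous positive function attains a positive minimum $\lambda_*$ on the compact image, and therefore $C_1=1/\lambda_*$.

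One point needs care. $C_1$ must not depend on $h$, since the statement is linear in $\|h\|$. I would handle this by homogeneity. The curve for direction $h$ at parameter $\varepsilon$ coincides with the curve for the normalized direction $h/\|h\|$ at parameter $\varepsilon\|h\|$. So it suffices to prove the claim for unit $h$, taking the infimum of $\lambda_*$ over the compact set $\{\theta : \nabla\Lambda(\theta)=b+w,\ \|w\|\le\varepsilon_1\}$. This set is compact because it is the continuous image of a small ball under the inverse of the smooth diffeomorphism $\nabla\bar\Lambda$ on $N^\perp$, given by the IFT. Getting this uniformity right is the main obstacle.

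With the bound in hand, the mean value inequality gives $\|\theta^\varepsilon-\theta^\star\|\le C_1|\varepsilon|\,\|h\|$, and the bound on $\|u^\varepsilon-u^0\|$ follows because $u$ is a sub-block of $\theta$. For the coupling, differentiate the Gibbs map as in the proof of Theorem~\ref{thm:risk_representation}:
\[
\dot\mu^\xi(x)=\mu^\xi(x)\bigl(\langle\dot\theta^\xi,T(x)\rangle-\mathbb{E}_{\mu^\xi}[\langle\dot\theta^\xi,T\rangle]\bigr).
\]
This gives $\|\dot\mu^\xi\|_1\le 2\max_x\|T(x)\|\,\|\dot\theta^\xi\|$, which is finite because $\mathcal{X}$ is finite and $\mu^\xi$ is a probability measure. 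Integrating from $0$ to $\varepsilon$ gives $\|\mu^\varepsilon-\mu^\star\|_1\le 2\max_x\|T(x)\|\,C_1|\varepsilon|\,\|h\|$. Taking $C=\max(C_1,\,2C_1\max_x\|T(x)\|)$ completes the argument.
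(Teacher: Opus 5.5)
Your proposal is correct and follows essentially the same route as the paper. The paper also bounds $\|\dot\theta^\xi\|\le\|H(\theta^\xi)^+\|\,\|h\|$ along the curve, bounds $\|H^+\|$ uniformly on a compact neighborhood, integrates, and transfers the estimate to $\mu^\varepsilon$ through the bounded Jacobian of the Gibbs map; your version is more careful in using Lemma~\ref{lem:const_rank} to justify continuity of the pseudoinverse and the homogeneity/compactness argument to make $C$ independent of $h$, both of which the paper's proof leaves implicit.
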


\begin{proof}
From Theorem~\ref{thm:differentiability}, the full dual variation satisfies
$\dot\theta^0 = H^+\dot b$ with $\|\dot b\|=\|(h_1,h_V,h_2)\|$.
The marginal-potential component $\dot u^\varepsilon$ is the projection of
$\dot\theta^\varepsilon$ onto the marginal block; its norm satisfies
$\|\dot u^\varepsilon\|\le\|\dot\theta^\varepsilon\|\le\|H^{+,\varepsilon}\|\|\dot b\|$.
Since $H^\varepsilon$ varies continuously in $\varepsilon$ and $\mathcal{X}$ is finite,
$\|H^{+,\varepsilon}\|$ is bounded by some $C$ on a compact neighborhood.
Integrating $\dot u^\varepsilon$ over $\varepsilon$ yields the first bound.
The second bound follows from smoothness of the Gibbs map
$\mu^\varepsilon=\mathcal{G}(\theta^\varepsilon)$ and the bounded Jacobian on the same neighborhood.
\end{proof}
\subsection{Information-Geometric Interpretation}

The perturbation theory admits a precise interpretation in information
geometry~\cite{Amari2016}.

\paragraph{The ambient statistical manifold.}
Fix the prior $\bar\mu>0$ and the sufficient statistics $\{T_k\}$
of~\eqref{eq:MOT_stats}.
The \emph{statistical manifold} is the full exponential family
\begin{equation}
\label{eq:stat_manifold}
\mathcal{M}
\;:=\;
\Bigl\{\,
\mu_\theta(x)
= \bar\mu(x)\,\frac{e^{\langle\theta,T(x)\rangle}}{Z(\theta)}
\;:\;
\theta\in\mathbb{R}^K
\,\Bigr\},
\end{equation}
where $K=N_1+N_V+N_2+2N_1N_V$ and $Z(\theta)=\sum_x\bar\mu(x)e^{\langle\theta,T(x)\rangle}$.
Every point $\mu_\theta\in\mathcal{M}$ is a valid coupling on $\mathcal{X}$;
the calibrated solution $\mu^\star=\mu_{\theta^\star}$ is one specific point on
this manifold, determined by the constraint-matching conditions of
Corollary~\ref{cor:dual_optimality}.
The parameter space $\mathbb{R}^K$ serves as a global coordinate chart,
and the tangent space $T_\theta\mathcal{M}$ at any point is identified with
$\mathbb{R}^K$ via the parametrization $\theta$.

\paragraph{The Riemannian metric.}
At each $\theta\in\mathbb{R}^K$, the Fisher--Rao metric tensor is
\[
H(\theta)
\;:=\;
\nabla^2_\theta\Lambda(\theta)
\;=\;
\bigl(\mathrm{Cov}_{\mu_\theta}(T_k,T_l)\bigr)_{k,l},
\]
as established in Proposition~\ref{prop:fisher_hessian_cov}.
This symmetric positive semi-definite tensor defines a (possibly
degenerate) Riemannian metric on $\mathcal{M}$: for tangent vectors
$u,v\in T_\theta\mathcal{M}\cong\mathbb{R}^K$, the inner product is
$\langle u,v\rangle_{H(\theta)}:=u^\top H(\theta)\,v$.
Under Assumption~\ref{ass:non_deg}, $H(\theta)\succ 0$ and the metric is
non-degenerate; in general the metric degenerates on $\ker(H)$.
The metric is intrinsic to $\mathcal{M}$; it depends only on the calibrated
family, not on any payoff or perturbation.

\paragraph{Direction, functional, and Gâteaux derivative.}
Two further objects are needed to compute a sensitivity:
\begin{itemize}
\item \emph{A tangent direction}: a marginal perturbation
      $h=(h_1,h_V,h_2,0,\ldots,0)\in\mathbb{R}^K$ specifies how the constraint
      targets change.  The linear response system $H\dot\theta^0=h$ resolves
      $h$ into the corresponding tangent vector $\dot\theta^0=H^+h$ in
      $T_{\theta^\star}\mathcal{M}$.
\item \emph{A smooth functional}: a payoff $G:\mathcal{X}\to\mathbb{R}$ defines
      the price map $\Pi:\mathcal{M}\to\mathbb{R}$ by
      $\Pi(\mu_\theta):=\mathbb{E}_{\mu_\theta}[G]$.
      Its Euclidean gradient on $\mathcal{M}$ in $\theta$-coordinates is
      $g_G:=\nabla_\theta\Pi(\theta^\star)=\mathrm{Cov}_{\mu^\star}(G,T)$.
\end{itemize}
The linear response system $H\dot\theta^0=h$ maps the constraint
perturbation $h$ to the tangent vector $\dot\theta^0=H^+h\in
T_{\theta^\star}\mathcal{M}$.
The Gâteaux derivative of $\Pi$ at $\mu^\star$ along $\dot\theta^0$ is
\begin{equation}
\label{eq:gateaux}
\Pi'(0)
\;=\;
g_G^\top\dot\theta^0
\;=\;
g_G^\top H^+h
\;=\;
\bigl\langle H^+g_G,\;\dot\theta^0\bigr\rangle_{H},
\end{equation}
where the last equality uses $\langle u,v\rangle_H:=u^\top Hv$, so that
$(H^+g_G)^\top H\dot\theta^0 = g_G^\top\dot\theta^0$.
The vector $H^+g_G$ is the \emph{natural gradient} of $\Pi$ on
$\mathrm{range}(H)$, its Riemannian gradient with respect to the Fisher
metric restricted to $\mathrm{range}(H)$.
Lemma~\ref{lem:range_gG} guarantees $g_G\in\mathrm{range}(H)$, so $H^+g_G$
is well-defined regardless of whether $H$ is invertible.
Equation~\eqref{eq:gateaux} then states that the sensitivity equals the
Fisher-metric inner product of the natural gradient with $\dot\theta^0$.

This makes the structure explicit: $H$ is fixed at calibration and encodes
the local geometry of $\mathcal{M}$; $G$ and $h$ are chosen at risk-generation
time and specify which sensitivity is evaluated.
\paragraph{Economic interpretation.}

The perturbation framework can be interpreted as solving a nearby optimal
transport problem whose prior is the calibrated coupling $\mu^\star$.
A marginal perturbation corresponds to a change in market forwards or option prices.
The entropic projection identifies the closest joint distribution consistent
with the new market information.

This perspective shows that the risk sensitivities derived in this paper are
not tied to any specific stochastic volatility model.
Instead they arise from the geometry of the calibrated transport plan.
In this sense the risk generation mechanism is largely model independent.
\section{Financial Perturbations: Spot And Volatility Bumps}
\label{sec:financial_perturbations}

This section translates two standard market perturbations (an SPX spot bump
and a parallel SPX implied volatility bump) into admissible perturbation
vectors $h=(h_1,h_V,h_2,\mathbf{0})$ in the sense of
Section~\ref{sec:perturbation}.

The SPX components $h_1$ and $h_2$ are derived here directly from the
perturbed marginals.
The VIX component $h_V$ is \emph{not} zero: an SPX perturbation changes
the SPX forward variance, which shifts the VIX future level $F_V$ through
the forward-variance replication identity, and this shift already induces
a change in the VIX marginal $\mu_V$ before any smile dynamics are imposed.
The SSR rule of Section~\ref{subsec:ssr_compatibility} then further propagates
this shift to the full VIX implied volatility surface.
We therefore leave $h_V$ as a \emph{placeholder} in this section, to be
completed in Section~\ref{subsec:ssr_compatibility}.

\subsection{SPX Spot Perturbation}

\paragraph{Setup.}
Under a sticky-strike convention, shifting the spot $S_0\mapsto S_0+\delta$
translates the SPX forward distribution at both maturities $T_1$ and $T_2$
by $\delta$ (exact under flat-smile/Bachelier dynamics; a first-order
approximation in general).
On the discrete grids $\mathcal{S}_j=\{s_j^i\}_{i=1}^{N_j}$,
$j\in\{1,2\}$, define
\begin{equation}
\label{eq:spot_perturbed_marginal}
\mu_j^\delta(s_j^i) \;:=\; \mu_j(s_j^i-\delta),
\end{equation}
where $\mu_j(\cdot)$ is extended off-grid by piecewise-linear interpolation.

\begin{assumption}[Interior support]\label{ass:interior_support}
The grids $\mathcal{S}_1$ and $\mathcal{S}_2$ are padded so that
$\mu_j(s_j^1)=\mu_j(s_j^{N_j})=0$ for $j\in\{1,2\}$, i.e., the support of
each marginal is strictly interior to the grid.
\end{assumption}

\begin{remark}
Translation invariance of a Lebesgue integral, $\int f(x-\delta)\,dx =
\int f(x)\,dx$, does \emph{not} transfer to a finite discrete sum on a
truncated grid: $\sum_i \mu_j(s_j^i-\delta)$ loses mass at the boundary
unless Assumption~\ref{ass:interior_support} holds.
In practice, the grids used in calibration extend well beyond the range of
quoted strikes, so boundary masses are negligible; Assumption~\ref{ass:interior_support}
formalises this.
\end{remark}

\begin{proposition}[Spot bump defines an admissible perturbation]
\label{prop:spot_admissible}
Under Assumption~\ref{ass:interior_support}, for $|\delta|<h_{\mathrm{grid}}:=\min_i(s_j^{i+1}-s_j^i)$,
$\mu_j^\delta$ is a valid probability measure on $\mathcal{S}_j$ for
$j\in\{1,2\}$, and the directional derivatives
\begin{equation}
\label{eq:h1_spot}
h_j(s_j^i) \;:=\;
\left.\frac{d}{d\delta}\mu_j^\delta(s_j^i)\right|_{\delta=0}
\;=\; -\mu_j'(s_j^i)
\end{equation}
satisfy $\sum_i h_j(s_j^i)=0$.
\end{proposition}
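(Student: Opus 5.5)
The argument has three parts: nonnegativity of $\mu_j^\delta$, unit mass, and then differentiating the mass identity in $\delta$.

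\textbf{Nonnegativity.} Write $\tilde\mu_j$ for the piecewise-linear interpolant of $\mu_j$ through the nodes $(s_j^i,\mu_j(s_j^i))$. Outside $[s_j^1,s_j^{N_j}]$ we extend it by zero. Assumption~\ref{ass:interior_support} makes this extension continuous, because the endpoint values are already zero. Since $\tilde\mu_j$ is a convex combination of nonnegative nodal values on every cell, we get $\mu_j^\delta(s_j^i)=\tilde\mu_j(s_j^i-\delta)\ge 0$.

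\textbf{Unit mass.} Take $0<\delta<h_{\mathrm{grid}}$; the case $\delta<0$ is symmetric. Each shifted point $s_j^i-\delta$ then lies in the single cell $[s_j^{i-1},s_j^i]$, with the convention that the value is zero for $i=1$. Linear interpolation gives
\[
\mu_j^\delta(s_j^i)=\mu_j(s_j^i)-\delta\,\frac{\mu_j(s_j^i)-\mu_j(s_j^{i-1})}{s_j^i-s_j^{i-1}}.
\]
Summing over $i$ shows $\sum_i\mu_j^\delta(s_j^i)=1$ exactly when the correction terms sum to zero.

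This is the main obstacle. On a uniform grid the correction is a telescoping sum $\sum_i(\mu_j(s_j^i)-\mu_j(s_j^{i-1}))=\mu_j(s_j^{N_j})-\mu_j(s_j^0)=0$, which vanishes by the zero boundary values. I would therefore first prove the statement for an equispaced grid, which is the standard calibration setup and is implicit in using a single $h_{\mathrm{grid}}$.

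For a nonuniform grid, the weights $1/(s_j^i-s_j^{i-1})$ break the telescoping. In that case I would state the result for the spacing-weighted (trapezoidal) mass, or else renormalize $\mu_j^\delta$. If renormalizing, I would check that the normalizing constant is $1+O(\delta)$ with vanishing first-order term only in the uniform case.

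\textbf{The derivative $h_j$.} With the formula above, $\delta\mapsto\mu_j^\delta(s_j^i)$ is affine for $\delta\in[0,h_{\mathrm{grid}})$. Its one-sided derivative at $0$ is $-\mu_j'(s_j^i)$, where $\mu_j'$ is the backward difference quotient. For $\delta<0$ the derivative is the forward difference quotient instead. So I would interpret $\mu_j'$ as the one-sided (or, if symmetrized, centred) discrete derivative of the interpolant.

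Finally, differentiating the identity $\sum_i\mu_j^\delta(s_j^i)=1$ at $\delta=0$ gives $\sum_i h_j(s_j^i)=0$ directly. Equivalently, it is the telescoping sum of the differences, which vanishes by Assumption~\ref{ass:interior_support}. Hence $(h_1,h_2)$ satisfies the mass-preservation requirement of Lemma~\ref{lem:range_db}.
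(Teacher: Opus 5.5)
Your proof is correct and follows the paper's own route. The paper writes $\mu_j^\delta(s_j^i)=(1-\delta/h_{\mathrm{grid}})\mu_j(s_j^i)+(\delta/h_{\mathrm{grid}})\mu_j(s_j^{i-1})$ with $\mu_j(s_j^0):=0$, sums using the zero boundary values, and differentiates the mass identity. That is exactly your telescoping computation, and like yours it tacitly requires an equispaced grid. Your two caveats are therefore genuine points the paper's proof passes over: exact mass conservation fails for unequal spacings, and the one-sided derivatives at $\delta=0$ are minus the backward difference (from $\delta>0$) and minus the forward difference (from $\delta<0$), both of which sum to zero.

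One correction to your nonuniform fallback: trapezoidal weights do not restore exact conservation either. On the grid $0,1,3,4$ with nodal values $(0,a,b,0)$, the trapezoidal mass changes by $-\tfrac{\delta}{4}(3a+b)$. Off the equispaced case you would instead need renormalization, or the stochastic-matrix (barycentric pushforward) construction in the paper's subsequent remark.
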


\begin{proof}
By piecewise-linear interpolation with mesh spacing $h_{\mathrm{grid}}$, for $\delta\in(0,h_{\mathrm{grid}})$,
\[
\mu_j^\delta(s_j^i)
\;=\; \bigl(1-\tfrac{\delta}{h_{\mathrm{grid}}}\bigr)\mu_j(s_j^i)
     + \tfrac{\delta}{h_{\mathrm{grid}}}\mu_j(s_j^{i-1}),
\]
where $\mu_j(s_j^0):=0$ by convention.
Summing over $i=1,\dots,N_j$ and using Assumption~\ref{ass:interior_support}
($\mu_j(s_j^1)=\mu_j(s_j^{N_j})=0$):
\[
\sum_i\mu_j^\delta(s_j^i)
\;=\; \bigl(1-\tfrac{\delta}{h_{\mathrm{grid}}}\bigr)\sum_i\mu_j(s_j^i)
     + \tfrac{\delta}{h_{\mathrm{grid}}}\sum_i\mu_j(s_j^{i-1})
\;=\; 1,
\]
where the last equality holds because both inner sums equal 1 (the
boundary-zero assumption prevents mass leaving the grid).
Positivity for small $|\delta|$ follows from linear interpolation.
Differentiating~\eqref{eq:spot_perturbed_marginal} at $\delta=0$ gives
$h_j = -\mu_j'$~\eqref{eq:h1_spot}, and $\sum_i h_j(s_j^i) =
d/d\delta\bigl[\sum_i\mu_j^\delta(s_j^i)\bigr]_{\delta=0} = 0$.
\end{proof}

\begin{remark}
An equivalent construction that bypasses Assumption~\ref{ass:interior_support}
is the \emph{barycentric pushforward}: each atom at $s_j^i$ is mapped to
$s_j^i+\delta$, then redistributed linearly between the two nearest grid
points (with mass reaching outside the grid absorbed at the boundary).
This defines a stochastic matrix $P_\delta$ and sets $\mu_j^\delta:=P_\delta\mu_j$;
mass is preserved exactly for all $\delta$, and the resulting $h_j$ agrees
with $-\mu_j'$ to first order.
\end{remark}

\paragraph{Perturbation vector.}
For the spot bump, the perturbation vector is
\begin{equation}
\label{eq:h_spot}
h^{\mathrm{spot}} \;=\;
\Bigl(\,-\mu_1'(\cdot),\;
      \underbrace{h_V}_{\text{see Sec.~\ref{subsec:ssr_compatibility}}},\;
      -\mu_2'(\cdot),\;
      \underbrace{0,\ldots,0}_{\text{financial slots}}\,\Bigr)
\;\in\;\mathbb{R}^K,
\end{equation}
where $-\mu_j'$ denotes the discrete derivative of the marginal
$\mu_j$~\eqref{eq:h1_spot}.
The VIX component $h_V$ is non-zero because the spot bump shifts the SPX
forward variance and hence the VIX future $F_V$; its explicit form is
derived in Section~\ref{subsec:ssr_compatibility}.

\subsection{SPX Implied Volatility Perturbation}

\paragraph{Setup.}
A parallel implied volatility bump applied to both SPX maturities shifts
\[
\sigma(K,T_i) \;\mapsto\; \sigma(K,T_i)+\delta, \qquad i\in\{1,2\}.
\]
At each maturity $T_i$, the perturbed call prices follow from Black-Scholes~\cite{black1973pricing},
\begin{equation}
\label{eq:bumped_calls}
C_i^\delta(K) \;:=\; C_{BS}\!\bigl(S_0,K,T_i,\,\sigma(K,T_i)+\delta\bigr),
\end{equation}
and Breeden-Litzenberger inversion~\cite{breeden1978prices} yields the perturbed marginal density:
\begin{equation}
\label{eq:BL_perturbed}
\mu_i^\delta(s) \;=\; \frac{\partial^2 C_i^\delta}{\partial K^2}\bigg|_{K=s},
\qquad i\in\{1,2\}.
\end{equation}

\begin{proposition}[Volatility bump defines admissible perturbations]
\label{prop:vol_admissible}
For sufficiently small $|\delta|$, $\mu_i^\delta$ is a valid probability
measure on $\mathcal{S}_i$ for each $i\in\{1,2\}$, and the directional
derivatives
\begin{equation}
\label{eq:hi_vol}
h_i(s) \;:=\;
\left.\frac{d}{d\delta}\mu_i^\delta(s)\right|_{\delta=0}
\;=\; \frac{\partial^2\mathcal{V}^{BS}_i}{\partial K^2}\bigg|_{K=s},
\end{equation}
where $\mathcal{V}^{BS}_i(K):=\partial_\sigma C_{BS}(S_0,K,T_i,\sigma(K,T_i))$
is the Black-Scholes vega at maturity $T_i$ and strike $K$, each satisfy
$\sum_s h_i(s)=0$.
\end{proposition}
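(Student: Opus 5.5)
The argument has three parts: validity of the bumped measure, the formula for $h_i$ as a vega convexity, and the zero-sum property. All three rest on the Breeden--Litzenberger representation and on the boundary behaviour of $C_i^\delta$, treated in the same spirit as Proposition~\ref{prop:spot_admissible}.

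\emph{Validity for small $|\delta|$.} At $\delta=0$, $\mu_i^0=\mu_i$ is a probability measure on $\mathcal{S}_i$, and we may take it to be strictly positive in the grid interior; under the padded-grid convention of Assumption~\ref{ass:interior_support} it is zero at the boundary. The map $\delta\mapsto C_{BS}(S_0,K,T_i,\sigma+\delta)$ is smooth. Hence so is its (discrete) second strike derivative at each of the finitely many grid points. Continuity in $\delta$ and finiteness of $\mathcal{S}_i$ then give positivity at interior nodes for $|\delta|$ small. At boundary nodes, the discretization should be defined so that the boundary masses stay zero; alternatively, they are absorbed as in the barycentric remark.

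\emph{Total mass.} Take the discrete Breeden--Litzenberger second difference, or the continuum integral $\int\partial_K^2C\,dK$. Either telescopes to the difference of the first strike derivatives at the two ends of the grid. For a call price this is $\partial_KC(K_{\max})-\partial_KC(K_{\min})$. It tends to $0-(-1)=1$ when the grid is padded far enough. More precisely, the mass is exactly $1$ whenever the boundary slopes are pinned: $\partial_KC=-1$ deep in the money and $0$ deep out of the money. These are model-independent limits, unaffected by $\delta$. So $\sum_s\mu_i^\delta(s)=1$ for all small $\delta$.

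\emph{Formula for $h_i$.} Differentiate~\eqref{eq:BL_perturbed} in $\delta$ at $0$ and exchange $\partial_\delta$ with $\partial_K^2$. This exchange is legitimate because $C_{BS}$ is jointly $C^\infty$ in $(K,\sigma)$ for $\sigma>0$, $T_i>0$. Since $\partial_\delta C_i^\delta|_{\delta=0}=\partial_\sigma C_{BS}(S_0,K,T_i,\sigma(K,T_i))=\mathcal{V}^{BS}_i(K)$, we get $h_i=\partial_K^2\mathcal{V}^{BS}_i$, which is~\eqref{eq:hi_vol}.

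\emph{Zero sum.} Differentiate the identity $\sum_s\mu_i^\delta(s)=1$ in $\delta$. Equivalently, telescope directly: $\sum_s h_i(s)=\partial_K\mathcal{V}^{BS}_i\big|_{K_{\min}}^{K_{\max}}$. This vanishes because vega and its strike derivative decay (Gaussian tails of $\phi(d_1)$) at both ends of a padded grid.

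\emph{Main obstacle.} The delicate step is making the boundary argument exact rather than approximate on a truncated discrete grid, for the same reason as in the remark preceding Proposition~\ref{prop:spot_admissible}. I would handle it as follows. I would define the discrete density via second differences of $C^\delta$ on the grid, with the boundary slopes fixed at $-1$ and $0$; these do not depend on $\delta$. Then exact mass conservation and the zero-sum property follow from telescoping. With the natural continuum definition they hold only up to the Gaussian-tail error, which I would note is negligible under Assumption~\ref{ass:interior_support}.
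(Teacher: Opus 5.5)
Your proposal is correct and follows the same route as the paper's proof. That route has four steps: positivity at each of the finitely many grid nodes by continuity in $\delta$; conservation of mass from $\delta$-independent boundary behaviour of the bumped calls; $h_i=\partial_K^2\mathcal{V}^{BS}_i$ by exchanging $\partial_\delta$ with $\partial_K^2$; and $\sum_s h_i(s)=0$ by differentiating $\sum_s\mu_i^\delta(s)=1$. Your version is more careful than the paper's in three ways. First, your slope limits $\partial_K C_i^\delta\to-1$ and $\to 0$ are the conditions that actually fix total mass, whereas the paper's $C_i^\delta(0)=S_0$ and $C_i^\delta(\infty)=0$ pin the first moment instead. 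Second, the paper glosses over the truncated-grid issue you flag. Third, your boundary-node patch becomes unnecessary if you assume strict positivity of the base marginal at every grid node, as the paper does via Assumption~\ref{ass:non_deg_marg}, rather than invoking Assumption~\ref{ass:interior_support}.
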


\begin{proof}
Since $C_{BS}$ is $C^\infty$ in $\sigma$, the map $\delta\mapsto C_i^\delta(K)$
is $C^\infty$ jointly in $(K,\delta)$.
The Breeden-Litzenberger density therefore expands as
\begin{equation}
\label{eq:density_expansion}
\mu_i^\delta(s)
\;=\; \mu_i(s) + \delta\,h_i(s) + O(\delta^2),
\end{equation}
where $h_i(s)=\partial^2_K\mathcal{V}^{BS}_i|_{K=s}$ as in~\eqref{eq:hi_vol}.
The naive implication $\sigma+\delta>0\Rightarrow\partial^2_K C^\delta\geq 0$
fails on a skewed smile (the Breeden-Litzenberger formula involves $\sigma'(K)$ and
$\sigma''(K)$, so density positivity is not implied merely by vol positivity).
The correct argument is quantitative and discrete.
By Assumption~\ref{ass:non_deg_marg} below, the base marginal is strictly
positive on the finite grid $\mathcal{S}_i$, so
\[
\underline\mu_i \;:=\; \min_{s\in\mathcal{S}_i}\mu_i(s) \;>\; 0,
\qquad
\overline{h}_i \;:=\; \max_{s\in\mathcal{S}_i}|h_i(s)| \;<\; \infty.
\]
Setting $\bar\delta_i := \underline\mu_i/(2\overline{h}_i)$, uniform
continuity of the $O(\delta^2)$ remainder on the finite grid gives a
constant $C_i$ such that for $|\delta|\le\bar\delta_i$:
\[
\mu_i^\delta(s) \;\ge\; \underline\mu_i - |\delta|\,\overline{h}_i - C_i\delta^2
\;\ge\; \tfrac{1}{2}\underline\mu_i \;>\; 0.
\]
Additionally $|\delta|<\min_K\sigma(K,T_i)$ ensures $\sigma(K,T_i)+\delta>0$
(implied vols remain positive), so $C_i^\delta$ is a valid Black-Scholes price.
The admissible range is $|\delta|\le\min\{\bar\delta_i,\min_K\sigma(K,T_i)\}$.

Mass preservation follows from the boundary conditions $C_i^\delta(0)=S_0$
and $C_i^\delta(K)\to 0$ as $K\to\infty$, which hold for all $\delta$.
Differentiating~\eqref{eq:BL_perturbed} with respect to $\delta$ at $\delta=0$
and exchanging $d/d\delta$ with $\partial^2_K$ (justified by smoothness):
\[
h_i(s)
= \frac{\partial^2}{\partial K^2}
  \underbrace{\frac{d C_i^\delta}{d\delta}\bigg|_{\delta=0}}_{\mathcal{V}^{BS}_i(K)}
  \bigg|_{K=s}
= \frac{\partial^2\mathcal{V}^{BS}_i}{\partial K^2}\bigg|_{K=s},
\]
establishing~\eqref{eq:hi_vol}.
The mass-zero conditions follow by differentiating $\sum_s\mu_i^\delta(s)=1$
with respect to $\delta$ at $\delta=0$, as in
Proposition~\ref{prop:spot_admissible}.
\end{proof}

\begin{assumption}[Strict positivity of base marginals on the grid]
\label{ass:non_deg_marg}
The calibrated base marginals satisfy $\mu_i(s)>0$ for every grid point
$s\in\mathcal{S}_i$, $i\in\{1,2\}$.
\end{assumption}

\begin{remark}
Assumption~\ref{ass:non_deg_marg} is a consequence of the strictly positive
feasible coupling (A3) of Theorem~\ref{thm:existence}: if
$\mu_0\in\mathcal{P}$ with $\mu_0(s_1,v,s_2)>0$ everywhere, then each
marginal $\mu_i(s)=\sum_{(s_1,v,s_2):s_i=s}\mu_0(s_1,v,s_2)>0$.
Economically, it states that the calibrated model assigns positive
probability to every grid state, a property satisfied by any
Gibbs coupling with a strictly positive prior $\bar\mu>0$.
\end{remark}

\paragraph{Perturbation vector.}
For the parallel vol bump, the perturbation vector is
\begin{equation}
\label{eq:h_vol}
h^{\mathrm{vol}} \;=\;
\Bigl(\,\partial^2_K\mathcal{V}^{BS}_{T_1}\big|_{K=\cdot},\;
      \underbrace{h_V}_{\text{see Sec.~\ref{subsec:ssr_compatibility}}},\;
      \partial^2_K\mathcal{V}^{BS}_{T_2}\big|_{K=\cdot},\;
      \underbrace{0,\ldots,0}_{\text{financial slots}}\,\Bigr)
\;\in\;\mathbb{R}^K,
\end{equation}
where $\mathcal{V}^{BS}_{T_i}(K)=\partial_\sigma C_{BS}(S_0,K,T_i,\sigma(K,T_i))$
is the Black-Scholes vega at maturity $T_i$~\eqref{eq:hi_vol}.
If instead a single-maturity bump is applied (e.g.\ only $T_1$), then the
$\partial^2_K\mathcal{V}^{BS}_{T_2}$ component is zero.
In both cases $h_V\neq 0$ and is completed in
Section~\ref{subsec:ssr_compatibility}.

\paragraph{Summary.}
In both cases the perturbation vector $h$ is admissible in the sense of
Section~\ref{sec:perturbation}: each non-zero component is mass-preserving
and the financial constraint slots are zero.
The risk sensitivity for any payoff $G$ is then
$\Pi'(0)=g_G^\top H^+h$, with $h$ assembled from~\eqref{eq:h_spot}
or~\eqref{eq:h_vol} respectively.
The LR method enforces all three components $(h_1,h_V,h_2)$ simultaneously
through the Fisher linear system; the treatment of $h_2$ under the DR method
is discussed in Remark~\ref{rem:h2_LR_only} once that framework has been
introduced (Section~\ref{sec:dimension_reduction}).

\section{SSR Dynamics For VIX And Its Linearization}
\subsection{Skew Stickiness Ratio: Bergomi's Definition And Extension To VIX}

We now give a formal definition of the Skew Stickiness Ratio (SSR) following
~\cite{bergomi2016,bergomi2009Slides}.  
Let $\sigma(K,F)$ denote the implied volatility of an option with strike $K$
and forward level $F$ of the underlying asset.  
Bergomi models smile dynamics by expressing first-order reactions of the
volatility surface to changes in the forward.

\subsubsection{Bergomi’s Definition of Skew Stickiness Ratio}

Following \cite{bergomi2016,bergomi2009Slides}, let $\sigma^*(F):=\sigma(F,F)$ denote the
at-the-money-forward (ATMF) implied volatility and let
\begin{equation}
\label{eq:Slog}
S_{\log}(F)
:= \left.\frac{\partial\sigma}{\partial\ln K}\right|_{K=F}
 = F\left.\frac{\partial\sigma(K,F)}{\partial K}\right|_{K=F}
\end{equation}
be the \emph{ATM log-moneyness skew}.  For equity underlyings the leverage
effect drives $S_{\log}<0$.

Bergomi's Skew Stickiness Ratio $\beta$ (SSR) is the dimensionless parameter
defined through the floating-ATMF sensitivity
\begin{equation}
\label{eq:BergomiSSROriginal}
\frac{d\sigma^*(F)}{d\ln F} = \beta\cdot S_{\log}(F).
\end{equation}
Equivalently, the sensitivity of fixed-strike vol at $K=F$ is
\begin{equation}
\label{eq:BergomiFixedStrike}
\left.\frac{\partial\sigma(K,F)}{\partial F}\right|_{K=F}
= (\beta-1)\,\frac{S_{\log}(F)}{F}
= (\beta-1)\left.\frac{\partial\sigma}{\partial K}\right|_{K=F},
\end{equation}
and for a general strike $K$ the self-similar (SSR) surface satisfies
\begin{equation}
\label{eq:BergomiGeneralK}
\left.\frac{\partial\sigma(K,F)}{\partial F}\right|_{K\,\text{fixed}}
= (\beta-1)\,\frac{K}{F}\left.\frac{\partial\sigma}{\partial K}\right|_{K}.
\end{equation}
Thus a perturbation $\delta F$ in the forward produces the leading-order
smile shift
\begin{equation}
\label{eq:BergomiShift}
\delta\sigma(K)
= (\beta-1)\,\frac{K}{F}\,\partial_K\sigma(K)\,\delta F
\qquad (K\text{ fixed}),
\end{equation}
or, using the ATM uniform approximation,
$\delta\sigma \approx (\beta-1)\,S_{\log}(F)\,\delta F/F$.

The stickiness regimes are:
\begin{itemize}
\item $\beta=0$: Sticky delta (smile frozen in $K/F$; $d\sigma^*/d\ln F=0$),
\item $\beta=1$: Sticky strike (fixed-$K$ vol frozen; $\partial_F\sigma|_K=0$),
\item $\beta\approx\tfrac{3}{2}$: Empirical SPX value \cite{bergomi2016},
\item $\beta>1$: Super-skew (fixed-$K$ vol moves \emph{opposite} to $F$, amplifying the leverage effect beyond sticky-strike).
\end{itemize}

\begin{remark}[Sign convention for equity]
For equity indices $S_{\log}<0$ and $\beta>0$, so $d\sigma^*/d\ln F = \beta S_{\log}<0$:
the floating ATMF vol rises when the spot falls.  For $\beta>1$ the fixed-$K$
vol also falls when the spot rises ($\partial_F\sigma|_{K=F}=(\beta-1)S_{\log}/F<0$),
consistent with the leverage effect.  At $\beta=1$ (sticky-strike) the fixed-$K$
vol is unchanged, and the entire ATMF rise comes from the smile sliding along
the strike axis.
\end{remark}

\subsubsection{Extension to VIX Futures And VIX Options}

Let $F_V$ denote the VIX future level, $\sigma_V(K,F_V)$ the VIX option
implied volatility, and $\sigma_V^*(F_V):=\sigma_V(F_V,F_V)$ the ATMF VIX vol.
Define the VIX ATM log-moneyness skew
$$
S_{\log,V}(F_V)
:= F_V\left.\frac{\partial\sigma_V(K,F_V)}{\partial K}\right|_{K=F_V}.
$$
VIX options have a positively-sloped smile, so $S_{\log,V}>0$.

The VIX Skew Stickiness Ratio $\beta_V$ is defined by the same floating-ATMF
condition as \eqref{eq:BergomiSSROriginal}:
\begin{equation}
\label{eq:VIXSSRDefinition}
\frac{d\sigma_V^*(F_V)}{d\ln F_V} = \beta_V\cdot S_{\log,V}(F_V).
\end{equation}
The corresponding fixed-strike sensitivity and linearized vol shift are
\begin{equation}
\label{eq:VIXFixedStrike}
\left.\frac{\partial\sigma_V(K,F_V)}{\partial F_V}\right|_{K=F_V}
= (\beta_V-1)\,\frac{S_{\log,V}}{F_V}
= (\beta_V-1)\left.\frac{\partial\sigma_V}{\partial K}\right|_{K=F_V},
\end{equation}
\begin{equation}
\label{eq:VIXLinearVolShift}
\delta\sigma_V(K)
= (\beta_V-1)\,\frac{K}{F_V}\,\partial_K\sigma_V(K)\,\delta F_V
\qquad (K\text{ fixed}).
\end{equation}

\begin{remark}[Sign convention for VIX]
Since $S_{\log,V}>0$ and empirically $\beta_V\approx 1.2>1$, we have
$(\beta_V-1)>0$: when VIX rises ($\delta F_V>0$) the fixed-$K$ VIX vol also
rises ($\delta\sigma_V>0$), consistent with market observation.  At $\beta_V=1$
(sticky-strike) the VIX vol surface would be frozen in absolute strike space.
\end{remark}

\subsection{Linear SSR Approximation And Second--Order Accuracy}

We now derive the linear Skew Stickiness Ratio (SSR) approximation used in the
perturbed optimal transport framework, and quantify its accuracy in a single,
self–contained result.

Let $F_V$ denote the VIX future and $\sigma_V(K,F_V)$ the VIX implied
volatility.  The VIX Skew Stickiness Ratio $\beta_V$ is defined by
\eqref{eq:VIXSSRDefinition}: the fixed-strike sensitivity satisfies
\[
\left.\frac{\partial\sigma_V(K,F_V)}{\partial F_V}\right|_{K\,\text{fixed}}
= (\beta_V-1)\,\frac{K}{F_V}\,\partial_K\sigma_V(K),
\qquad \beta_V > 0.
\]

We now formalize this linearization and simultaneously provide a quantitative
error bound.

\begin{theorem}[Unified linear SSR expansion with second-order error]
\label{thm:SSRUnified}
Assume $\sigma_V(K,F_V)$ is $C^2$ in $F_V$ near the base level.
Let $F_V' = F_V + \delta$.  Then the implied volatility satisfies
\begin{equation}
\label{eq:SSRUnifiedExpansion}
\sigma_V(K,F_V')
=
\sigma_V(K,F_V)
+(\beta_V-1)\,\frac{K}{F_V}\,\partial_K\sigma_V(K)\,\delta
+R_K(\delta),
\end{equation}
where the remainder satisfies the deterministic bound
\begin{equation}
\label{eq:SSRUnifiedBound}
\bigl|R_K(\delta)\bigr|
\le
\frac{1}{2}
\sup_{\lvert u-F_V\rvert\le \lvert\delta\rvert}
\left|
\frac{\partial^2\sigma_V(K,u)}{\partial F_V^2}
\right|
\delta^2.
\end{equation}

In particular, the \emph{strike-dependent linear SSR approximation}
\begin{equation}
\label{eq:SSRLinearApprox}
\sigma_V(K,F_V')
\approx
\sigma_V(K,F_V)
+(\beta_V-1)\,\frac{K}{F_V}\,\partial_K\sigma_V(K)\,(F_V'-F_V)
\end{equation}
is first-order accurate with $O(\delta^2)$ error controlled by
\eqref{eq:SSRUnifiedBound}.  The \emph{uniform (ATM) approximation} replaces
$\frac{K}{F_V}\partial_K\sigma_V(K)$ by its ATM value $S_{\log,V}/F_V$:
\begin{equation}
\label{eq:SSRLinearATM}
\sigma_V(K,F_V')
\approx
\sigma_V(K,F_V)
+(\beta_V-1)\,\frac{S_{\log,V}}{F_V}\,(F_V'-F_V).
\end{equation}
\end{theorem}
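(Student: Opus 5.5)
This is a one-variable Taylor expansion with Lagrange remainder, combined with the SSR identity that fixes the first derivative. Fix the strike $K$ and set $f(u):=\sigma_V(K,u)$ for $u$ in a neighbourhood of the base level $F_V$. By hypothesis $f$ is $C^2$ on the closed interval between $F_V$ and $F_V'=F_V+\delta$, for $|\delta|$ small enough. Taylor's theorem then gives
\[
f(F_V+\delta)=f(F_V)+f'(F_V)\,\delta+R_K(\delta),
\qquad
R_K(\delta)=\tfrac12 f''(\xi)\,\delta^2
\]
for some $\xi$ between $F_V$ and $F_V+\delta$. Since $|\xi-F_V|\le|\delta|$, bounding $|f''(\xi)|$ by the supremum over $|u-F_V|\le|\delta|$ gives \eqref{eq:SSRUnifiedBound} directly. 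If one wants to avoid the mean-value point, the integral form $R_K(\delta)=\int_0^\delta(\delta-t)f''(F_V+t)\,dt$ gives the same bound.

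It remains to identify $f'(F_V)$. Here I would invoke the fixed-strike SSR relation stated just before the theorem, i.e. the VIX analogue \eqref{eq:VIXLinearVolShift} of \eqref{eq:BergomiGeneralK}:
\[
\partial_{F_V}\sigma_V(K,F_V)\big|_{K\ \text{fixed}}=(\beta_V-1)\tfrac{K}{F_V}\,\partial_K\sigma_V(K).
\]
This relation is taken as the definition of the self-similar SSR surface. It extends the ATM defining condition \eqref{eq:VIXSSRDefinition} through \eqref{eq:VIXFixedStrike}. Substituting it for $f'(F_V)$ yields \eqref{eq:SSRUnifiedExpansion}.

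The linear approximation \eqref{eq:SSRLinearApprox} is \eqref{eq:SSRUnifiedExpansion} with the remainder dropped. Its error is $O(\delta^2)$ because the supremum in \eqref{eq:SSRUnifiedBound} is finite: $\partial^2_{F_V}\sigma_V$ is continuous on a compact interval, so it is bounded there. For the uniform ATM version \eqref{eq:SSRLinearATM}, the replacement term is exact at $K=F_V$, since $\tfrac{K}{F_V}\partial_K\sigma_V|_{K=F_V}=S_{\log,V}/F_V$. Away from the money it introduces an additional first-order discrepancy proportional to $\bigl|\tfrac{K}{F_V}\partial_K\sigma_V(K)-S_{\log,V}/F_V\bigr|\,|\delta|$. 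I would state \eqref{eq:SSRLinearATM} as a heuristic simplification that is second-order accurate only near the money, or for a linear-in-log-strike smile. I would not claim an unconditional $O(\delta^2)$ bound for it.

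The main obstacle is conceptual rather than technical. The ATM defining relation \eqref{eq:VIXSSRDefinition} alone does not determine $\partial_{F_V}\sigma_V$ at a general strike $K$. The argument therefore has to state explicitly that the strike-dependent fixed-strike SSR relation is an assumption, namely the self-similar SSR surface. Granting that assumption, the rest is routine single-variable calculus.
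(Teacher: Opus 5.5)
Your proof is correct and follows essentially the same route as the paper. You fix $K$, apply Taylor's theorem in $F_V$ with Lagrange remainder, substitute the fixed-strike SSR identity \eqref{eq:BergomiGeneralK} (in its VIX form) for $\partial_{F_V}\sigma_V$, and bound $|f''(\xi)|$ by the supremum over $|u-F_V|\le|\delta|$. Your two caveats are accurate and consistent with the paper, which asserts no error bound for \eqref{eq:SSRLinearATM}: the strike-dependent SSR relation is an assumption rather than a consequence of \eqref{eq:VIXSSRDefinition}, and the ATM version carries a first-order discrepancy away from the money.
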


\begin{proof}
Apply Taylor's theorem in $F_V$:
\[
\sigma_V(K,F_V+\delta)
=\sigma_V(K,F_V)
+\frac{\partial\sigma_V}{\partial F_V}(K,F_V)\,\delta
+\frac{1}{2}\,\frac{\partial^2\sigma_V}{\partial F_V^2}(K,\xi)\,\delta^2,
\]
for some $\xi$ between $F_V$ and $F_V+\delta$.  Substituting the SSR
identity~\eqref{eq:BergomiGeneralK}
\[
\frac{\partial\sigma_V}{\partial F_V}(K,F_V)
= (\beta_V-1)\,\frac{K}{F_V}\,\partial_K\sigma_V(K)
\]
into the first-order term gives \eqref{eq:SSRUnifiedExpansion}, and the
bound \eqref{eq:SSRUnifiedBound} follows by taking the supremum over the
interval. \qedhere
\end{proof}

\paragraph{Interpretation.}
The first-order term in \eqref{eq:SSRUnifiedExpansion} gives the SSR-based
linear smile reaction used in the perturbed optimal transport constraints.
The explicit $O(\delta^2)$ bound in \eqref{eq:SSRUnifiedBound} quantifies the
approximation error and shows that the SSR mapping is highly accurate for the
small forward perturbations relevant for risk generation.

\subsection{Why VIX Dynamics Must Be Introduced}

The entropic martingale optimal transport calibration determines a joint
distribution of $(S_1,V,S_2)$ that is fully consistent with observed SPX and
VIX option prices. However, the resulting calibrated coupling is inherently a
\emph{static object}. In particular, the transport formulation itself does not
impose any dynamic rule governing how the VIX smile evolves under perturbations
of the SPX surface.

In contrast, risk management in volatility markets requires a specification of
how both the VIX future level and the VIX implied volatility smile react to
changes in the SPX surface. In traditional stochastic volatility models this
dynamic behavior is encoded through the joint dynamics of the spot and variance
processes. For example, perturbations of the SPX volatility surface affect both
the forward variance level and the volatility-of-volatility parameters, which
in turn determine the evolution of VIX option prices.

Within the optimal transport framework, however, the calibration produces only
a joint distribution consistent with option prices and martingale constraints.
As a consequence, the response of the VIX smile to SPX perturbations is not
determined by the model itself. To generate realistic risk sensitivities it is
therefore necessary to introduce an empirical rule describing how VIX implied
volatility moves when the underlying market changes.

In equity volatility markets an analogous phenomenon occurs for SPX options,
where practitioners often model the movement of the implied volatility smile
using the Skew Stickiness Ratio (SSR). The SSR describes how the skew of the
volatility smile shifts relative to movements of the underlying spot. Empirical
studies suggest that similar behavior is present in VIX options: changes in the
VIX future level are typically accompanied by systematic changes in the slope
and level of the VIX volatility smile.

Motivated by this empirical observation, we incorporate VIX Skew Stickiness Ratio
dynamics into the perturbed optimal transport problem. The idea is to treat the
SSR relation as an exogenous constraint that governs how the VIX smile adjusts
when the SPX surface is perturbed. In practice, a perturbation of the SPX
surface modifies the SPX marginal distributions, which induces a change in the
VIX forward level through the forward variance relationship. The SSR rule then
determines how the VIX implied volatility levels adjust in response to this
shift.

By using the SSR rule to complete the perturbation vector $\dot b$, the
linear response system simultaneously reflects consistency with the SPX
surface, the VIX future level, and the empirically observed VIX smile dynamics,
without modifying the constraint operator or the dual structure.
This provides a natural mechanism for generating realistic SPX–VIX risk
sensitivities within the optimal transport framework.

\subsection{Empirical Evidence For VIX Skew Stickiness Ratio}

Skew Stickiness Ratio (SSR) is widely used by practitioners to describe how the VIX
volatility smile responds to changes in the underlying SPX level.

To estimate SSR empirically we regress daily changes in VIX implied volatility
against changes in the VIX future level across different maturities.
Figure~\ref{fig:vix_ssr} illustrates the estimated SSR term
structure using historical windows from six months to three years.

\begin{figure}[H]
\centering

\begin{subfigure}{0.48\textwidth}
  \centering
  \includegraphics[width=\linewidth]{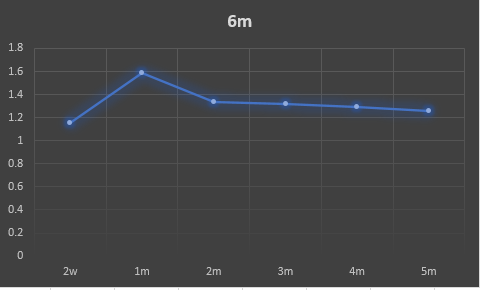}
  \caption{6m window}
  \label{fig:vix_ssr_6m}
\end{subfigure}\hfill
\begin{subfigure}{0.48\textwidth}
  \centering
  \includegraphics[width=\linewidth]{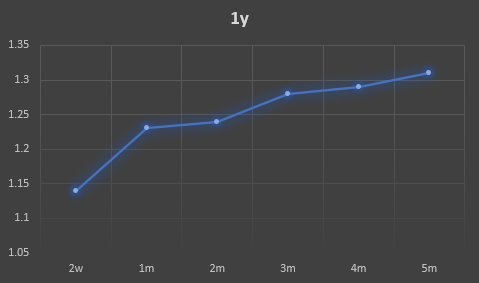}
  \caption{1y window}
  \label{fig:vix_ssr_1y}
\end{subfigure}

\medskip

\begin{subfigure}{0.48\textwidth}
  \centering
  \includegraphics[width=\linewidth]{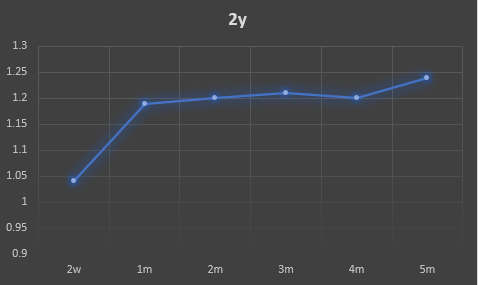}
  \caption{2y window}
  \label{fig:vix_ssr_2y}
\end{subfigure}\hfill
\begin{subfigure}{0.48\textwidth}
  \centering
  \includegraphics[width=\linewidth]{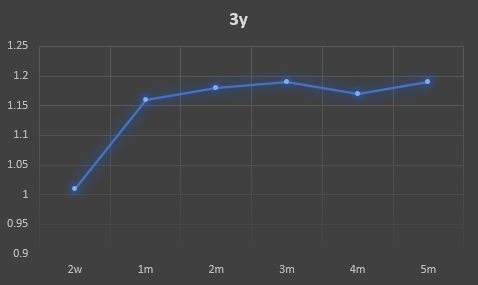}
  \caption{3y window}
  \label{fig:vix_ssr_3y}
\end{subfigure}

\caption{Estimated VIX SSR term structure across four historical windows.}
\label{fig:vix_ssr}
\end{figure}

These empirical observations motivate incorporating SSR dynamics into the
optimal transport framework as exogenous constraints.

\subsection{Completing \texorpdfstring{$h_V$}{hV}: SSR Propagation To The VIX Marginal}
\label{subsec:ssr_compatibility}

We now complete the perturbation vector left open in
Section~\ref{sec:financial_perturbations} by deriving the VIX marginal
component $h_V$.
The argument has four steps: (1) compute the forward-variance shift
$\delta\sigma^2_\mathrm{fwd}$ from the SPX marginal perturbation;
(2) apply the linearized SSR rule to define the sensitivity kernel
$\Gamma(K)$ and the shape vector $A_\mathrm{SSR}$;
(3) determine $\delta F_V$ exactly from the variance-consistency constraint;
(4) invert via Breeden--Litzenberger to obtain $h_V$.

\paragraph{Step 1: Forward-variance shift from the SPX marginals.}
The SPX-implied forward variance is the expected log-contract:
\[
\sigma^2_{\mathrm{fwd}}
\;=\; \mathbb{E}_{\mu^\star}[L(S_2/S_1)].
\]
Since $L(S_2/S_1) = L(S_2) - L(S_1)$ (both $L(s)=-\tfrac{2}{\tau}\ln s$ applied
to absolute levels), the expectation separates into marginals:
\[
\sigma^2_{\mathrm{fwd}}
\;=\; \mathbb{E}_{\mu_2}[L(\cdot)] - \mathbb{E}_{\mu_1}[L(\cdot)]
\;=\; \sum_{s_2\in\mathcal{S}_2} L(s_2)\,\mu_2(s_2)
     \;-\; \sum_{s_1\in\mathcal{S}_1} L(s_1)\,\mu_1(s_1).
\]
Under the SPX marginal perturbation $(h_1,h_2)$, the first-order change
in forward variance is
\begin{equation}
\label{eq:deltaFwdVar}
\delta\sigma^2_{\mathrm{fwd}}
\;=\;
\langle L,\,h_2\rangle_{\mathcal{S}_2}
\;-\;
\langle L,\,h_1\rangle_{\mathcal{S}_1},
\end{equation}
where $\langle L,h_i\rangle_{\mathcal{S}_i}=\sum_{s\in\mathcal{S}_i}L(s)\,h_i(s)$.
This is exact (by linearity of expectation) and depends only on the SPX
marginal components $h_1$, $h_2$.

\paragraph{Step 2: SSR sensitivity kernel.}
The linearized SSR rule~\eqref{eq:SSRLinearApprox} gives the first-order
change in VIX implied volatility:
\[
\delta\sigma_V(K)
\;=\; (\beta_V-1)\,\frac{K}{F_V}\,\partial_K\sigma_V(K)\,\delta F_V.
\]
The corresponding change in VIX call prices via Black's formula is:
\begin{equation}
\label{eq:deltaCV}
\delta C_V(K)
\;=\;
\underbrace{\Bigl[\Delta^{BS}_V(K)
+(\beta_V-1)\,\frac{K}{F_V}\,\partial_K\sigma_V(K)\,\mathcal{V}^{BS}_V(K)
\Bigr]}_{\displaystyle\Gamma(K)}\,\delta F_V,
\end{equation}
where $\Delta^{BS}_V(K)=\partial_{F_V}C_V(K)$ is the VIX option delta,
$\mathcal{V}^{BS}_V(K)=\partial_{\sigma_V}C_V(K)$ is the VIX option vega,
and $\Gamma(K)$ is the combined sensitivity kernel.
Applying Breeden-Litzenberger~\cite{breeden1978prices} to~\eqref{eq:deltaCV} gives the shape
vector
\begin{equation}
\label{eq:A_SSR}
A_{\mathrm{SSR}}(v) \;:=\; \frac{\partial^2\Gamma}{\partial K^2}\bigg|_{K=v},
\end{equation}
so that $h_V = A_{\mathrm{SSR}}\,\delta F_V$ once $\delta F_V$ is known.

\begin{lemma}[Positivity of the SSR denominator]
\label{lem:ssr_denom}
Assume:
\begin{enumerate}
\item[(B1)] \emph{(Boundary decay)} The VIX call surface satisfies
            $(K-F_V)^2|\Gamma'(K)|\to 0$ as $K\to 0^+$ and $K\to\infty$.
\item[(B2)] \emph{(Non-negative kernel)} $\Gamma(K)\geq 0$ for all $K>0$.
\end{enumerate}
Then $\langle v^2, A_{\mathrm{SSR}}\rangle > 0$.
\end{lemma}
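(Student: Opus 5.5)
The plan is to treat $\langle v^2, A_{\mathrm{SSR}}\rangle$ as the continuum integral $\int_0^\infty K^2\,\Gamma''(K)\,dK$, which is its grid analogue, and integrate by parts twice to move both strike derivatives off $\Gamma$ onto $K^2$. The continuum identity will be established first. The same manipulation then applies on the grid by summation by parts, provided the grid is padded as in Assumption~\ref{ass:interior_support}.

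First, replace $K^2$ by $(K-F_V)^2$. This is legitimate because $\int \Gamma''\,dK$ and $\int K\,\Gamma''\,dK$ should vanish. These two moments are the first-order changes in total VIX mass and in the VIX forward under the SSR bump. By construction, $\Gamma=\partial_{F_V}C_V$ along the SSR smile motion, so $\int \Gamma''=\delta(\text{mass})/\delta F_V=0$ and $\int K\Gamma''=\delta F_V/\delta F_V=1$.

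Since both moments are in fact fixed, this needs care:
\[
\int K^2\Gamma''\,dK=\int (K-F_V)^2\Gamma''\,dK+2F_V\int K\Gamma''\,dK-F_V^2\int\Gamma''\,dK
=\int (K-F_V)^2\Gamma''\,dK+2F_V .
\]
It therefore suffices to show $\int (K-F_V)^2\Gamma''\,dK\ge 0$, because $2F_V>0$.

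Second, integrate by parts once:
\[
\int (K-F_V)^2\Gamma''\,dK=\bigl[(K-F_V)^2\Gamma'\bigr]_0^\infty-2\int (K-F_V)\Gamma'\,dK .
\]
The boundary term vanishes by (B1). Integrating by parts again gives
\[
-2\int (K-F_V)\Gamma'\,dK=-2\bigl[(K-F_V)\Gamma\bigr]_0^\infty+2\int\Gamma\,dK .
\]
By (B2) the last integral is nonnegative. Combined with the $2F_V>0$ term, this yields strict positivity once the remaining boundary term is handled.

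The main obstacle is the boundary term $[(K-F_V)\Gamma]_0^\infty$, which (B1) does not directly control. At $K\to\infty$, $\Gamma\to0$ because call deltas and vegas decay, and I would argue that $K\Gamma\to0$ follows from the integrability implicit in (B1). At $K=0$, $\Gamma(0)=\Delta_V^{BS}(0)=1$, which gives a contribution $-2F_V\Gamma(0)=-2F_V$ that would exactly cancel the $+2F_V$.

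I would therefore redo the bookkeeping directly on $\int K^2\Gamma''$. Integrating by parts gives $-2\int K\Gamma'$, with the boundary term $K^2\Gamma'$ killed by (B1) up to the shift. A second integration by parts gives $2\int\Gamma\,dK$, because the boundary term $K\Gamma$ vanishes at $0$ and, by decay, at infinity. This route avoids the centering altogether and gives $\langle v^2,A_{\mathrm{SSR}}\rangle=2\int_0^\infty\Gamma\,dK\ge0$.

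Strictness then follows from $\Gamma\ge0$ together with $\Gamma\not\equiv0$: near $K=0$, $\Gamma\approx\Delta_V^{BS}\approx1>0$, and continuity gives a set of positive measure where $\Gamma>0$.

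On the discrete grid, I would mirror these steps with Abel summation, using the padded zero boundary values to kill the boundary sums.
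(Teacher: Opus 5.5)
Your final route is exactly the paper's proof: integrate $\int_0^\infty K^2\Gamma''(K)\,dK$ by parts twice, drop $[K^2\Gamma']_0^\infty$ by (B1) and $[K\Gamma]_0^\infty$ by boundedness at $0$ and decay at $\infty$, then get strict positivity of $2\int_0^\infty\Gamma\,dK$ from (B2), $\Gamma(0)=\Delta^{BS}_V(0)=1$ and continuity; the centered detour is superfluous, since its boundary term $-2F_V$ just cancels the $+2F_V$, as you found. Your aside that $K\Gamma\to0$ at infinity follows from (B1) is actually sounder than the paper's argument ($\Gamma'=o(K^{-2})$ with $\Gamma(\infty)=0$ gives $\Gamma=o(K^{-1})$, whereas the paper's bound $\Gamma\le\Delta^{BS}_V$ fails when $\beta_V>1$ and the skew is positive); the discrete Abel-summation remark is unnecessary and its premise is off, because the boundary sums involve $\Gamma$, which is $\approx1$ rather than $0$ at the low end of the VIX grid.
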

\begin{proof}
Integrate by parts twice with $u=K^2$ and $dv=\Gamma''(K)\,dK$:
\[
\int_0^\infty K^2\,\Gamma''(K)\,dK
\;=\;
\bigl[K^2\Gamma'(K)\bigr]_0^\infty
- 2\bigl[K\Gamma(K)\bigr]_0^\infty
+ 2\int_0^\infty\Gamma(K)\,dK.
\]
Both boundary terms vanish.
The first, $[K^2\Gamma'(K)]_0^\infty$, vanishes by (B1): since $K^2/(K-F_V)^2\to 1$
as $K\to\infty$ and $K^2/F_V^2\to 0$ as $K\to 0^+$, (B1) implies
$K^2|\Gamma'(K)|\to 0$ at both endpoints.
The second, $[2K\Gamma(K)]_0^\infty$, vanishes trivially at $K\to 0^+$
(since $K\to 0$ with $\Gamma(0)$ finite) and exponentially at $K\to\infty$
(since $\Gamma(K)\leq\Delta^{BS}_V(K)$ decays faster than any polynomial).
Hence
\[
\langle v^2,A_{\mathrm{SSR}}\rangle
\;=\; 2\int_0^\infty\Gamma(K)\,dK.
\]
Under (B2), $\Gamma\geq 0$ on $(0,\infty)$.
Since $\Gamma(0)=\Delta^{BS}_V(0)=1>0$ and $\Gamma$ is continuous,
$\int_0^\infty\Gamma(K)\,dK>0$, so $\langle v^2,A_{\mathrm{SSR}}\rangle>0$.
\end{proof}

\begin{remark}[When (B2) holds]
\label{rem:B2}
Both terms of $\Gamma(K)=\Delta^{BS}_V(K)+(\beta_V-1)\frac{K}{F_V}
\partial_K\sigma_V(K)\mathcal{V}^{BS}_V(K)$ are non-negative when
$\beta_V\geq 1$ and the VIX skew is non-negative ($\partial_K\sigma_V\geq 0$),
which is the empirically observed regime for VIX ($\beta_V\approx 1.2$,
positive skew).
At $\beta_V=1$ (sticky-strike), $\Gamma=\Delta^{BS}_V>0$ always.
For $\beta_V<1$, (B2) requires that the delta term dominates the
(now negative) vega contribution, which holds whenever $|1-\beta_V|$ is
not too large relative to the skew level.
\end{remark}

\paragraph{Step 3: Variance-anchored determination of $\delta F_V$.}
The MOT variance-consistency condition $\mathbb{E}_{\mu^\star}[V^2]=\sigma^2_{\mathrm{fwd}}$
links the second moment of $\mu_V$ to the forward variance.
Differentiating the exact identity
$\sigma^2_{\mathrm{fwd}} = F_V^2 + \mathrm{Var}_{\mu_V}(V)$
and substituting $h_V = A_{\mathrm{SSR}}\,\delta F_V$:
\begin{align*}
\delta\sigma^2_{\mathrm{fwd}}
&\;=\; 2F_V\,\delta F_V + \delta\mathrm{Var}_{\mu_V}(V) \\
&\;=\; 2F_V\,\delta F_V
  + \bigl(\langle v^2, A_{\mathrm{SSR}}\rangle - 2F_V\bigr)\delta F_V \\
&\;=\; \langle v^2,\,A_{\mathrm{SSR}}\rangle\,\delta F_V,
\end{align*}
where the second line uses
$\delta\mathrm{Var}_{\mu_V}(V) = \langle v^2, h_V\rangle - 2F_V\,\delta F_V
= \langle v^2, A_{\mathrm{SSR}}\rangle\,\delta F_V - 2F_V\,\delta F_V$.
By Lemma~\ref{lem:ssr_denom} (under (B1)--(B2)), the denominator
$\langle v^2,A_{\mathrm{SSR}}\rangle>0$, so solving gives the exact,
non-circular formula:
\begin{equation}
\label{eq:deltaFV}
\delta F_V
\;=\;
\frac{\delta\sigma^2_{\mathrm{fwd}}}{\langle v^2,\,A_{\mathrm{SSR}}\rangle_{\mathcal{V}}}.
\end{equation}

\paragraph{Step 4: Breeden--Litzenberger inversion.}
With $\delta F_V$ determined by~\eqref{eq:deltaFV}, the VIX marginal
perturbation is:
\begin{equation}
\label{eq:hV}
h_V(v)
\;=\; A_{\mathrm{SSR}}(v)\,\delta F_V
\;=\;
\frac{\partial^2\Gamma}{\partial K^2}\bigg|_{K=v}
\cdot
\frac{\delta\sigma^2_{\mathrm{fwd}}}{\langle v^2,\,A_{\mathrm{SSR}}\rangle},
\end{equation}
where $A_{\mathrm{SSR}}$ and $\Gamma(K)$ are defined in~\eqref{eq:A_SSR}
and~\eqref{eq:deltaCV}.
The mass-zero condition $\sum_v h_V(v)=0$ follows because
$\langle 1, A_{\mathrm{SSR}}\rangle = 0$.
By the fundamental theorem of calculus,
$\int_0^\infty \Gamma''(K)\,dK = \bigl[\Gamma'(K)\bigr]_0^\infty$,
where $\Gamma'(K) = -\partial_{F_V}P(V>K)$ is the derivative of the
(negated) VIX risk-neutral survival function with respect to $F_V$.
At $K\to 0^+$, $P(V>0)=1$ identically (VIX is strictly positive),
so $\Gamma'(0^+)=0$; at $K\to\infty$, $P(V>K)\to 0$ identically,
so $\Gamma'(\infty)=0$.
Hence $\langle 1, A_{\mathrm{SSR}}\rangle = 0 - 0 = 0$.

\paragraph{Complete augmented perturbation vector.}
Assembling all three components, the full admissible perturbation vector is
\begin{equation}
\label{eq:h_augmented}
h
\;=\;
\bigl(\,h_1,\;h_V,\;h_2,\;\underbrace{0,\ldots,0}_{\text{financial slots}}\,\bigr)
\;\in\;\mathbb{R}^K,
\end{equation}
where $h_1$, $h_2$ are given by Section~\ref{sec:financial_perturbations},
$h_V$ by~\eqref{eq:hV}, and the financial slots remain zero since the
martingale and variance-consistency constraints are structural.
The risk sensitivity of any payoff $G$ under the fully propagated
SPX perturbation is then
\[
\Pi'(0) \;=\; g_G^\top H^+ h,
\]
with $h$ from~\eqref{eq:h_augmented}.
    
\section{Algorithm For SPX--VIX Risk Without Recalibration }
\subsection{Base Calibration}
\label{subsec:base}
We first compute a base joint coupling between the SPX state $S_1$, the VIX variable $V$, and the future SPX state $S_2$.
The goal of this calibration step is to construct a probability measure
\[
\mu(s_1,v,s_2)
\]
that matches the prescribed SPX and VIX marginals while simultaneously enforcing the key financial consistency conditions linking SPX and VIX dynamics.

Specifically, the calibrated coupling must satisfy:

\begin{itemize}
\item the marginal constraints
\[
(\mathcal{A}_{\mathrm{marg}}\mu)_1=\mu_1, \qquad (\mathcal{A}_{\mathrm{marg}}\mu)_V=\mu_V, \qquad (\mathcal{A}_{\mathrm{marg}}\mu)_2=\mu_2,
\]
corresponding to the SPX spot, VIX, and future SPX marginals implied by market prices;

\item the martingale condition
\[
\mathbb{E}[S_2 \mid S_1,V]=S_1,
\]

\item the SPX--VIX variance consistency condition
\begin{equation}
\label{eq:consistency_condition}
\mathbb{E}[L(S_2/S_1) \mid S_1,V]=V^2,
\end{equation}
which links the VIX level to the expected forward variance of the SPX.
\end{itemize}

Numerically, we solve this constrained calibration problem using a nested scheme.
An outer Sinkhorn iteration enforces the marginal constraints via multiplicative scaling factors, while an inner Newton (or damped Newton) correction enforces the conditional martingale and variance-consistency conditions at each $(S_1,V)$ node.

This structure closely follows the constrained calibration framework introduced in the SPX--VIX joint calibration methodology of ~\cite{guyon2020spxvix}, where optimal transport techniques are combined with financial consistency constraints to produce arbitrage-consistent joint distributions.

The resulting calibrated coupling $\mu^\star$ serves as the base distribution for the subsequent perturbation and risk-generation procedures described in the following sections.

\noindent\textbf{Algorithm 1: Base Calibration Via Sinkhorn With Newton/LM Enforcement}
\label{algo:base}
\begin{enumerate}
  \item \textbf{Inputs.} Discrete grids $\mathcal{S}_1,\mathcal{V},\mathcal{S}_2$; target marginals $\mu_1,\mu_V,\mu_2$;
  prior $\bar\mu(s_1,v,s_2)$; log-return functional $L(\cdot)$; marginal tolerance $\varepsilon_{\mathrm{marg}}$;
  financial tolerance $\varepsilon_{\mathrm{fin}}$; Newton damping $\lambda\ge0$.
  \item \textbf{Outputs.} Calibrated coupling $\mu^\star(s_1,v,s_2)$ and Fisher information matrix $H$.
  \item \textbf{Initialize.}
    \[
      a(s_1)\gets1,\quad b(v)\gets1,\quad c(s_2)\gets1,
    \qquad
      \Delta_M(s_1,v)\gets0,\quad \Delta_C(s_1,v)\gets0.
    \]
    Form initial Gibbs coupling
    \[
      \mu(s_1,v,s_2)\propto\bar\mu(s_1,v,s_2)\,a(s_1)b(v)c(s_2)
      \exp\{\Delta_M(s_1,v)(s_2-s_1)+\Delta_C(s_1,v)(L(s_2/s_1)-v^2)\}.
    \]
  \item \textbf{Outer loop:} repeat until marginal errors $\le\varepsilon_{\mathrm{marg}}$ and financial residuals $\le\varepsilon_{\mathrm{fin}}$:
    \begin{enumerate}
      \item \textbf{Sinkhorn marginal updates:} for all grid points update
        \[
          a(s_1)\leftarrow
          \frac{\mu_1(s_1)}{\sum_{v,s_2}\bar\mu(s_1,v,s_2)\,b(v)c(s_2)\,
            e^{\Delta_M(s_1,v)(s_2-s_1)+\Delta_C(s_1,v)(L(s_2/s_1)-v^2)}},
        \]
        \[
          b(v)\leftarrow
          \frac{\mu_V(v)}{\sum_{s_1,s_2}\bar\mu(s_1,v,s_2)\,a(s_1)c(s_2)\,
            e^{\Delta_M(s_1,v)(s_2-s_1)+\Delta_C(s_1,v)(L(s_2/s_1)-v^2)}},
        \]
        \[
          c(s_2)\leftarrow
          \frac{\mu_2(s_2)}{\sum_{s_1,v}\bar\mu(s_1,v,s_2)\,a(s_1)b(v)\,
            e^{\Delta_M(s_1,v)(s_2-s_1)+\Delta_C(s_1,v)(L(s_2/s_1)-v^2)}}.
        \]
        Refresh $\mu(s_1,v,s_2)$ using the Gibbs map above.
      \item \textbf{Conditional Newton/LM enforcement:} for each $(s_1,v)\in\mathcal{S}_1\times\mathcal{V}$ do
        \begin{enumerate}
          \item repeat up to inner iterations:
            \[
              w_{s_1,v}(s_2)\;=\;\frac{\mu(s_1,v,s_2)}{\sum_{u\in\mathcal{S}_2}\mu(s_1,v,u)}
            \]
            (conditional law over $s_2$),
            residuals
            \[
              r_M(s_1,v)=\sum_{s_2}w_{s_1,v}(s_2)(s_2-s_1),\qquad
              r_C(s_1,v)=\sum_{s_2}w_{s_1,v}(s_2)\bigl(L(s_2/s_1)-v^2\bigr),
            \]
            Jacobian entries
            \[
              J_{11}=\operatorname{Var}_{w}(s_2-s_1),\quad
              J_{22}=\operatorname{Var}_{w}\bigl(L(s_2/s_1)-v^2\bigr),
            \]
            \[
              J_{12}=J_{21}=\operatorname{Cov}_{w}\bigl(s_2-s_1,\;L(s_2/s_1)-v^2\bigr),
            \]
            solve damped Newton system
            \[
              \begin{pmatrix}J_{11}+\lambda & J_{12}\\[4pt] J_{21} & J_{22}+\lambda\end{pmatrix}
              \begin{pmatrix}\delta_M\\[2pt]\delta_C\end{pmatrix}
              =
              -\begin{pmatrix}r_M\\[2pt] r_C\end{pmatrix},
            \]
            update multipliers
            \[
              \Delta_M(s_1,v)\leftarrow\Delta_M(s_1,v)+\delta_M,\qquad
              \Delta_C(s_1,v)\leftarrow\Delta_C(s_1,v)+\delta_C,
            \]
            refresh $\mu(s_1,v,s_2)$.
          \item until $\sqrt{r_M(s_1,v)^2+r_C(s_1,v)^2}\le\varepsilon_{\mathrm{fin}}$ or inner cap reached.
        \end{enumerate}
    \end{enumerate}
  \item On convergence set $\mu^\star\gets\mu$.
  \item \textbf{Fisher information.} With sufficient statistics $T_i(x)$ compute
    \[
      H_{ij}=\sum_{x}\mu^\star(x)\bigl(T_i(x)-\mathbb{E}_{\mu^\star}[T_i]\bigr)
      \bigl(T_j(x)-\mathbb{E}_{\mu^\star}[T_j]\bigr),
    \quad x=(s_1,v,s_2).
    \]
  \item \textbf{Return:} $\mu^\star,H$.
\end{enumerate}
\normalsize
\subsection{POT Risk Computation: The Linear Response (LR) Approach}
\label{subsec:risk_computation}

We compute first-order (Gâteaux) price sensitivities under small marginal or constraint perturbations using the Fisher information matrix from the calibrated exponential family. Let $\mu^\star$ denote the calibrated coupling and $H$ the Fisher matrix; let $h$ be the stacked perturbation vector of marginal and constraint shocks.  The perturbation vector $h$ introduced above represents the first-order change in the calibration constraints.
As discussed in Section~\ref{subsec:ssr_compatibility}, the constraint system is augmented to incorporate the SSR dynamics as additional linear relations linking SPX and VIX perturbations.
Consequently, the perturbation vector $h$ is not arbitrary but belongs to the augmented constraint space defined in Section~\ref{subsec:ssr_compatibility}.

The financial constraints are preserved automatically: since the financial
block of $\dot b$ is zero by construction~\eqref{eq:perturbation_vector},
the variation $\delta\mu = (\mu^\varepsilon - \mu^\star)/\varepsilon$ satisfies
$\mathcal{A}_{\mathrm{fin}}\delta\mu = 0$ as a consequence of the perturbed
problem structure~\eqref{eq:primal_eps}; no projection of $h$ is needed
or meaningful.

Let the payoff be $G:\mathcal{X}\to\mathbb{R}$ with price $\Pi(\varepsilon) = \mathbb{E}_{\mu^\varepsilon}[G]$ and baseline $\Pi(0)=\mathbb{E}_{\mu^\star}[G]$. The first-order risk is
$$
\Pi'(0) \;=\; g^\top \dot{\theta},
$$
where $\dot{\theta}$ solves the linear response system $H\,\dot{\theta} = h$, and $g$ is the covariance vector of $G$ with the sufficient statistics.

\paragraph{Inputs.}
Calibrated coupling $\mu^\star$; Fisher matrix $H$; full perturbation
vector $\dot b=(h_1,h_V,h_2,\mathbf{0})\in\mathbb{R}^K$~\eqref{eq:perturbation_vector};
payoff $G(x)$; optional damping $\lambda\ge 0$ and solver tolerances.

\paragraph{Outputs.}
First-order risk $\Pi'(0)$; optionally the dual variation $\dot\theta^0$ (for greeks attribution).

\medskip
\noindent\textbf{Algorithm 2: Linear Response (LR) Risk Computation.}
\begin{enumerate}
\item \textbf{Solve the regularized linear response system.}
Compute the dual variation by solving
$$
(H + \lambda I)\,\dot\theta^0 \;=\; \dot b,
$$
where $\lambda\geq 0$ is a Tikhonov regularization parameter.
By Lemmas~\ref{lem:range_gG}--\ref{lem:range_db} and Theorem~\ref{thm:risk_representation},
the price sensitivity $g^\top\dot{\theta}$ converges to $g^\top H^+h$
as $\lambda\downarrow 0$, because $g\in\mathrm{range}(H)$ annihilates the
$\lambda^{-1}$ null-space component of $(H+\lambda I)^{-1}\dot b$.
In practice, $\lambda>0$ suppresses amplification of near-null directions
of $H$ and controls numerical stability; $\lambda=0$ recovers the exact
pseudoinverse solution when $H$ is numerically well-conditioned.
Use the same gauge as in calibration.

\item \textbf{Compute the covariance vector $g$.}
Let $\{T_i\}$ be the sufficient statistics (coordinates of the dual potentials). Compute
$$
g_i \;=\; \sum_{x \in X} \mu^\star(x)\,\Big(G(x) - \mathbb{E}_{\mu^\star}[G]\Big)\,\Big(T_i(x) - \mathbb{E}_{\mu^\star}[T_i]\Big),
\qquad i=1,\dots,\dim(\theta).
$$

\item \textbf{Evaluate first-order risk.}
$$
\Pi'(0) \;=\; g^\top \dot\theta^0.
$$
Return $\Pi'(0)$ (and $\dot\theta^0$ if needed for greeks attribution).
\end{enumerate}

\paragraph{Notes.}
\begin{itemize}
\item For augmented constraint sets (e.g., SSR–adjusted VIX constraints), $H$ and $h$ are augmented accordingly; the same steps apply with the enlarged system.
\item The covariance step can be reused across multiple payoffs once $\mu^\star$ and $\{T_i\}$ are fixed; only $g$ changes with $G$.
\end{itemize}
\section{Dimensional Reduction (DR) For POT}
\label{sec:dimension_reduction}
An alternative to computing first-order sensitivities via the Fisher information is to exploit the conditional
coupling invariance directly and re-solve a reduced entropic projection on $(S_1,V)$.  Under a conditional kernel invariance assumption, the perturbed three-dimensional problem is equivalent to a two-dimensional entropic projection for
$\gamma$ on $(S_1,V)$, so one may obtain the exact perturbed coupling in the reduced class by solving a
Sinkhorn-type projection that matches the perturbed VIX marginal implied by the SSR propagation of the SPX shock, while remaining close in entropy to the base reduced coupling.
This reduced-OT approach retains convexity and numerical stability and, unlike the Fisher linearization,
captures nonlinear effects for finite (non-infinitesimal) shocks insofar as the dimension reduction assumption remains numerically accurate.  In practice, even though the reduced OT recipe involves a mini-recalibration, the algorithm takes only 5--6 Sinkhorn iterations to converge, since the perturbed coupling is close to the base.
\subsection{Base Conditional Structure}

Let $\mu^\star \in \Delta(\mathcal{X})$ denote the calibrated optimal coupling
on $\mathcal{X} = \mathcal{S}_1 \times \mathcal{V} \times \mathcal{S}_2$.
Its marginal over $(S_1,V)$ and conditional kernel of $S_2$ given $(S_1,V)$
are defined by
\[
\mu^\star_{1,V}(s_1,v)
=
\sum_{s_2 \in \mathcal{S}_2}
\mu^\star(s_1,v,s_2),
\qquad
\kappa^\star(s_2 \mid s_1,v)
=
\frac{\mu^\star(s_1,v,s_2)}
{\mu^\star_{1,V}(s_1,v)},
\]
so that the coupling admits the disintegration
\[
\mu^\star(s_1,v,s_2)
=
\kappa^\star(s_2 \mid s_1,v)\,
\mu^\star_{1,V}(s_1,v).
\]


\subsection{Conditional Coupling Invariance Assumption}

\begin{assumption}[Conditional Coupling Invariance]
\label{ass:conditional_invariance}
Under sufficiently small perturbations of the SPX marginals,
the conditional distribution of $S_2$ given $(S_1,V)$ remains unchanged, i.e.,

\[
\kappa^\varepsilon(s_2 \mid s_1,v)
=
\kappa^\star(s_2 \mid s_1,v)
\quad
\text{for all } (s_1,v,s_2).
\]

\end{assumption}

This assumption reflects that the structural dependence
between $S_2$ and $(S_1,V)$ is stable under small marginal shocks.

\begin{remark}[Endogeneity of the $S_2$ marginal under DR]
\label{rem:S2_endogenous}
Under Assumption~\ref{ass:conditional_invariance}, the $S_2$ marginal of
the perturbed coupling is fully determined by $\kappa^\star$ and
$\gamma^\varepsilon$:
\[
\mu_2^{DR}(s_2) \;=\; \sum_{s_1,v}\kappa^\star(s_2\mid s_1,v)\,\gamma^\varepsilon(s_1,v).
\]
This is an \emph{endogenous} output of the DR method, not an independently
prescribed target.
In particular, if a spot or volatility bump prescribes an exogenous
$\mu_2^\varepsilon \neq \mu_2^{DR}$ (as in Section~\ref{sec:financial_perturbations}),
then Assumption~\ref{ass:conditional_invariance} and the exogenous $\mu_2^\varepsilon$
are generically incompatible: enforcing both simultaneously would require
perturbing $\kappa^\star$, violating the assumption.
The DR method therefore implicitly adopts Spec~B (see
Remark~\ref{rem:h2_LR_only}): it matches $\mu_1^\varepsilon$ and
$\mu_V^\varepsilon$ exactly, and accepts $\mu_2^{DR}$ as the best
$\kappa^\star$-consistent approximation to the exogenous $\mu_2^\varepsilon$.
\end{remark}

\begin{remark}[$S_2$ marginal: LR vs.\ DR]
\label{rem:h2_LR_only}
The perturbation vectors of Section~\ref{sec:financial_perturbations}
have $h_2\neq 0$: both the spot and volatility bumps change the prescribed
$S_2$ marginal.
The \emph{Linear Response} (LR) method enforces all three marginal targets
$(\mu_1^\varepsilon,\mu_V^\varepsilon,\mu_2^\varepsilon)$ as independent inputs
to the Fisher linear system $H\dot\theta^0=\dot b$.
Under the \emph{Dimensional Reduction} (DR) method introduced here,
$\mu_2^\varepsilon$ is instead \emph{endogenous}: it is determined as
$\mu_2^{DR}=\sum_{s_1,v}\kappa^\star\gamma^\varepsilon$ and is not independently
prescribed (Remark~\ref{rem:S2_endogenous}).
The DR method enforces only $\mu_1^\varepsilon$ and $\mu_V^\varepsilon$;
the resulting $\mu_2^{DR}$ is an approximation to the market-implied
$\mu_2^\varepsilon$ whose gap is $O(\varepsilon)$ in general:
\[
\mu_2^\varepsilon(s_2) - \mu_2^{DR}(s_2)
\;=\; \varepsilon\bigl[h_2(s_2) - (L_{\kappa^\star}\dot\gamma^0)(s_2)\bigr]
     + O(\varepsilon^2),
\]
where $L_{\kappa^\star}\dot\gamma^0(s_2):=\sum_{s_1,v}\kappa^\star(s_2\mid s_1,v)\dot\gamma^0(s_1,v)$.
Assumption~\ref{ass:conditional_invariance} alone does not make this
leading-order term vanish; it would require the additional compatibility
condition $h_2 = L_{\kappa^\star}\dot\gamma^0$, which is not implied by
the assumption and does not hold for generic market bumps.
The validity of the DR approximation is therefore supported empirically:
the close agreement between LR and DR risk numbers across VIX futures
and options in Section~\ref{subsec:risk_bench} demonstrates that the
gap is small in the market environments tested.
\end{remark}


\subsection{Exact Dimensional Reduction}

\begin{theorem}[Exact Reduction to Two-Dimensional Entropic Projection]
\label{thm:dim_reduction}
Assume $\mu^\star$ is strictly positive on $\mathcal{X}$
and Assumption~\ref{ass:conditional_invariance} holds.

Then the perturbed entropic projection problem

\[
\inf_{\mu \in \mathcal{P}^\varepsilon}
D(\mu \,\|\, \mu^\star)
\]

reduces exactly to the two-dimensional problem

\[
\inf_{\gamma \in \mathcal{Q}^\varepsilon}
D(\gamma \,\|\, \mu^\star_{1,V}),
\]

where $\mathcal{Q}^\varepsilon$ is the set of probability measures $\gamma$
on $\mathcal{S}_1\times\mathcal{V}$ satisfying the perturbed
$(S_1,V)$-marginal constraints:
\begin{equation}
\label{eq:Q_eps}
\mathcal{Q}^\varepsilon \;:=\;
\Bigl\{\gamma\in\Delta(\mathcal{S}_1\times\mathcal{V}):\;
\textstyle\sum_v\gamma(s_1,v)=\mu_1^\varepsilon(s_1),\;
\sum_{s_1}\gamma(s_1,v)=\mu_V^\varepsilon(v)\Bigr\}.
\end{equation}
The $S_2$ marginal is \emph{endogenous}: it is determined as
$\mu_2^{DR}(s_2)=\sum_{s_1,v}\kappa^\star(s_2\mid s_1,v)\gamma^\varepsilon(s_1,v)$
rather than independently prescribed (see Remark~\ref{rem:S2_endogenous}).
The full three-dimensional coupling is reconstructed by

\[
\mu^\varepsilon(s_1,v,s_2)
=
\kappa^\star(s_2 \mid s_1,v)\,
\gamma^\varepsilon(s_1,v).
\]

\end{theorem}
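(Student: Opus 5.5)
The reduction rests on the chain rule for relative entropy, applied to the disintegrations of Assumption~\ref{ass:conditional_invariance}. We restrict throughout to the admissible class singled out by that assumption: couplings of the form $\mu=\kappa^\star\gamma$ with $\gamma\in\Delta(\mathcal{S}_1\times\mathcal{V})$. Since $\mu^\star>0$, both $\mu^\star_{1,V}>0$ and $\kappa^\star>0$, so every quotient below is well defined.

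\textbf{Step 1: entropy identity.} For any $\mu=\kappa^\star\gamma$ the ratio $\mu/\mu^\star=\kappa^\star\gamma/(\kappa^\star\mu^\star_{1,V})=\gamma/\mu^\star_{1,V}$ does not depend on $s_2$. Summing out $s_2$ with $\sum_{s_2}\kappa^\star(s_2\mid s_1,v)=1$ gives
\[
D(\mu\|\mu^\star)=\sum_{s_1,v}\gamma(s_1,v)\ln\frac{\gamma(s_1,v)}{\mu^\star_{1,V}(s_1,v)}=D(\gamma\|\mu^\star_{1,V}).
\]
The conditional KL term in the chain rule vanishes identically because the kernel is frozen.

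\textbf{Step 2: constraint correspondence.} We check that $\mu=\kappa^\star\gamma$ meets the perturbed constraints of $\mathcal{P}^\varepsilon$ (with the $S_2$ marginal left free, per Remark~\ref{rem:S2_endogenous}) if and only if $\gamma\in\mathcal{Q}^\varepsilon$.

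The financial blocks are automatic. At each node,
\[
(\mathcal{A}_{\mathrm{fin}}\mu)^M_{ij}=\gamma(s_1^i,v^j)\sum_{s_2}\kappa^\star(s_2\mid s_1^i,v^j)(s_2-s_1^i)=\gamma(s_1^i,v^j)\,(\mathcal{A}_{\mathrm{fin}}\mu^\star)^M_{ij}/\mu^\star_{1,V}(s_1^i,v^j)=0.
\]
The same computation applies to the variance block, using the calibrated conditions of Corollary~\ref{cor:dual_optimality}. Normalisation of $\mu$ is equivalent to normalisation of $\gamma$.

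The $S_1$ and $V$ marginals of $\mu$ equal those of $\gamma$, so they match $\mu_1^\varepsilon$ and $\mu_V^\varepsilon$ exactly when $\gamma\in\mathcal{Q}^\varepsilon$. The $S_2$ marginal is $\sum_{s_1,v}\kappa^\star\gamma$, which is the endogenous $\mu_2^{DR}$.

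Hence $\gamma\mapsto\kappa^\star\gamma$ is a bijection between $\mathcal{Q}^\varepsilon$ and the admissible reduced 3D class. Its inverse is marginalisation onto $(S_1,V)$, well defined since $\kappa^\star$ is a probability kernel.

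\textbf{Step 3: transfer of minimisers.} By Steps 1 and 2 the two problems have objectives that agree under the bijection, so their infima coincide and minimisers correspond. The 2D problem is a standard entropic projection with two marginal constraints.

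For $\mathcal{Q}^\varepsilon\neq\emptyset$ we can use the explicit point $\mu_1^\varepsilon\otimes\mu_V^\varepsilon$, which is in fact strictly positive for small $\varepsilon$. Existence, uniqueness and positivity of $\gamma^\varepsilon$ then follow exactly as in Theorem~\ref{thm:existence}, with $\mu^\star_{1,V}>0$ as prior. Setting $\mu^\varepsilon=\kappa^\star\gamma^\varepsilon$ gives the reconstruction formula, and its $S_2$ marginal is $\mu_2^{DR}$.

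\textbf{Main obstacle.} The delicate point is the interpretation of ``reduces exactly'': the literal $\mathcal{P}^\varepsilon$ prescribes $\mu_2^\varepsilon$, which is generically incompatible with a frozen kernel. I would state explicitly that, under Assumption~\ref{ass:conditional_invariance}, the feasible set is $\mathcal{P}^\varepsilon$ intersected with $\{\mu=\kappa^\star\gamma\}$ and with the $S_2$ constraint replaced by its endogenous value, as in Remarks~\ref{rem:S2_endogenous} and~\ref{rem:h2_LR_only}. The equivalence is then the exact statement proved above.

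A secondary point is that the objective uses prior $\mu^\star$ rather than $\bar\mu$. This is the nearby-projection viewpoint of the economic interpretation, so no further argument is needed beyond noting it.
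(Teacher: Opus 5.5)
Your Steps 1 and 2 are the same computation as the paper's proof. The kernel cancels inside the logarithm, $\sum_{s_2}\kappa^\star(s_2\mid s_1,v)=1$ collapses the sum to $D(\gamma\|\mu^\star_{1,V})$, the $(S_1,V)$-marginals of $\kappa^\star\gamma$ are those of $\gamma$, and the martingale and variance-consistency residuals are inherited from $\kappa^\star$. You are more careful than the paper in two places: you check the financial blocks explicitly, and you secure existence, uniqueness and positivity of $\gamma^\varepsilon$ via the strictly positive point $\mu_1^\varepsilon\otimes\mu_V^\varepsilon$.

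The gap is that you never connect your problem to the one in the statement, $\inf_{\mu\in\mathcal{P}^\varepsilon}D(\mu\|\mu^\star)$. You replace $\mathcal{P}^\varepsilon$ by the kernel-preserving class with the $S_2$ target deleted, and prove that \emph{this} problem is equivalent to the two-dimensional one. A telltale sign is that Assumption~\ref{ass:conditional_invariance} plays no logical role in your Steps 1--3: they hold verbatim without it. In the paper the assumption does its work exactly at this junction. It is read as a statement about the minimizer of the full problem over $\mathcal{P}^\varepsilon$, which therefore lies in $\mathcal{R}^\varepsilon=\{\mu\in\mathcal{P}^\varepsilon:\mu=\kappa^\star\gamma\}$. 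Hence $\inf_{\mathcal{P}^\varepsilon}=\inf_{\mathcal{R}^\varepsilon}$.

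Your concern about $\mu_2^\varepsilon$ is legitimate, and the paper's proof shares it. $\mathcal{R}^\varepsilon$ still carries the constraint $\sum_{s_1,v}\kappa^\star\gamma=\mu_2^\varepsilon$, which the paper drops without argument when it says the constraints reduce to $\mathcal{Q}^\varepsilon$. But you do not need to change the theorem: the hypothesis excludes the incompatible case and makes that constraint slack.

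\emph{Gibbs form.} Arguing as in Theorem~\ref{thm:existence} with prior $\mu^\star$, using the positive feasible point built in Theorem~\ref{thm:wellposed}, the full minimizer is proportional to $\mu^\star e^{E}$, where $E=u_1(s_1)+u_V(v)+u_2(s_2)+\Delta_M(s_1,v)(s_2-s_1)+\Delta_C(s_1,v)\bigl(L(s_2/s_1)-v^2\bigr)$. Incidentally, $\ln(\mu^\star/\bar\mu)$ is a constant plus a linear combination of the $T_k$, so it integrates to the same value against every $\mu\in\mathcal{P}^\varepsilon$. The two priors therefore share the same minimizer there, which settles your secondary point.

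\emph{Kernel invariance.} It forces $u_2(s_2)+\Delta_M(s_1,v)\,s_2-\tfrac{2}{\tau}\Delta_C(s_1,v)\ln s_2$ to be constant in $s_2$ on every fiber. Subtracting two fibers and using the linear independence of $\{1,s_2,\ln s_2\}$ on $\mathcal{S}_2$ (Assumption~\ref{ass:non_deg}) makes $\Delta_M$ and $\Delta_C$ node-independent. Hence $E=f(s_1)+g(v)+\mathrm{const}$.

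\emph{Pythagorean step.} The $(S_1,V)$-marginal of the full minimizer is therefore $\hat\gamma\propto\mu^\star_{1,V}e^{f(s_1)+g(v)}$, and it lies in $\mathcal{Q}^\varepsilon$. For every $\gamma\in\mathcal{Q}^\varepsilon$ one has $D(\gamma\|\mu^\star_{1,V})=D(\gamma\|\hat\gamma)+D(\hat\gamma\|\mu^\star_{1,V})$, so $\hat\gamma=\gamma^\varepsilon$. This gives the literal equality of the two infima and the reconstruction formula. It also shows in passing that $\mu_2^{DR}=\mu_2^\varepsilon$ whenever the assumption holds.

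With the paper's first step and this argument added, your proposal proves the theorem as stated. Without them it proves only the Spec~B reduction, which is what Algorithm~3 computes but not what the theorem asserts.
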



\begin{proof}
Consider the restricted class of couplings that share the base conditional
kernel $\kappa^\star$:
\[
\mathcal{R}^\varepsilon
\;:=\;
\bigl\{\,\mu\in\mathcal{P}^\varepsilon :
\mu(s_1,v,s_2)=\kappa^\star(s_2\mid s_1,v)\,\gamma(s_1,v)
\text{ for some probability measure }\gamma\text{ on }\mathcal{S}_1\times\mathcal{V}
\,\bigr\}.
\]
By Assumption~\ref{ass:conditional_invariance}, the optimizer of the full
problem $\inf_{\mathcal{P}^\varepsilon}D(\mu\|\mu^\star)$ has its conditional
kernel equal to $\kappa^\star$ for all sufficiently small $\varepsilon$; hence it
belongs to $\mathcal{R}^\varepsilon$.
The full infimum therefore coincides with the restricted infimum:
\[
\inf_{\mu\in\mathcal{P}^\varepsilon}D(\mu\|\mu^\star)
\;=\;
\inf_{\mu\in\mathcal{R}^\varepsilon}D(\mu\|\mu^\star).
\]
It now suffices to evaluate the entropy for any $\mu\in\mathcal{R}^\varepsilon$.
Writing $\mu(s_1,v,s_2)=\kappa^\star(s_2\mid s_1,v)\,\gamma(s_1,v)$ and substituting
into the relative entropy:

\[
D(\mu^\varepsilon \,\|\, \mu^\star)
=
\sum_{s_1,v,s_2}
\mu^\varepsilon(s_1,v,s_2)
\ln
\frac{\mu^\varepsilon(s_1,v,s_2)}
{\mu^\star(s_1,v,s_2)}.
\]

Using the disintegration formulas for $\mu^\varepsilon$ and $\mu^\star$:

\[
=
\sum_{s_1,v,s_2}
\kappa^\star(s_2 \mid s_1,v)\,
\gamma(s_1,v)
\ln
\frac{\kappa^\star(s_2 \mid s_1,v)\,\gamma(s_1,v)}
{\kappa^\star(s_2 \mid s_1,v)\,\mu^\star_{1,V}(s_1,v)}.
\]

Canceling $\kappa^\star(s_2 \mid s_1,v)$ inside the logarithm yields

\[
=
\sum_{s_1,v,s_2}
\kappa^\star(s_2 \mid s_1,v)\,
\gamma(s_1,v)
\ln
\frac{\gamma(s_1,v)}
{\mu^\star_{1,V}(s_1,v)}.
\]

Since for each $(s_1,v)$,
$\sum_{s_2}\kappa^\star(s_2 \mid s_1,v)=1$,
we obtain

\[
D(\mu^\varepsilon \,\|\, \mu^\star)
=
\sum_{s_1,v}
\gamma(s_1,v)
\ln
\frac{\gamma(s_1,v)}
{\mu^\star_{1,V}(s_1,v)}
=
D(\gamma \,\|\, \mu^\star_{1,V}).
\]

Thus the three-dimensional projection problem
is equivalent to the two-dimensional entropic projection.

The perturbed constraints on $\mu$ reduce to the $(S_1,V)$-marginal
constraints~\eqref{eq:Q_eps} on $\gamma$; the $S_2$ marginal is
determined endogenously by $\kappa^\star\gamma^\varepsilon$.
The reconstructed coupling preserves the conditional martingale and
variance-consistency relations inherited from the base calibration,
since these depend only on the fixed conditional kernel $\kappa^\star$.
\end{proof}
\paragraph{Computational implication.}

The dimensional reduction has an important algorithmic consequence for risk
generation.

In the base calibration, the entropic martingale optimal transport problem
must enforce the martingale constraint
\[
\mathbb{E}[S_2|S_1,V] = S_1,
\]
which couples the $(S_1,V,S_2)$ variables and requires solving the full
three-dimensional Sinkhorn calibration.

In contrast, under Assumption~\ref{ass:conditional_invariance} the
perturbed coupling takes the form
\[
\mu^\varepsilon(s_1,v,s_2)
=
\kappa^\star(s_2\mid s_1,v)\,\gamma^\varepsilon(s_1,v),
\]
so that the martingale and variance-consistency constraints are automatically
satisfied by the fixed $\kappa^\star$.
The perturbed problem reduces to the two-dimensional entropic projection
\eqref{eq:Q_eps}: find $\gamma^\varepsilon\in\mathcal{Q}^\varepsilon$
closest in $D(\cdot\|\mu^\star_{1,V})$ to the base coupling
$\gamma^\star$.
This enforces $\mu_1^\varepsilon$ and $\mu_V^\varepsilon$ exactly; the
$S_2$ marginal $\mu_2^{DR}=\kappa^\star\gamma^\varepsilon$ is endogenous
(Remark~\ref{rem:S2_endogenous}).
Operationally this amounts to running a two-marginal Sinkhorn projection
\emph{without re-imposing the martingale constraint}.

This is the key reason why the proposed risk generation method is
computationally efficient: the perturbed problem no longer requires
recalibration of the full martingale optimal transport model.
In practice the perturbed projection typically converges in only a few
Sinkhorn iterations because the solution is close to the base coupling.

\subsection{SPX--VIX Family Risk Generation: The Dimensional Reduction (DR) Approach}

We now describe the practical algorithm used to compute SPX--VIX risk
sensitivities under SSR while preserving the structure of the calibrated
joint coupling $\mu^\star(s_1,v,s_2)$ from Section~\ref{subsec:base},
which by construction satisfies the SPX and VIX market constraints,
the martingale condition, and the SPX--VIX consistency condition.
The key point is that the perturbation is not generated by directly
changing the SPX marginal inside the transport problem. Rather, one starts
from an exogenous SPX market shock (a spot or volatility bump), propagates
it through the SSR dynamics to obtain the corresponding change in VIX option
prices, and uses these to define a new VIX marginal target.
Rather than fully recalibrating the joint coupling, which would be
computationally expensive, we exploit the dimension reduction result of
Section~\ref{sec:dimension_reduction}: the perturbation is carried out at
the level of the lower-dimensional coupling on $(S_1,V)$ while leaving the
conditional kernel of $S_2$ given $(S_1,V)$ unchanged.

More precisely, write the base calibrated coupling in disintegrated form as
\[
\mu^\star(s_1,v,s_2)
=
\gamma^\star(s_1,v)\,\kappa^\star(s_2\mid s_1,v),
\]
where $\gamma^\star(s_1,v)$ is the marginal coupling of $(S_1,V)$ and
$\kappa^\star(s_2\mid s_1,v)$ is the conditional kernel of $S_2$ given
$(S_1,V)$.
By the dimension reduction theorem, the perturbed coupling is constructed by
updating only $\gamma^\star$ while holding $\kappa^\star$ fixed:
\[
\mu_\varepsilon(s_1,v,s_2)
=
\gamma^\varepsilon(s_1,v)\,\kappa^\star(s_2\mid s_1,v).
\]
The VIX marginal therefore adjusts through the perturbation of
$\gamma^\varepsilon$, while the conditional dependence structure of $S_2$
given $(S_1,V)$ remains inherited from the base calibration.

\medskip
\noindent\textbf{Algorithm 3: SSR-Enhanced Dimensional Reduction (DR) For POT Risk Generation}

\begin{enumerate}
  \item \textbf{Inputs.}
    \begin{itemize}
      \item Base joint coupling $\mu^\star(s_1, v, s_2)$ and relevant marginals from Algorithm 1
      \item Exogenous SPX perturbation (e.g., spot bump or volatility surface shift)
      \item SSR (Skew Stickiness Ratio) parameters for VIX volatility dynamics
      \item Observed SPX and VIX market data
    \end{itemize}

  \item \textbf{Outputs.}
    \begin{itemize}
      \item Updated perturbed coupling $\mu^\varepsilon(s_1, v, s_2)$
      \item Risk sensitivities under $\mu_\varepsilon$
    \end{itemize}

  \item \textbf{Base Calibration.}
    \begin{enumerate}      
      \item Disintegrate $\mu^\star$ as
      \[
         \mu^\star(s_1, v, s_2) = \gamma^\star(s_1, v)\,\kappa^\star(s_2 \mid s_1, v),
      \]
      where $\gamma^\star$ is the $(S_1, V)$ marginal and $\kappa^\star$ is the conditional kernel.
    \end{enumerate}

  \item \textbf{Generate exogenous SPX perturbation.}
    \begin{enumerate}
      \item Apply the prescribed SPX perturbation (e.g., spot or implied volatility shift) to obtain the new SPX marginal and updated SPX implied forward variance $F_V'$.
    \end{enumerate}

  \item \textbf{Propagate VIX smile using SSR.}
    \begin{enumerate}
      \item Use the SSR rule as in Theorem~\ref{thm:SSRUnified} to compute a synthetic perturbed VIX implied volatility surface:
      \[
        \sigma'_{\mathrm{VIX}}(K)
        =\sigma_V(K)
        + (\beta_V-1)\,\frac{K}{F_V}\,\partial_K\sigma_V(K)
          \cdot \delta F_V.
      \]
      \item Compute the corresponding perturbed VIX option prices from the shifted surface.
    \end{enumerate}

  \item \textbf{Construct perturbed VIX marginal.}
    \begin{enumerate}
      \item Infer the new VIX marginal distribution so that, under the VIX variable, the model reproduces the SSR-propagated VIX option prices.
    \end{enumerate}

  \item \textbf{Dimension-reduced entropic transport update.}
    \begin{enumerate}
      \item Holding $\kappa^\star(s_2\mid s_1, v)$ fixed, solve the
      two-dimensional entropic projection~\eqref{eq:Q_eps}: find
      $\gamma^\varepsilon\in\mathcal{Q}^\varepsilon$ closest in
      $D(\cdot\|\gamma^\star)$, where $\mathcal{Q}^\varepsilon$ enforces
      the perturbed \emph{SPX $T_1$ marginal} $\mu_1^\varepsilon$ (from the
      SPX shock) and the perturbed \emph{VIX marginal} $\mu_V^\varepsilon$
      (from the SSR propagation).
      The $S_2$ marginal is \emph{not} independently constrained; it is
      determined endogenously as
      $\mu_2^{DR}(s_2)=\sum_{s_1,v}\kappa^\star(s_2\mid s_1,v)\gamma^\varepsilon(s_1,v)$
      (see Remark~\ref{rem:S2_endogenous}).
      \item Reconstruct the full perturbed joint coupling as
      \[
        \mu_\varepsilon(s_1, v, s_2)
        = \gamma^\varepsilon(s_1, v)\,\kappa^\star(s_2\mid s_1, v).
      \]
    \end{enumerate}

  \item \textbf{Risk extraction.}
    \begin{enumerate}
      \item For any payoff function $G$, compute model prices under $\mu^\star$ and $\mu_\varepsilon$, and report the sensitivity as
      \[
        \text{Greek}
        \approx
        \frac{P(\mu_\varepsilon)-P(\mu^\star)}{\varepsilon},
      \]
      where $P(\mu)$ denotes the portfolio valuation under $\mu$.
    \end{enumerate}

\end{enumerate}
\medskip
This procedure avoids a full recalibration of the original SPX--VIX
martingale optimal transport problem. The perturbation is carried only
by the reduced coupling in $(S_1,V)$, while the conditional kernel of
$S_2$ given $(S_1,V)$ remains unchanged. In this way, the endogenous
adjustment of the VIX marginal is captured through the perturbed reduced
coupling, yielding a fast and structurally consistent risk-generation
algorithm.
\section{Experiments: SPX-VIX Risk Generation And Hedging Backtest}
\label{sec:experiments}
\subsection{SPX–VIX Basis In Market Data}

In theory the VIX future level should be consistent with the forward variance
implied by the SPX option surface through the well-known replication formula.  This is precisely the consistency condition in \eqref{eq:consistency_condition}.  However, empirical market data shows that this relation does not hold exactly.
In practice a persistent basis exists between the SPX implied forward variance
and the traded VIX futures.

Figure~\ref{fig:spx_vix_basis} shows the time series of the SPX–VIX basis for
the 1-month tenor over a two-year window; this basis is well documented in
practice and is typically attributed to market segmentation, liquidity effects,
and supply–demand imbalances in the VIX futures market.
Our calibration framework therefore allows a basis adjustment when linking SPX
forward variance to the VIX future level.

\paragraph{Formal treatment of the basis.}
Define the pointwise SPX--VIX basis at each node $(s_1,v)$ by
\begin{equation}
\label{eq:eta_basis}
\eta(s_1,v)
\;:=\;
v^2 \;-\; \mathbb{E}_{\mu^\star}\!\bigl[L(S_2/S_1)\mid S_1=s_1,\,V=v\bigr],
\end{equation}
so that $\eta$ measures the pointwise gap between the VIX-squared target and
the SPX-implied forward variance realized under the calibrated coupling.
The consistency relation~\eqref{eq:consistency_condition} corresponds to
$\eta\equiv 0$; empirically, as documented in Figure~\ref{fig:spx_vix_basis},
$\eta$ is persistently nonzero due to convexity adjustments, discrete
monitoring, and VIX settlement conventions.
We therefore replace~\eqref{eq:consistency_condition} by the relaxed
node-wise target
\begin{equation}
\label{eq:consistency_relaxed}
(\mathcal{A}_{\mathrm{fin}}\mu)^C_{ij}
\;=\;
\sum_{s_2}\mu(s_1^i,v^j,s_2)\,
\bigl(L(s_2/s_1^i)-(v^j)^2+\eta(s_1^i,v^j)\bigr)
\;=\;0,
\end{equation}
i.e.\ $\mathbb{E}[L(S_2/S_1)\mid S_1,V]=V^2-\eta(S_1,V)$ at each grid node,
with $\eta$ entering the Gibbs exponent additively through $\Delta_C(s_1,v)$
(Algorithm~1).
The basis $\eta$ is calibrated once from market data (specifically, from the
gap between the SPX-implied variance strip and the quoted VIX future at each
conditioning node $(s_1^i,v^j)$), and is treated as a fixed structural offset
that does not vary with the market perturbation.
It is enforced softly via the financial tolerance $\varepsilon_{\mathrm{fin}}$
in Algorithm~1: residuals of size $O(\varepsilon_{\mathrm{fin}})$ around the
$\eta$-shifted target are absorbed by the inner Newton stopping criterion.

Under the spot and volatility perturbations of
Section~\ref{sec:financial_perturbations}, $\eta$ is held fixed: bumps act
only on the marginal targets $(\mu_1^\varepsilon,\mu_V^\varepsilon,
\mu_2^\varepsilon)$, so $db^\varepsilon_{\mathrm{fin}}/d\varepsilon=0$ and
the financial slots of the perturbation vector $\dot b$~\eqref{eq:perturbation_vector}
remain zero.
This is appropriate as long as the basis is driven by structural
microstructure effects (liquidity, settlement methodology) that do not
co-move with small SPX surface perturbations.

\begin{figure}[h]
\centering
\includegraphics[width=0.6\textwidth]{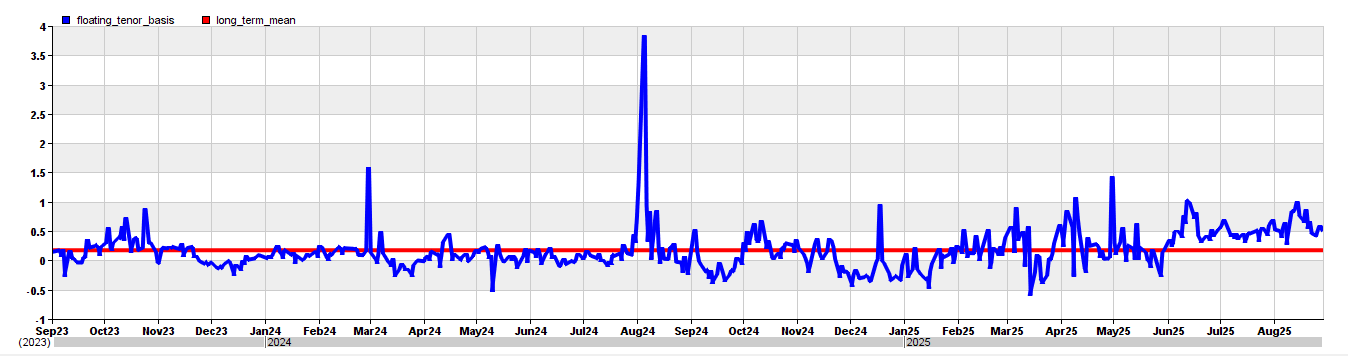}
\caption{SPX–VIX basis time series for the 1-month tenor over a two-year window.}
\label{fig:spx_vix_basis}
\end{figure}

\subsection{Base Calibration And Fit Quality}
\label{subsec:fit_quality}

Despite the SPX–VIX basis noted above, the calibrated optimal transport model satisfies the martingality condition and approximate consistency constraints, as we verify below.
We first assess the quality of the base calibration used throughout the experiments. The joint SPX--VIX coupling $\mu^\star$ is obtained via entropic projection with Sinkhorn scaling on the discrete grids. After calibration, we compute the sufficient statistics and the Fisher information matrix $H$ for subsequent risk analysis. Existence, uniqueness, and the Gibbs form of the optimizer ensure a consistent fit to the prescribed marginals and targets on the grids.

To visualize fit quality, we plot observed market smiles against model-implied values from the calibrated coupling for a representative maturity (e.g., March 18, 2026, two weeks to expiry), consistent with our option risk comparisons.  Additionally, we generate the martingality plot and consistency plot. It is worth pointing out, that the consistency condition which should hold in theory, does not in reality.  To this end, in order to make use of the SPX--VIX joint calibration, we must relax the consistency condition and incorporate the basis in both the base calibration and the perturbed calibration.


\begin{figure}[H]
\centering
\begin{subfigure}[t]{0.48\textwidth}
  \centering
  \includegraphics[height=0.45\textwidth]{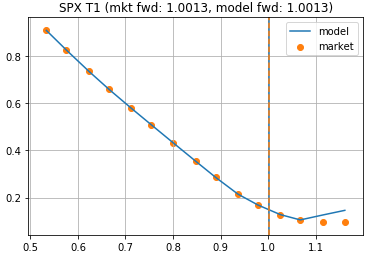}
  \caption{SPX $T_1$ smile fit}
  \label{fig:spx_t1}
\end{subfigure}
\hfill
\begin{subfigure}[t]{0.49\textwidth}
  \centering
  \includegraphics[height=0.45\textwidth]{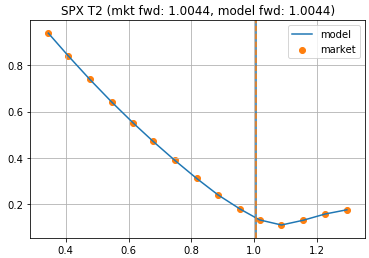}
  \caption{SPX $T_2$ smile fit}
  \label{fig:spx_t2}
\end{subfigure}
\caption{SPX smile fit quality (two expiries)}
\label{fig:spx_fit_pair}
\end{figure}

\begin{figure}[H]
\centering
\includegraphics[width=0.6\linewidth]{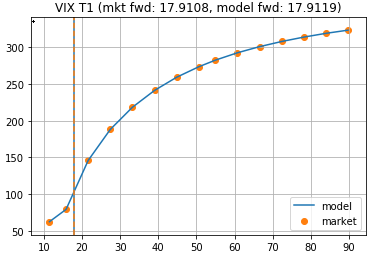}
\caption{VIX smile fit: observed vs model-implied}
\label{fig:vix_fit}
\end{figure}

\begin{figure}[H]
\centering
\begin{subfigure}[t]{0.48\textwidth}
  \centering
  \includegraphics[height=0.45\textwidth]{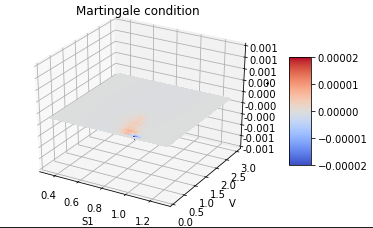}
  \caption{Martingality check}
  \label{fig:martingality}
\end{subfigure}
\hfill
\begin{subfigure}[t]{0.49\textwidth}
  \centering
  \includegraphics[height=0.45\textwidth]{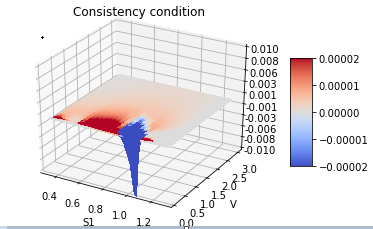}
  \caption{Consistency condition}
  \label{fig:consistency}
\end{subfigure}
\caption{SPX--VIX calibration theoretical conditions}
\label{fig:martingality_consistency}
\end{figure}

\FloatBarrier

\subsection{Risk Computation Benchmark: Recalibration vs Perturbation}
\label{subsec:risk_bench}

This section evaluates the accuracy and computational efficiency of the
perturbation framework developed in Sections~3–7 by comparing two approaches
for computing risk sensitivities under SPX market perturbations.

\begin{enumerate}

\item \textbf{Full recalibration (benchmark).}
After applying a perturbation to the SPX market, the entire
SPX–VIX martingale optimal transport calibration problem
is recomputed using Algorithm~1.
The resulting joint distribution
$\mu^{\varepsilon}_{\mathrm{recal}}$
serves as the benchmark distribution for computing option prices
and sensitivities.

\item \textbf{Perturbation (LR).}
Starting from the calibrated base coupling $\mu^\star$,
we compute the perturbed distribution using the
linear response system derived in Sections~\ref{subsec:risk_rep}--\ref{subsec:second_order}.
This method uses the Fisher information matrix of the calibrated
exponential family to approximate the perturbed coupling
$\mu^{\varepsilon}_{\mathrm{pert}}$
without solving the full calibration problem again.

\end{enumerate}
The goal is twofold: to verify that the perturbation-based sensitivities
closely match those from full recalibration, and to demonstrate the
the ${\sim}60\times$ speedup the LR method achieves over repeated
full calibration (Table~\ref{tab:perf}).

\paragraph{SPX perturbations.}

In all experiments the perturbation is applied on the SPX side,
either as a spot shift or as a parallel shift of the SPX implied
volatility surface. These perturbations are mapped to corresponding
changes in the forward variance, which acts as the key control
variable in the SPX–VIX coupling.

The perturbations are chosen to remain within a regime where the
linear-response approximation is expected to be accurate while still
representing realistic market shocks.

\paragraph{Implementation details.}

Several groups of parameters control the perturbation experiments:

\begin{itemize}

\item \textbf{Base OT object.}
The initial martingale optimal transport calibration provides the
reference coupling $\mu^\star$ and the Fisher information matrix
used in the perturbation calculations.

\item \textbf{Bumped SPX information.}
The perturbed SPX marginal distributions include the shifted spot
and the modified implied volatility surface at the relevant maturities.
Throughout the experiments we assume a sticky-strike behavior for the
SPX volatility surface under spot perturbations.

\item \textbf{Perturbation controls.}
Parameters defining the magnitude and structure of the volatility
perturbations, including lower and upper cutoffs for invariant
volatility regions.

\item \textbf{VIX volatility shape controls.}
Parameters governing the Skew Stickiness Ratio (SSR), skewness,
convexity, and the treatment of at-the-money and out-of-the-money
VIX option strikes.

\item \textbf{Basis and numerical controls.}
Optional parameters allowing forward basis adjustments,
regularization parameters, and marking conventions for volatility,
skew, SSR, convexity, and VIX marginal constraints.

\end{itemize}

In the following section we use these perturbation scenarios to compare
SPX risk sensitivities of VIX derivatives computed using the two
methods described above.

\subsection{VIX Option Risk And Cross-Greeks}

Using the experimental setup described in Section~\ref{subsec:risk_bench}, we now compare
SPX risk sensitivities of VIX derivatives computed using the two
methods:

\begin{itemize}
\item \textbf{Full recalibration}, where the SPX--VIX martingale optimal
transport model is recalibrated after each SPX perturbation.

\item \textbf{Perturbation (linear response)}, where the sensitivities
are obtained using the Fisher-information linear response system
derived in Sections~\ref{subsec:risk_rep}--\ref{subsec:second_order} without recomputing the full calibration.
\end{itemize}

For each SPX perturbation we compute the corresponding change in the
joint SPX--VIX distribution under both approaches and evaluate the
resulting price sensitivities of VIX derivatives.

\paragraph{VIX future cross-greeks.}

We begin by comparing the SPX cross-greeks of the VIX future contract.
The sensitivities are computed with respect to SPX spot and SPX implied
volatility perturbations. Table~\ref{tab:vix_future_risk} reports the SPX delta and SPX vega
of the VIX future obtained from the recalibration benchmark and from
the perturbation method.

\begin{center}
\begin{table}[!htbp]
\centering
\small
\setlength{\tabcolsep}{6pt}
\renewcommand{\arraystretch}{0.9}
\caption{VIX Future's SPX Greeks: LR-POT vs Recalibration}
\label{tab:vix_future_risk}
\begin{tabular}{l r r r r}
\hline
 & \textbf{LR-POT. SPX Delta} & \textbf{Recalib. SPX Delta} & \textbf{LR-POT. SPX Vega} & \textbf{Recalib. SPX Vega} \\
\hline
VIX Future & -8.88339 & -8.99380 & 1043.42832 & 1071.35419 \\
\hline
\end{tabular}
\end{table}
\end{center}

The results show that the perturbation-based sensitivities closely
match those obtained from the full recalibration procedure. The
differences remain small relative to the magnitude of the sensitivities,
confirming that the linear response (LR) system provides an accurate local
approximation of the recalibrated optimal transport model.

\paragraph{VIX option SPX delta.}

We next compare SPX delta sensitivities for a strip of VIX call options
spanning a wide range of strikes. The options correspond to the same
expiry used in the calibration experiment and cover both out-of-the-money
and near-the-money regions of the VIX smile.


\begin{center}
\begin{table}[!htbp]
\centering
\small
\setlength{\tabcolsep}{3.5pt}
\renewcommand{\arraystretch}{0.9}
\caption{VIX Options SPX Delta Comparison (LR-POT vs Recalibration). March 18, 2026, 2w to expiry.}
\label{tab:vix_spx_delta_compact}
\begin{tabular}{S[table-format=2.1] S[table-format=1.3] S[table-format=1.3]}
\toprule
{Strike} & {Pert. Delta} & {Recalib. Delta} \\
\midrule
16.9 & -0.786 & -0.809 \\
17.6 & -0.750 & -0.763 \\
18.2 & -0.714 & -0.719 \\
18.7 & -0.680 & -0.674 \\
19.2 & -0.641 & -0.629 \\
19.7 & -0.601 & -0.585 \\
20.3 & -0.559 & -0.540 \\
20.9 & -0.516 & -0.495 \\
21.6 & -0.469 & -0.450 \\
22.5 & -0.420 & -0.405 \\
23.5 & -0.369 & -0.360 \\
24.7 & -0.316 & -0.315 \\
26.3 & -0.266 & -0.270 \\
28.5 & -0.221 & -0.225 \\
31.4 & -0.178 & -0.179 \\
35.8 & -0.133 & -0.135 \\
44.0 & -0.089 & -0.090 \\
\bottomrule
\end{tabular}
\end{table}
\begin{figure}[H]
\centering
\includegraphics[width=\linewidth]{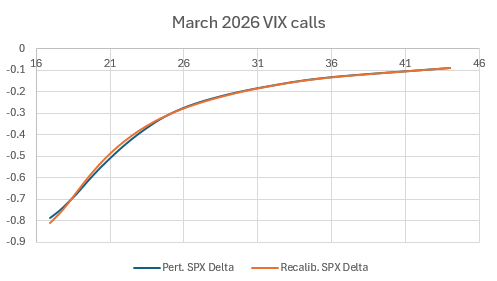} 
\caption{SPX Delta}\label{fig:SPX Delta risk}
\end{figure}
\end{center}

Figure~6 visualizes the same comparison across strikes.

The perturbation-based deltas closely track the sensitivities obtained
from the full recalibration. Small discrepancies appear primarily in
the wings of the VIX smile, where nonlinear effects become more
pronounced. However, even in these regions the overall shape and
magnitude of the sensitivities remain consistent with the recalibration
benchmark.

\paragraph{VIX option SPX vega.}

We perform the same comparison for SPX vega sensitivities of the VIX
options. Table~3 reports the SPX vega obtained from both approaches.

\begin{center}
\begin{table}[!htbp]
\centering
\small
\setlength{\tabcolsep}{3.5pt}
\renewcommand{\arraystretch}{0.9}
\caption{VIX Options SPX Vega Comparison (LR-POT vs Recalibration). March 18, 2026, 2w to expiry.}
\label{tab:vix_spx_vega_compact}
\begin{tabular}{S[table-format=2.1] S[table-format=2.3] S[table-format=2.3]}
\toprule
{Strike} & {Pert. Vega} & {Recalib. Vega} \\
\midrule
16.9 & 94.452 & 96.675 \\
17.6 & 90.136 & 91.291 \\
18.2 & 85.485 & 86.048 \\
18.7 & 81.221 & 80.708 \\
19.2 & 76.337 & 75.456 \\
19.7 & 71.155 & 70.103 \\
20.3 & 65.942 & 64.734 \\
20.9 & 60.749 & 59.364 \\
21.6 & 55.294 & 53.985 \\
22.5 & 50.338 & 48.601 \\
23.5 & 45.134 & 43.213 \\
24.7 & 39.681 & 37.821 \\
26.3 & 34.065 & 32.419 \\
28.5 & 28.130 & 27.014 \\
31.4 & 21.819 & 21.568 \\
35.8 & 15.672 & 16.206 \\
44.0 & 10.432 & 10.800 \\
\bottomrule
\end{tabular}
\end{table}
\begin{figure}[H]
\centering
\includegraphics[width=\linewidth]{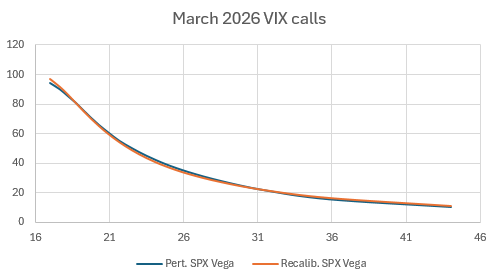} 
\caption{SPX Vega}\label{fig:SPX Vega risk}
\end{figure}
\end{center}
As with the delta comparison, the perturbation method reproduces the
recalibrated sensitivities with high accuracy. The agreement confirms
that the Fisher-information linear response captures the dominant
first-order effects of SPX volatility perturbations on VIX option
prices.

\paragraph{Performance comparison.}

While the recalibration method provides the benchmark sensitivities,
it requires solving the full SPX--VIX optimal transport calibration
problem after each perturbation. This involves repeated Sinkhorn
iterations together with enforcement of the martingale and
variance-consistency constraints, making the computation relatively
expensive.
\begin{table}[ht]
\centering
\caption{Risk Performance Comparison}\label{tab:perf}
\begin{tabular}{l l l l}
\toprule
Base calibration & Recalib based risk calculation & DR-POT method & LR-POT method \\
\midrule
348.26\,s & 370.13\,s & 18.45\,s & 5.73\,s \\
\bottomrule
\end{tabular}
\end{table}

Table \ref{tab:perf} compares the runtime required to compute sensitivities under
the three approaches. The results
show that the LR method and the DR method both achieve orders-of-magnitude speedups (${\sim}60\times$ and ${\sim}20\times$ respectively; see Table~\ref{tab:perf}) while maintaining accuracy comparable to the recalibration benchmark.

This efficiency gain is the key practical advantage of the perturbation
framework: risk sensitivities can be generated quickly without
re-running the full martingale optimal transport calibration.  

Lastly, we note that the risk numbers generated by the DR method
(Section~\ref{sec:dimension_reduction}) are extremely close to those
produced by the LR method.
This agreement is non-trivial: as discussed in
Remark~\ref{rem:h2_LR_only}, the two methods treat the $S_2$ marginal
differently: LR enforces the prescribed $\mu_2^\varepsilon$ directly,
while DR treats it as endogenous under the fixed conditional kernel
$\kappa^\star$.
The close agreement provides empirical validation that
Assumption~\ref{ass:conditional_invariance} is a good approximation in
the market environments tested, i.e.\ that $\mu_2^{DR}\approx\mu_2^\varepsilon$
to within risk-relevant accuracy.
\subsection{Backtest: Hedging Efficiency Of Optimal Transport Method}

We now evaluate whether the SPX sensitivities produced by our model independent risk generation methods lead to more effective hedging
than those generated by a benchmark industry standard model, in this case a stochastic local volatility model.  In the backtest we perform below, we choose the dimensional reduction (DR) method, Algorithm~3.   

The experiment consists of two parts: a first hedging backtest in which portfolios of
VIX options are hedged using VIX futures; and a second hedging backtest in which the same option portfolio is hedged using SPX futures and SPX vanillas. The sizing of the VIX futures
is determined by matching SPX Vega computed under
either the optimal transport method or the stochastic local volatility (SLV) benchmark.  The sizing of the SPX futures and SPX vanillas is similarly determined by matching SPX delta and SPX vega between the hedging instruments and the VIX option portfolio for each of the methods.  Because the two methodologies trade comparable notionals, we omit transaction costs.

\paragraph{Backtest period}

The backtest runs daily from January 2024 to February 2026.
VIX smile dynamics follow the Skew Stickiness Ratio (SSR) rule

\[
\Delta\sigma_V(K)
= (\beta_V-1)\,\frac{K}{F_V}\,\partial_K\sigma_V(K)\,\delta F_V,
\]

with $\beta_V = 1.2$.

\paragraph{Synthetic portfolio generation}

To test the robustness of the hedging performance we generate
$50$ randomized VIX option portfolios.

For each trading day $t$ the portfolios are constructed as follows:

\begin{enumerate}

\item All listed VIX expiries available on day $t$ are included.

\item For each expiry we construct a strike grid using call option deltas

\[
\Delta \in \{10,15,\ldots,90\},
\]

resulting in $17$ strikes per maturity.

\item Options with $\Delta < 50$ are taken as puts while options
with $\Delta \ge 50$ are taken as calls.

\item Each option $i$ is assigned a random portfolio weight

\[
w_{i,t} \sim \text{Uniform}(-1,1).
\]

\end{enumerate}

The resulting portfolio value is

\[
P_t = \sum_i w_{i,t} V_{i,t}.
\]

This procedure produces diversified portfolios spanning a wide range
of smile exposures.

\paragraph{Hedging methodology}

The portfolios are hedged using VIX futures whose expiries match those
of the VIX options. The hedge sizes are determined by matching SPX
Vega per expiry.

For a given model $M \in \{\text{SLV},\text{POT}\}$ we compute

\[
G_t^{M} = \frac{\partial P_t}{\partial \sigma_{SPX}},
\]

the SPX sensitivity of the option portfolio, and

\[
g_{j,t}^{M} = \frac{\partial F_{j,t}}{\partial \sigma_{SPX}},
\]

the SPX sensitivity of each VIX future $F_{j,t}$.

The hedge sizes $\alpha_{j,t}^{M}$ are chosen so that

\[
G_t^{M}
+
\sum_j \alpha_{j,t}^{M} g_{j,t}^{M}
=
0.
\]

\paragraph{Hedged P\&L}

The daily hedged P\&L for model $M$ is

\[
P\&L_t^{M}
=
\Delta P_t
+
\sum_j \alpha_{j,t}^{M} \Delta F_{j,t}.
\]

The effectiveness of the hedge is evaluated using the standard
deviation of the hedged P\&L.

\paragraph{Cross-sectional comparison across portfolios}

For each of the $50$ randomly generated portfolios we compute the
standard deviation of the hedged P\&L under both hedging strategies.

Figure~\ref{fig:pnl_std_comparison} plots the difference

\[
\sigma_{POT} - \sigma_{SLV}
\]

for each portfolio, where $\sigma_{POT}$ and $\sigma_{SLV}$ denote the
standard deviation of hedged P\&L under the POT and SLV hedges,
respectively.

\begin{figure}[htbp]
\centering
\includegraphics[width=0.85\textwidth]{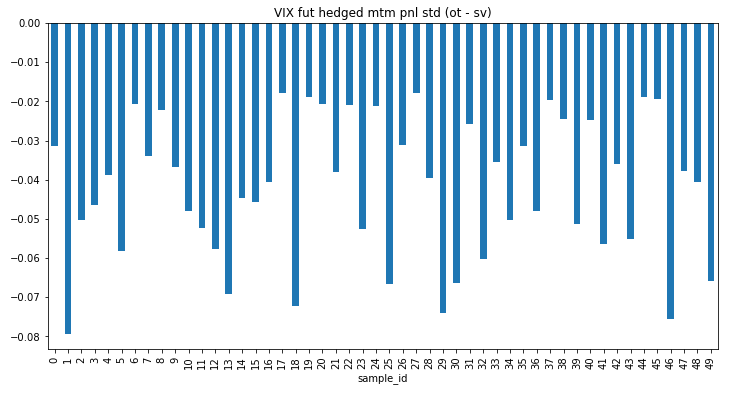}
\caption{Difference in hedged P\&L standard deviation between the
POT hedge and the SLV hedge across $50$ randomized VIX option portfolios using VIX futures.
Each bar corresponds to one portfolio.
Negative values indicate that the POT hedge achieves lower hedging
variance than the SLV hedge.}
\label{fig:pnl_std_comparison}
\end{figure}

\paragraph{Time-series hedge stability}

To illustrate the time-series behavior of the hedging error,
we select one representative portfolio from the set of
$50$ portfolios and compute the rolling $20$-day standard
deviation of hedged P\&L.

\[
\text{RollStdev}_t^{M}(20)
=
\sqrt{
\frac{1}{19}
\sum_{u=t-19}^{t}
\left(
P\&L_u^{M}
-
\overline{P\&L}_{t,20}^{M}
\right)^2
}.
\]

\begin{figure}[htbp]
\centering
\includegraphics[width=0.85\textwidth]{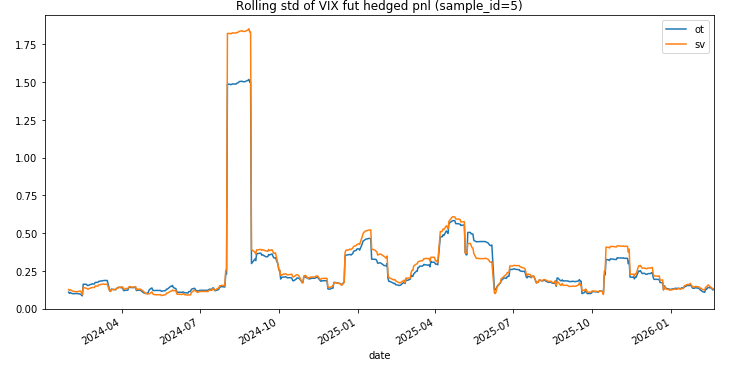}
\caption{20-day rolling standard deviation of VIX future hedged P\&L for a
representative portfolio. The POT hedge produces lower hedging
variance during volatile periods while remaining comparable to
the SLV hedge during calm market regimes.}
\label{fig:rolling_stdev}
\end{figure}

\begin{figure}[htbp]
\centering
\includegraphics[width=0.85\textwidth]{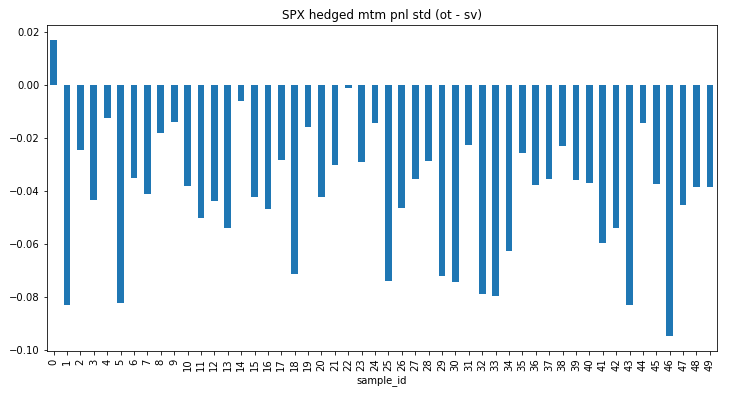}
\caption{Difference in hedged P\&L standard deviation between the
POT hedge and the SLV hedge across $50$ randomized VIX option portfolios using SPX futures and SPX vanillas.
Each bar corresponds to one portfolio.
Negative values indicate that the POT hedge achieves lower hedging
variance than the SLV hedge.}
\label{fig:pnl_std_comparison_spx}
\end{figure}

\begin{figure}[htbp]
\centering
\includegraphics[width=0.85\textwidth]{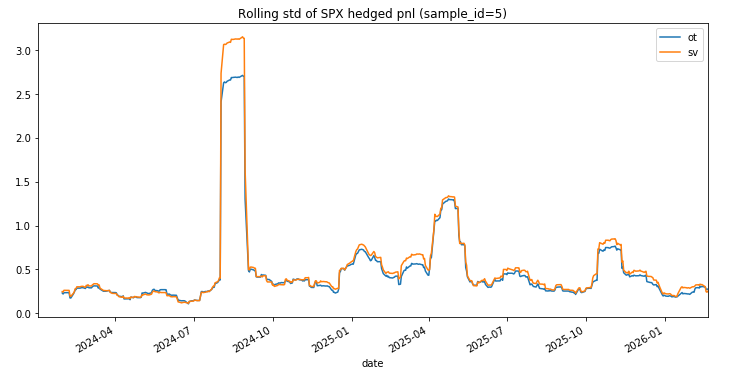}
\caption{20-day rolling standard deviation of SPX future and SPX vanillas hedged P\&L for a
representative portfolio. The POT hedge produces lower hedging
variance during volatile periods while remaining comparable to
the SLV hedge during calm market regimes.}
\label{fig:rolling_stdev_spx}
\end{figure}

\paragraph{Summary}

The cross-sectional experiment in Figure~\ref{fig:pnl_std_comparison}
shows that the POT VIX future hedge reduces PnL variance for all tested
portfolios. In Figure~\ref{fig:pnl_std_comparison_spx}, all but one of the 50 VIX option portfolios have smaller PnL variance for the POT method when hedged to SPX futures and SPX vanillas.  The time-series analysis in both Figure~\ref{fig:rolling_stdev} and Figure~\ref{fig:rolling_stdev_spx}
further demonstrates that the improvement is most pronounced
during periods of elevated market volatility.

Together these results provide empirical evidence that the
perturbed optimal transport framework produces more accurate
SPX--VIX risk sensitivities than a benchmark stochastic local volatility model.

\section{Conclusion}

This paper develops a model-independent framework for SPX--VIX risk
generation based on entropic martingale optimal transport.
Starting from the joint calibration methodology of Guyon,
we show that the calibrated Gibbs coupling admits a natural perturbation
theory: admissible market shocks propagate through the Fisher information
matrix of the calibrated exponential family, yielding explicit
linear-response formulas for risk sensitivities.

To incorporate realistic VIX smile dynamics, we introduce a linearized
Skew Stickiness Ratio formulation and use it to determine the VIX marginal
target change $h_V$ in the perturbation vector $\dot b$.
This approach allows SPX perturbations to propagate consistently to
VIX implied volatility while maintaining the convex structure of the
entropic projection problem.

We further show that the perturbed transport problem admits a structural
dimensional reduction under a conditional coupling invariance assumption.
In this regime the three-dimensional transport problem collapses to a
two-dimensional projection on $(S_1,V)$ while preserving the conditional
dependence structure inherited from the base calibration.
This explains why risk sensitivities can be generated efficiently without
re-solving the full martingale optimal transport calibration.

Two sets of numerical experiments support the theoretical framework.
First, we compare perturbation-based risk sensitivities with those
obtained from full recalibration of the SPX--VIX transport model.
Across VIX futures and VIX option cross-greeks, the perturbation
method produces sensitivities that are very close to the recalibration
benchmark while achieving significant computational speedups.
Second, we conduct hedging backtests on randomized VIX option portfolios.
Using SPX sensitivities generated by the dimension-reduced transport
method, the resulting hedges consistently outperform those based on
a stochastic local volatility (SLV) benchmark in terms of hedged P\&L variance.

Overall, the results show that entropic martingale optimal transport
provides more than a calibration tool.
Combined with perturbation theory and dimensional reduction,
it yields a practical framework for SPX--VIX risk generation that is
financially consistent, computationally efficient, and effective in
hedging applications.

\section*{Disclaimer}
This paper was prepared for informational purposes in part by the Quantitative Trading \& Research Group of JPMorganChase \& Co. This paper is not a product of the Research Department of JPMorganChase \& Co. or its affiliates. Neither JPMorganChase \& Co. nor any of its affiliates makes any explicit or implied representation or warranty and none of them accept any liability in connection with this paper, including, without limitation, with respect to the completeness, accuracy, or reliability of the information contained herein and the potential legal, compliance, tax, or accounting effects thereof. This document is not intended as investment research or investment advice, or as a recommendation, offer, or solicitation for the purchase or sale of any security, financial instrument, financial product or service, or to be used in any way for evaluating the merits of participating in any transaction.

\clearpage
\bibliographystyle{chicago}
\bibliography{references}

\end{document}